\documentclass[final,leqno]{siamltex}
\usepackage{amssymb,amsmath,enumerate}
\usepackage{graphicx}
\usepackage{multirow}
\usepackage{amsfonts}
\usepackage{enumerate}
\usepackage{color}
\usepackage[normalem]{ulem}
\usepackage{ifthen}
\usepackage{rotating}

\newtheorem{remark}{Remark}
\newtheorem{example}{Example}

\newcommand{\R}{\mathbb{R}}
\newcommand{\N}{\mathbb{N}}
\newcommand{\C}{\mathbb{C}}
\newcommand{\Q}{\mathbb{Q}}
\newcommand{\Ab}{\mathbf{A}}
\newcommand{\Pb}{\mathbf{P}}
\newcommand{\Yb}{\mathbf{Y}}
\newcommand{\Wb}{\mathbf{W}}

\newcommand{\Psib}{\mathbf{\Psi}}
\newcommand{\Thetab}{\mathbf{\Theta}}
\newcommand{\Fb}{\mathbf{F}}

\newcommand{\Zb}{\mathbf{Z}}
\newcommand{\Mb}{\mathbf{M}}

\def\Prob{\mathbb{P}}
\def\Exp{\mathbb{E}} 
\newcommand{\one}{\mathbf{1}}
\newcommand{\dd}{\mathrm{d}}
\newcommand{\setOmega}{\mathcal{H}}

\newif\ifanswervar
\answervarfalse
\ifanswervar
\newcommand{\ADDED}[1]{\textcolor{blue}{#1}}
\newcommand{\DELETED}[1]{\textcolor{blue}{\sout{#1}}}
\newcommand{\REPLACED}[2]{\textcolor{red}{\sout{#1}}\textcolor{blue}{\uline{#2}}}
\newcommand{\ADDEDONE}[1]{\textcolor{green}{#1}}
\newcommand{\DELETEDONE}[1]{\textcolor{green}{\sout{#1}}}

\newcommand{\DELETEDTWO}[1]{\textcolor{yellow}{\sout{#1}}}

\newcommand{\ADDEDTHREE}[1]{\textcolor{magenta}{#1}}
\newcommand{\DELETEDTHREE}[1]{\textcolor{magenta}{\sout{#1}}}
\newcommand{\REPLACEDTHREE}[2]{\textcolor{red}{\sout{#1}}\textcolor{magenta}{\uline{#2}}}
\else
\newcommand{\ADDED}[1]{#1}
\newcommand{\DELETED}[1]{}
\newcommand{\REPLACED}[2]{#2}
\newcommand{\ADDEDONE}[1]{#1}
\newcommand{\DELETEDONE}[1]{}

\newcommand{\DELETEDTWO}[1]{}

\newcommand{\ADDEDTHREE}[1]{#1}
\newcommand{\DELETEDTHREE}[1]{}
\newcommand{\REPLACEDTHREE}[2]{#2}
\fi
\newif\ifanswerTwovar
\answerTwovarfalse
\ifanswerTwovar
\newcommand{\ADDEDTwo}[1]{\textcolor{blue}{#1}}
\newcommand{\DELETEDTwo}[1]{\textcolor{blue}{\sout{#1}}}
\newcommand{\REPLACEDTwo}[2]{\textcolor{red}{\sout{#1}}\textcolor{blue}{\uline{#2}}}
\else
\newcommand{\ADDEDTwo}[1]{#1}
\newcommand{\DELETEDTwo}[1]{}
\newcommand{\REPLACEDTwo}[2]{#2}
\fi

\makeatletter
\makeatother

\title{Variable density sampling with continuous trajectories. \ADDEDTwo{Application to MRI.}}


\author{Nicolas Chauffert\thanks{Inria Saclay, Parietal team. CEA/NeusoSpin, 91191 Gif-sur-Yvette, France ({\tt nicolas.chauffert@cea.fr}).}
        \and Philippe Ciuciu\thanks{Inria Saclay, Parietal team. CEA/NeusoSpin, 91191 Gif-sur-Yvette, France ({\tt philippe.ciuciu@cea.fr}).}
	\and Jonas Kahn\thanks{Laboratoire Painlev\'e, UMR8524 Universit\'e de Lille 1, CNRS. Cit\'e Scientifique B\^at. M2, 59655 Villeneuve d'Asq Cedex, France ({\tt jonas.kahn@math.univ-lille1.fr}).}
	\and Pierre Weiss\thanks{ITAV, USR 3505. PRIMO Team, Universit\'e de Toulouse, Toulouse, France.({\tt pierre.armand.weiss@gmail.com}).}
}

\begin{document}
\graphicspath{{figures/}}

\maketitle

\begin{abstract}
Reducing acquisition time is a crucial challenge for many imaging techniques. Compressed Sensing (CS) theory offers an appealing framework to address this issue since it provides theoretical guarantees on the reconstruction of sparse signals by projection on a low dimensional linear subspace. In this paper, we focus on a setting where the imaging device allows to sense a fixed set of measurements. We first discuss the choice of an optimal sampling subspace allowing perfect reconstruction of sparse signals. Its design relies on the random drawing of independent measurements. We discuss how to select the drawing distribution and show that a mixed strategy involving partial deterministic sampling and independent drawings can help breaking the so-called “coherence barrier”. Unfortunately, independent random sampling is irrelevant for many acquisition devices owing to acquisition constraints. To overcome this limitation, the notion of Variable Density Samplers (VDS) is introduced and defined as a stochastic process with a prescribed limit empirical measure. It encompasses samplers based on independent measurements or continuous curves. The latter are crucial to extend CS results to actual applications. \REPLACED{Our main contribution lies in two original continuous VDS. The first one relies on random walks over the acquisition space whereas the second one is heuristically driven and rests on the approximate solution of a Traveling Salesman Problem. Theoretical analysis and retrospective CS simulations in magnetic resonance imaging highlight that the TSP-based solution provides improved reconstructed images in terms of signal-to-noise ratio compared to standard sampling schemes (spiral, radial, 3D iid...).}{We propose two original approaches to design continuous VDS, one based on random walks over the acquisition space, and one based on Traveling Salesman Problem. Following theoretical considerations and retrospective CS simulations in magnetic resonance imaging, we intend to highlight the key properties of a VDS to ensure accurate sparse reconstructions, namely its limit empirical measure and its mixing time\footnote{\ADDEDTwo{Part of this work is based on the conference proceedings:~\cite{Chauffert13,Chauffert13b,Chauffert13c}.} \hrule}}.  
\end{abstract}

\begin{keywords} 
Variable density sampling, compressed sensing, CS-MRI, stochastic processes, empirical measure, TSP, Markov Chains, $l^1$ reconstruction.
\end{keywords}

\begin{AMS}
94A20, 60G20, 15A52, 94A08
\end{AMS}

\pagestyle{myheadings}
\thispagestyle{plain}
\markboth{N. CHAUFFERT, P. CIUCIU, J. KAHN AND P. WEISS}{VARIABLE DENSITY SAMPLING WITH CONTINUOUS SAMPLING TRAJECTORIES}

\section{Introduction}

Variable density sampling \DELETEDTHREE{(VDS) }is a technique that is extensively used in various sensing devices such as magnetic resonance imaging~(MRI), in order to shorten scanning time. 
It consists in measuring only a small number of random projections of a signal/image on elements of a basis drawn according to a given density. 
For instance, in MRI where measurements consist of Fourier~(or more generally $k$-space) coefficients, it is common to sample the Fourier plane center more densely than the high frequencies. 
The image is then reconstructed from this incomplete information by dedicated signal processing methods. 
To the best of our knowledge, \emph{variable density sampling} has been proposed first in the MRI context by~\cite{spielman1995magnetic} where spiral trajectories were pushed forward. Hereafter, it has been used in this application~(see e.g.~\cite{tsai2000reduced,kim2003simple,park2005artifact} to quote a few), but also in other applications such as holography \cite{rivenson2010compressive,marim2010compressed}. This technique can hardly be avoided in specific imaging techniques such as radio interferometry or tomographic modalities~(e.g., X-ray) where sensing is made along fixed sets of measurements~\cite{Wiaux09,Sidky06}. 

In the early days of its development, \REPLACEDTHREE{VDS}{variable density sampling} was merely an efficient heuristic to shorten acquisition time. It has recently found a partial justification in the Compressed Sensing~(CS) literature. Even though this theory is not yet mature enough to fully explain the practical success of \REPLACEDTHREE{VDS}{variable density sampling}, CS provides good hints on how to choose the measurements~(i.e., the density), how the signal/image should be reconstructed and why it works.
Let us now recall a typical result emanating from the CS literature for orthogonal systems. 
A vector $x \in \C^n$ is said $s$-sparse if it contains at most $s$ non-zero entries. 
Denote by $a_i,\ i\in \{1,\hdots, n\}$ the sensing vectors and by $y_i=\langle a_i,x \rangle$ the possible measurements. 
Typical CS results state that if the signal~(or image) $x$ is $s$-sparse and if $\Ab=\begin{pmatrix} a^*_1 \\ \vdots \\ a^*_n \end{pmatrix}$ satisfies an incoherence property (defined in the sequel)
, then $m=O(s\log(n)^\alpha)$ measurements chosen randomly among the elements of $y=\Ab x$ are enough to ensure perfect reconstruction of $x$. The constant $\alpha>0$ depends on additional properties on $x$ and $\Ab$.
The set of actual measurements is denoted $\Omega\subseteq \{1,\hdots, n\}$ and $\Ab_\Omega$ is the matrix formed by selecting a subset of rows of $\Ab$ in $\Omega$. 
The reconstruction of $x$ knowing $y_\Omega=\Ab_\Omega x$ is guaranteed if it results from solving the following $\ell_1$ minimization problem:
\begin{align}
\label{eq:minl1}
\min_{z \in \C^n } \| z \|_1 &  \qquad \text{subject to} \qquad \Ab_\Omega z = y_\Omega.
\end{align}

Until recent works~\cite{Rauhut10,Juditsky11b,Candes11}, no general theory for selecting the rows was available. 
In the latter, the authors have proposed to construct $\Ab_\Omega$ by drawing $m$ rows of $\Ab$ at random according to a discrete probability distribution or density $p=(p_1, \hdots , p_n)$. The choice of an optimal distribution $p$ is an active field of research~(see e.g. \cite{Chauffert13,Krahmer12,Adcock13}) that remains open in many regards. 

Drawing independent rows of $\Ab$ is interesting from a theoretical perspective, however it has little practical relevance since standard acquisition devices come with acquisition constraints. 
For instance, in MRI, 
the coefficients are acquired along \textit{\ADDEDTHREE{piece-wise} continuous} curves on the $k$-space. The first paper performing \REPLACEDTHREE{VDS}{variable density sampling} in MRI~\cite{spielman1995magnetic} has fulfilled this constraint by considering spiral sampling trajectories. 
The standard reference about CS-MRI~\cite{Lustig07} has proposed to sample the MRI signal along parallel lines in the 3D $k$-space. 
Though spirals and lines can be implemented easily on a scanner, it is likely that more general trajectories could provide better reconstruction results, or save more scanning time.

The main objective of this paper is to propose new strategies to sample a signal along more general continuous curves. 
Although continuity is often not sufficient for practical implementation on actual scanner, we believe that it is a first important step towards more physically plausible compressed sampling paradigms.
As far as we know, this research avenue is relatively new. The problem was first discussed in \cite{Willett} and some heuristics were proposed. The recent contributions~\cite{polak2012performance,Bigot13} have provided theoretical guarantees when sampling is performed along fixed sets of measurements~(e.g. straight lines in the Fourier plane), but have not addressed generic continuous sampling curves yet.

The contributions of this paper are threefold. 
First, we bring a well mathematically grounded definition of \emph{variable density samplers} and provide various examples. Second, we discuss how the sampling density should be chosen in practice. 
This discussion mostly relies on variations around the theorems provided in~\cite{Rauhut10,Candes11}. In particular, we justify the deterministic sampling of a set of highly coherent vectors to overcome the so-called ``coherence barrier''. In the MRI case, this amounts to deterministically sampling the $k$-space center.
Our third and maybe most impacting contribution is to provide practical examples of variable density samplers along continuous curves and to derive some of their theoretical properties. These samplers are defined as parametrized random curves that asymptotically fit a target distribution (e.g. the one shown in Fig.~\ref{fig:Intro}~(a)). More specifically, 
we first propose a \emph{local} sampler based on random walks over the acquisition space~(see Fig.~\ref{fig:Intro}~(b)). Second, we introduce a \emph{global} sampler based on the solution of a Travelling Salesman Problem amongst randomly drawn ``cities''~(see Fig.~\ref{fig:Intro}~(c)). In both cases, we investigate the resulting density.
To finish with, we illustrate the proposed sampling schemes on 2D and 3D MRI simulations. The reconstruction results provided by the proposed techniques show that the PSNR can be substantially improved compared to existing strategies proposed e.g. in~\cite{Lustig07}.
\ADDED{Our theoretical results and numerical experiments on retrospective CS show that two key features of variable density samplers are the \textbf{limit of their empirical measure} and their \textbf{mixing properties}.}

\begin{figure}[!h]
\begin{center}
\begin{tabular}{ccc}
(a)&\hspace{-.03\linewidth}(b)&(c) \\
\includegraphics[width=.35\linewidth]{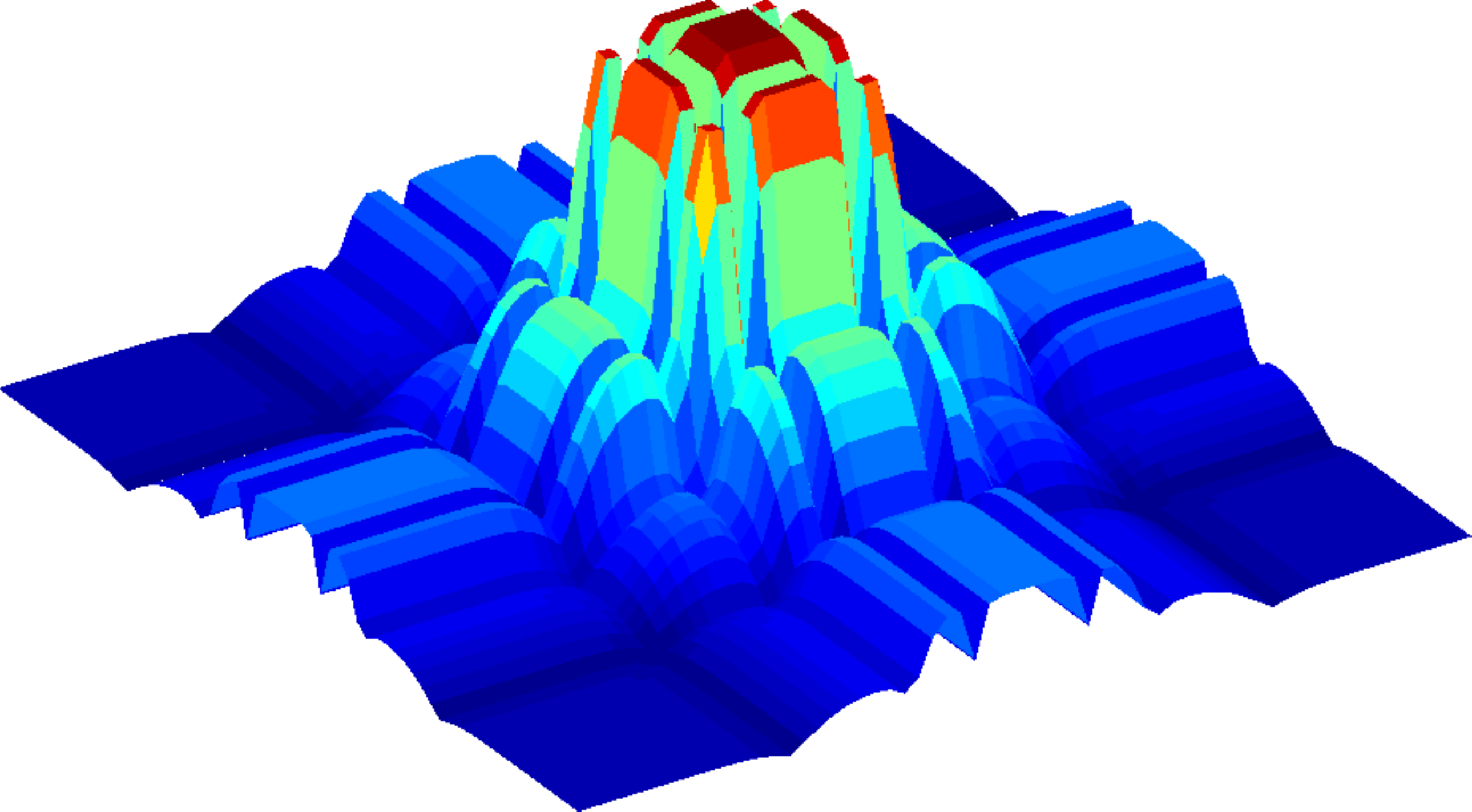}&\hspace{-.03\linewidth}
\includegraphics[width=.28\linewidth]{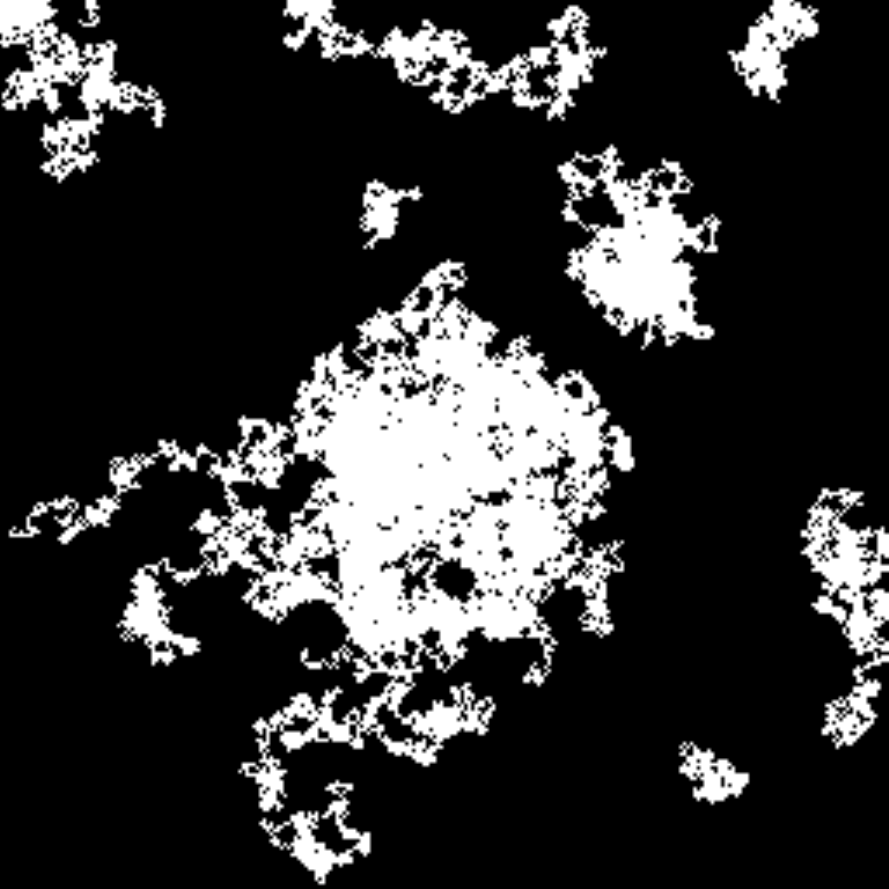}&
\includegraphics[width=.28\linewidth]{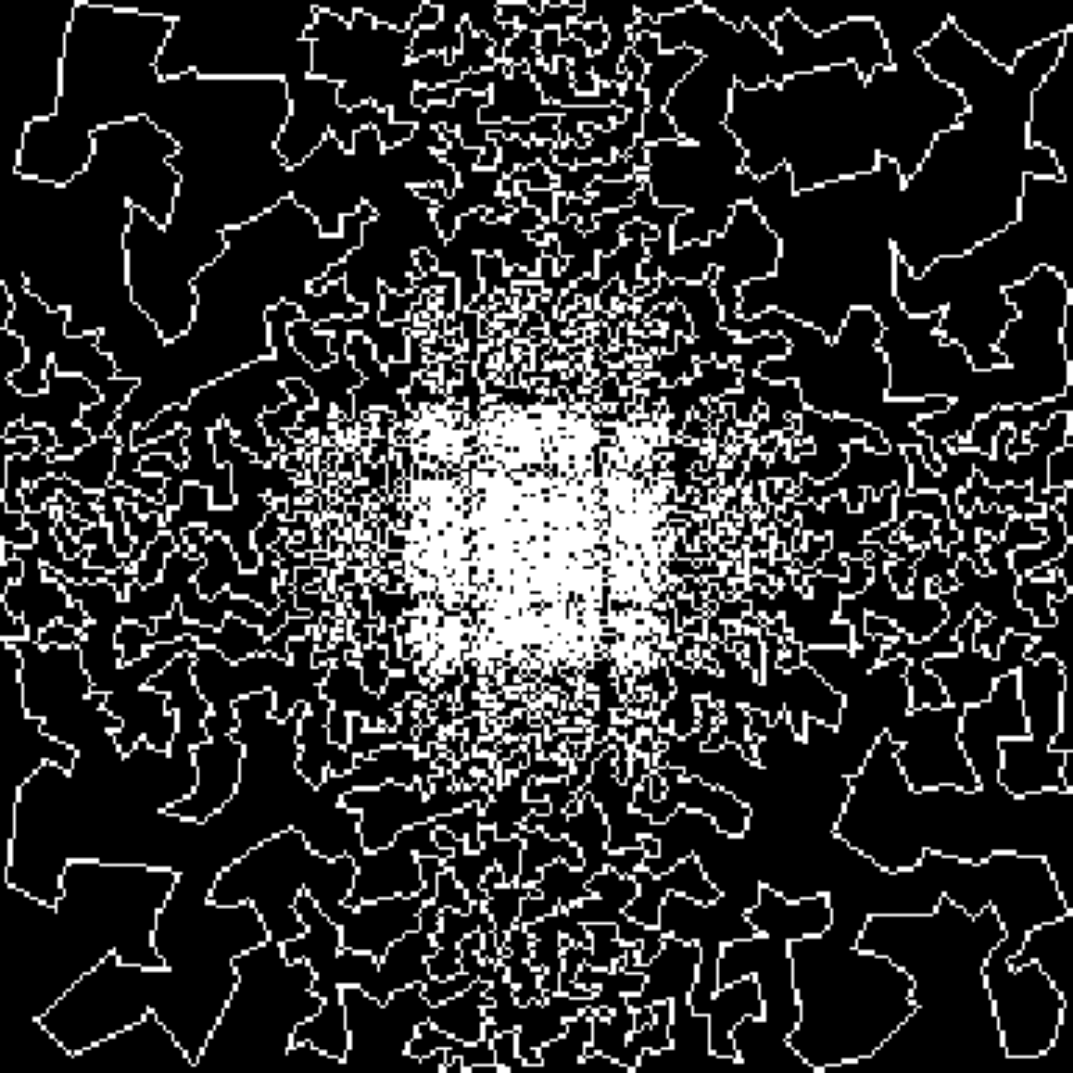}\\
\end{tabular}
\caption{\label{fig:Intro} (a): Target distribution $\pi$. Continuous random trajectories reaching distribution $\pi$ based on Markov chains (b) and on a TSP solution (c).}
\end{center}
\end{figure}

The rest of this paper is organized as follows. First, we introduce a precise definition of \REPLACEDTHREE{VDS}{a variable density sampler (VDS)} and recall CS results in the special case of independent drawings. Then, we give a closed form expression for the optimal distribution depending on the sensing matrix $\Ab$, and justify that a partial deterministic sampling may provide better reconstruction guarantees. Hereafter, in Sections~\ref{sec:cont} and~\ref{sec:TSP}, we introduce two strategies to design continuous trajectories over the acquisition space. We show that the corresponding sampling distributions converge to a target distribution when the curve length tends to infinity. 
Finally, we demonstrate on simulation results that our TSP-based approach is promising in the MRI context~(Section~\ref{sec:results}) since it outperforms its competing alternatives either in terms of PSNR at fixed sampling rate, or in terms of acceleration factor at fixed PSNR.


\section*{Notation}
\ADDED{
The main definitions used throughout the paper are defined in Tab.~\ref{Tab:not}.
}
\begin{table}[h]
 \caption{General notation used in the paper. \label{Tab:not}}
\centering
\begin{tabular}{ccll}
&Notation&Definition& Domain \\
\cline{2-4}
 \multirow{11}{*}{\rotatebox{90}{\scriptsize Compressed Sensing}}
 &\scriptsize $n$ & \scriptsize Acquisition and signal space dimensions & \scriptsize$\N$\\
 &\scriptsize $m$ & \scriptsize Number of measurements & \scriptsize$\N$\\
 &\scriptsize $R=n/m$ & \scriptsize Sampling ratio &\scriptsize $\Q$\\
 &\scriptsize $\Ab$ &\scriptsize Full orthogonal acquisition matrix & \scriptsize$\C^{n \times n}$\\
 &\scriptsize $\Omega$ &\scriptsize Set of measurements & \scriptsize$\{1,\hdots,n\}^{m}$\\
 &\scriptsize $\Ab_\Omega$ &\scriptsize Matrix formed with the rows of $\Ab$ corresponding to indexes belonging to $\Omega$ & \scriptsize$\C^{m \times n}$\\
 &\scriptsize $x$ &\scriptsize Sparse signal & \scriptsize$ \C^{n}$\\
 &\scriptsize $s$ & \scriptsize Number of non zero coefficients of $x$ & \scriptsize$\N$\\
 &\scriptsize $\Delta_n$ & \scriptsize $\Bigl\{p=\begin{pmatrix} p_1 \\ \vdots \\ p_n \end{pmatrix}, 0\leqslant p_i \leqslant 1, \sum_{i=1}^n p_i=1\Bigr\}$ &\scriptsize $\R^n$\\
 &\scriptsize $\|~\|_1$ & \scriptsize $\ell_1$ norm defined for $z \in \C^n$ by $\|z\|_1=\sum_{i=1}^n|z_i|$ &  \\
 &\scriptsize $\|~\|_\infty$ & \scriptsize $\ell_\infty$ norm defined for $z \in \C^n$ by $\|z\|_\infty=\max_{1\leqslant i \leqslant n} |z_i|$ &  \\
 \cline{2-4}
 \multirow{4}{*}{\rotatebox{90}{\scriptsize MRI application}}
 &\scriptsize $k=\begin{pmatrix} k_x \\ k_y \end{pmatrix}$ or $\begin{pmatrix} k_x \\ k_y \\k_z \end{pmatrix}$ &\scriptsize Fourier frequencies & \scriptsize$\R^{2}$ or $\R^3$\\
 &\scriptsize $\Fb^*_n$ &\scriptsize $d$-dimensional discrete Fourier transform on an $n$ pixels image & \scriptsize$\C^{n \times n}$\\
 &\scriptsize $\Psib_n$ &\scriptsize $d$-dimensional inverse discrete Wavelet transform on an image of $n$ pixels & \scriptsize$\C^{n \times n}$\\
 & & \scriptsize $\Fb^*_n$ and $\Psib_n$ are denoted $\Fb^*$ and $\Psib$ if no ambiguity &  \\
 \cline{2-4}
 \multirow{13}{*}{\rotatebox{90}{\scriptsize VDS}}
 &\scriptsize $\Xi$ &\scriptsize A measurable space which is typically $\{1, \hdots, n\}$ or $[0,1]^d$ & \\
 &\scriptsize $\setOmega$ &\scriptsize The unit cube $[0,1]^d$ & \\
 &\scriptsize $p$ &\scriptsize A probability measure defined on $\Xi$ & \\
 &\scriptsize $p(f)$ &\scriptsize$=\displaystyle\int_{x\in \Xi} f(x)\ \dd p(x)$, for $f$ continuous and bounded & \scriptsize$\R$\\
 &\scriptsize $\lambda_{[0,1]}$ & \scriptsize The Lebesgue measure on the interval $[0,1]$ & \\
 \cline{3-4}
 &\scriptsize $X=(X_n)_{n\in \mathbb{N}^*}$ &\scriptsize A time-homogeneous Markov chain on the state space $\{1, \hdots,n \}$  & \scriptsize$\{1, \hdots n\}^{\N^*}$\\
 &\scriptsize $\Pb$ & \scriptsize $:=(P_{ij})_{1\leqslant i,j \leqslant n}$ the transition matrix: $P_{ij}:=\Prob(X_k=j|X_{k-1}=i), \forall k>1$  & \scriptsize$\R^{n \times n}$\\ 
 &\scriptsize $\lambda_i(\Pb)$ & \scriptsize The ordered eigenvalues of $\Pb$: $1=\lambda_1(\Pb)\geqslant \hdots \geqslant \lambda_n(\Pb)\geqslant -1$  & \scriptsize$[-1,1]$\\
 &\scriptsize $\epsilon(\Pb)$ & \scriptsize $=1-\lambda_2(\Pb)$, the spectral gap of $\Pb$ & \scriptsize$[-1,1]$\\
 \cline{3-4}
 &\scriptsize $F$ &\scriptsize A set of points $\subset \setOmega$ & \scriptsize $\setOmega^N$\\
 &\scriptsize $C(F)$ &\scriptsize The shortest Hamiltonian path (TSP) amongst points of set $F$ & \scriptsize $\subset \setOmega$\\ 
 &\scriptsize $T(F,\setOmega)$ &\scriptsize The length of $C(F)$   & \scriptsize $\R_+$\\
 &\scriptsize $T(F,R)$ &\scriptsize For any set $R\subseteq \setOmega $, $T(F, R) := T(F \cap R,\setOmega)$  & \scriptsize $\R_+$\\
\hline
\end{tabular}
\end{table}


\section{Variable density sampling and its theoretical foundations}
\label{sec:VDS}
To the best of our knowledge, there is currently no rigorous definition of variable density sampling. 
Hence, to fill this gap, we provide a precise definition below.

\begin{definition}
\label{def:VDS}
Let $p$ be a probability measure defined on a \ADDED{measurable }space $\Xi$. A stochastic process $X=\{X_i\}_{i\in \mathbb{N}}$ or $X=\{X_t\}_{t\in \mathbb{R_+}}$ \ADDED{on state space $\Xi$} is called a $p$-\emph{variable density sampler} if its empirical measure (or occupation measure) weakly converges to $p$ almost surely, that is:
\begin{align*}
\frac{1}{N} \sum_{i=1}^N f(X_i) & \to p(f)   \qquad a.s.\\
& \textrm{or} \\
\frac{1}{T} \int_{t=0}^T f(X_t) \dd t & \to p(f)  \qquad a.s.
\end{align*}
for all continuous bounded $f$.
\end{definition}

\begin{example}
In the case where $X=(X_i)_{i\in \mathbb{N}}$ is a discrete time stochastic process with discrete state space $\Xi=\{1,\hdots, n\}$, definition~\ref{def:VDS} can be slightly simplified. 
Let us set $\displaystyle Z_j^N=\frac{1}{N}\sum_{i=1}^N\one_{X_i=j}$. The random variable $Z_j^N$ represents the proportion of points that fall on position $j$.
Let $p$ denote a discrete probability distribution function. Using these notations, $X$ is a $p$-variable density sampler if:
$$
\lim_{N\to +\infty} Z_j^N = p_j \qquad a.s.
$$

In particular, if $(X_i)_{i\in \mathbb{N}}$ are i.i.d. samples drawn from $p$, then $X$ is a $p$-variable density sampler. This simple example is the most commonly encountered in the compressed sensing literature and we will review its properties in paragraph \ref{ssec:theoretical_motiv}.
\end{example}

\begin{example}
More generally, drawing independent random variables according to distribution $p$ is a VDS if the space $\Xi$ is second countable, owing to the strong law of large numbers.
\end{example}

\begin{example}
An irreducible aperiodic Markov chain on a finite sample space is
a VDS for its stationary distribution (or invariant measure); see Section~\ref{sec:markov}.
\end{example}

\begin{example}
In the deterministic case, for a dynamical system, definition~\ref{def:VDS} closely
corresponds to the ergodic hypothesis, that is time averages are equal to expectations over space. 
We discuss an example that makes use of the TSP solution in section \ref{sec:TSP}. 
\end{example}

The following proposition directly relates the VDS concept to the time spent by the process in a part of the space, as an immediate consequence of the porte-manteau lemma~(see e.g.~\cite{Billingsley09}).

\begin{proposition}
Let $p$ denote a Borel measure defined on a set $\Xi$. Let $B\subseteq \Xi$ be a measurable set. Let $X: \R_+ \to \Xi$ (resp. $X: \N \to \Xi$) be a \DELETED{deterministic or }stochastic process. Let $\mu$ denote the Lebesgue measure on $\mathbb{R}$. Define $\mu_X^t(B)= \frac{1}{t}\mu(\{s\in [0,t],\ X(s)\in B\})$ (resp. $\mu_X^n(B)= \frac{1}{n}\sum_{i=1}^n\one_{X(i) \in B}$). Then, the following two propositions are equivalent:
\begin{align*}
(i) & \qquad X \textrm{ is a $p$-VDS}   \\
(ii) & \qquad  \textrm{Almost surely, }\forall B\subseteq \Xi \textrm{ a Borel set with $p(\partial B)=0$, } \\
 & \qquad\lim_{t\to +\infty} \mu_X^t(B) = p(B)  \qquad \textrm{a.s.}\\
\qquad \textrm{(resp.)} & \qquad \lim_{n\to +\infty} \mu_X^n(B) = p(B)  \qquad \textrm{a.s.}
\end{align*}
\end{proposition}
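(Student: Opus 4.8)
The plan is to recognize this proposition as a direct translation of the portmanteau theorem into the language of occupation measures, so the proof is essentially a matter of applying a standard equivalence from weak convergence theory. Recall that the portmanteau lemma states that for probability measures $\mu_k$ on a metric space $\Xi$, the following are equivalent: (a) $\mu_k \to p$ weakly, i.e. $\int f \, d\mu_k \to \int f \, dp$ for all bounded continuous $f$; and (b) $\mu_k(B) \to p(B)$ for every Borel set $B$ with $p(\partial B)=0$. The random occupation measures $\mu_X^t$ (resp. $\mu_X^n$) are themselves probability measures on $\Xi$, and by construction $\int_\Xi f \, d\mu_X^t = \frac{1}{t}\int_0^t f(X(s))\,\dd s$ (resp. $\int_\Xi f\,d\mu_X^n = \frac{1}{n}\sum_{i=1}^n f(X(i))$). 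Hence the statement ``$X$ is a $p$-VDS'' is, by Definition~\ref{def:VDS}, exactly ``$\mu_X^t \to p$ weakly a.s.'', and we only need to transport the portmanteau equivalence through the almost-sure quantifier.

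First I would fix a countable convergence-determining class. The delicate point in (ii) is the order of quantifiers: we need a single almost-sure event on which $\mu_X^t(B) \to p(B)$ holds \emph{simultaneously} for all Borel $B$ with $p(\partial B)=0$. I would handle this by the standard trick: weak convergence of probability measures on a separable metric space is metrizable (e.g. by the L\'evy--Prokhorov metric), or equivalently can be tested on a fixed countable family $\{f_j\}_{j\in\N}$ of bounded continuous functions that is convergence-determining. So ``$\mu_X^t \to p$ weakly'' is equivalent to ``$\frac{1}{t}\int_0^t f_j(X(s))\,\dd s \to p(f_j)$ for every $j$'', which is a countable conjunction of the a.s. statements granted by the $p$-VDS hypothesis; intersecting countably many probability-one events gives one event of probability one on which $\mu_X^t \to p$ weakly. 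On that event, the deterministic portmanteau lemma applies to the (now deterministic, given the sample path) sequence/net of measures $\mu_X^t$ and yields $\mu_X^t(B)\to p(B)$ for \emph{every} $B$ with $p(\partial B)=0$ at once. This gives (i) $\Rightarrow$ (ii).

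The converse (ii) $\Rightarrow$ (i) is the easier direction and can be quoted straight from portmanteau: on the a.s. event where $\mu_X^t(B)\to p(B)$ for all $p$-continuity sets $B$, the implication (b) $\Rightarrow$ (a) of the portmanteau lemma gives $\int f\,d\mu_X^t \to p(f)$ for all bounded continuous $f$, which is precisely Definition~\ref{def:VDS}. I would remark that the continuous-time case requires the map $t\mapsto \mu_X^t$ to be suitably measurable so that the events in question are genuine events; this holds under the implicit (mild) measurability assumptions on the process $X$, e.g. joint measurability of $(s,\omega)\mapsto X(s,\omega)$, making $s\mapsto f(X(s))$ integrable on $[0,t]$. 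The discrete-time case is identical with sums replacing integrals and is free of this subtlety.

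The main obstacle — really the only nontrivial point — is the quantifier exchange in direction (i) $\Rightarrow$ (ii): passing from ``for each $B$, a.s.\ convergence'' to ``a.s., for all $B$, convergence''. Without a structural reduction this is false in general (an uncountable union of null sets need not be null), and the resolution is exactly the separability/metrizability of weak convergence invoked above, which reduces the uncountable family of test sets to a countable convergence-determining family of test functions. Provided $\Xi$ is second countable (which is assumed throughout for the examples, and is needed anyway for the law of large numbers in the i.i.d.\ case), this reduction is legitimate and the proof closes. Everything else is bookkeeping around the cited portmanteau lemma.
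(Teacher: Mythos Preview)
Your proposal is correct and follows the same route as the paper: the paper simply states that the proposition is ``an immediate consequence of the porte-manteau lemma'' and gives no further argument. Your treatment is in fact more careful than the paper's, since you explicitly address the quantifier exchange in the direction (i)~$\Rightarrow$~(ii) via a countable convergence-determining class and the separability assumption on $\Xi$; the paper glosses over this point entirely.
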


\ADDED{\begin{remark}
Definition~\ref{def:VDS} is a generic definition that encompasses both discrete and continuous time and discrete and continuous state space since $\Xi$ can be any measurable space.  In particular, the recent CS framework on orthogonal systems~\cite{Rauhut10,Candes11} falls within this definition.
\end{remark}}

\ADDED{
Definition~\ref{def:VDS} does not encompass some useful sampling strategies. We propose a definition of a generalized VDS, which encompasses stochastic processes indexed over a bounded time set.
}
\begin{definition}
\label{def:gVDS}
\ADDED{
A sequence $\{\{X_t^{(n)}\}_{0\leqslant t \leqslant T_n}\}_{n\in\N}$ is a \emph{generalized $p$-VDS} if the sequence of occupation measures converges to $p$ almost surely, that is:
\begin{align*}
\frac{1}{T_n} \int_{t=0}^{T_n} f(X_t^{(n)}) \dd t & \to p(f)   \qquad a.s.
\end{align*}
}
\end{definition}
\ADDED{\begin{remark}
Let $(X_t)_{t\in\R}$ be a VDS, and $(T_n)_{n\in \N}$ be any positive sequence such that $T_n\to \infty$. Then the sequence defined by $X_t^{(n)}=X_t$ for $0\leqslant t \leqslant T_n$ is a generalized VDS.
\end{remark}}
\ADDED{\begin{example}
\label{ex:spiral}
Let $\Xi=\R^2$, and consider $r:[0,1]\mapsto \R^+$ a strictly increasing smooth function. We denote by $r^{-1} : [r(0),r(1)]\to \R$ its inverse function and by $\dot{r^{-1}}$ the derivative of $r^{-1}$. Consider a sequence of spiral trajectories $s_N:[0,N]\to \R^2$ defined by $\displaystyle s_N(t)=r\Bigl(\frac{t}{N}\Bigr)\begin{pmatrix}\cos(2\pi t)\\ \sin(2\pi t) \end{pmatrix}$. Then $s_N$ is a generalized VDS for the distribution $p$ defined by:
\begin{align*}
p(x,y)=\left\{\begin{array}{cl}
\frac{\dot{r^{-1}}\left( \sqrt{x^2+y^2}\right)}{2 \pi \int_{\rho=r(0)}^{r(1)} \dot{r^{-1}}(\rho)\rho \dd \rho} & \mbox{if} \quad r(0)\leqslant \sqrt{x^2+y^2}\leqslant r(1) \\
0 & \mbox{otherwise} \\
\end{array}
\right.
\end{align*}
A simple justification is that the time spent by the spiral in the infinitesimal ring $\{(x,y)\in \R^2, \rho \leqslant \sqrt{x^2+y^2} \leqslant \rho+\dd \rho\} $  is $\int_{r^{-1}(\rho)}^{r^{-1}(\rho+\dd\rho)}\dd t\propto \dot{r^{-1}}(\rho)$.
\end{example}
}

\subsection{Theoretical foundations - Independent VDS}
\label{ssec:theoretical_motiv}

CS theories provide strong theoretical foundations of VDS based on independent drawings. 
In this paragraph, we recall a typical result that motivates independent drawing in the $\ell_1$ recovery context~\cite{Rauhut10,foucart2013mathematical,Candes11,Krahmer12,Chauffert13,Bigot13,Adcock13}. Using the notation defined in the introduction, let us give a slightly modified version of~\cite[Theorem 4.2]{Rauhut10}.
\begin{theorem}
\label{thm:rauhut}
Let $p=(p_1, \hdots , p_n)$ denote a probability distribution on $\{ 1 , \hdots , n \}$ and $\Omega \subset \{ 1 , \hdots , n \}$ denote a random set obtained by $m$ independent drawings with respect to distribution $p$. 
Let $S \in \{1, \hdots , n\}$ be an arbitrary set of cardinality $s$. 
Let $x$ be an $s$-sparse vector with support $S$ such that the signs of its non-zero entries is a Rademacher or Steinhaus sequence\footnote{A Rademacher (resp. Steinhaus) random variable is uniformly distributed on $\{-1;1\}$ (resp. on the torus $\{z\in \C; |z|=1\}$).}. Define:
\begin{align}
\label{eq:K}
K(\Ab,p):=\max_{k\in \{1\hdots n\}} \frac{\|a_k\|_\infty^2}{p_k} 
\end{align}
Assume that:
\begin{align}
\label{eq:meas_rauhut}
m\geqslant C K(\Ab,p) s \ln^2\left(\frac{6n}{\eta}\right)
\end{align}
where $C\approx 26.25$ is a constant. 
Then, with probability $1-\eta$, vector $x$ is the unique solution of the $\ell_1$ minimization problem~\eqref{eq:minl1}.
\end{theorem}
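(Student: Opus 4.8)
The plan is to follow the standard ``golfing''/inexact-dual approach used for RIPless compressed sensing results (as in \cite{Candes11,Rauhut10,foucart2013mathematical}), adapting the bookkeeping to the weighted sampling distribution $p$ through the quantity $K(\Ab,p)$. The key object is the rescaled measurement operator: after drawing $\Omega = \{j_1,\dots,j_m\}$ i.i.d. from $p$, one works with rows $\frac{1}{\sqrt{m\, p_{j_\ell}}} a_{j_\ell}^*$, so that the resulting matrix $\tilde\Ab_\Omega$ satisfies $\Exp[\tilde\Ab_\Omega^* \tilde\Ab_\Omega] = \Ab^*\Ab = \Ib$ (using orthonormality of the rows of $\Ab$, i.e. $\sum_k a_k a_k^* = \Ib$). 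The weight $1/p_k$ is exactly why $K(\Ab,p) = \max_k \|a_k\|_\infty^2/p_k$ controls everything: it bounds the $\ell_\infty\!\to\!\ell_\infty$ size of a single rescaled row restricted to any column.

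First I would establish the two standard pillars. (1) A \emph{local isometry} / conditioning estimate on the support $S$: show that $\|\tilde\Ab_S^*\tilde\Ab_S - \Ib_S\|_{2\to 2} \leq 1/2$ with high probability, via a matrix Bernstein / noncommutative Khintchine inequality applied to the sum of independent rank-one terms $\frac{1}{m p_{j_\ell}} (a_{j_\ell})_S (a_{j_\ell})_S^*$; here each summand has operator norm $\leq \frac{1}{m}\max_k \|(a_k)_S\|_2^2/p_k \leq \frac{s}{m} K(\Ab,p)$, and the variance is similarly controlled, so $m \gtrsim K(\Ab,p)\, s\, \ln(s/\eta)$ suffices. (2) An \emph{off-support column bound}: $\max_{\ell\notin S}\|\tilde\Ab_S^* \tilde\Ab_{\{\ell\}}\|_2$ is small, again by a vector Bernstein bound. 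Then I would invoke the standard sufficient condition for $\ell_1$ exact recovery of a fixed sign pattern: $x$ is the unique minimizer of \eqref{eq:minl1} provided $\tilde\Ab_S$ is well-conditioned and there exists a dual certificate $u = \tilde\Ab_\Omega^* \pi$ with $u_S = \mathrm{sgn}(x_S)$ and $\|u_{S^c}\|_\infty < 1$.

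The heart of the proof is constructing that inexact dual certificate by the \emph{golfing scheme}: partition the $m$ measurements into $L \approx \ln n$ disjoint batches, and iteratively build $\pi^{(k)}$ so that the residual $\mathrm{sgn}(x_S) - (\tilde\Ab_S^*\tilde\Ab_S)$-image shrinks geometrically on $S$ while $\|\,\cdot\,\|_\infty$ on $S^c$ stays bounded by a convergent geometric series. At each batch one uses the local isometry bound (for the contraction on $S$) and an $\ell_\infty$ off-support bound of the form $\|\tilde\Ab_{S^c}^* \tilde\Ab_S v\|_\infty \lesssim \sqrt{K(\Ab,p)\ln(n/\eta)/m_k}\,\|v\|_2$ for $v$ supported on $S$ — this is where the Rademacher/Steinhaus randomness of the signs of $x$ is exploited, allowing a union bound over the $n$ coordinates of $S^c$ with only a $\ln^2(n/\eta)$ (rather than larger) penalty, yielding the stated sample complexity \eqref{eq:meas_rauhut} with the logarithmic exponent $2$.

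The main obstacle is the careful propagation of constants and failure probabilities through the $L$ golfing iterations: one must choose the batch sizes $m_k$, the contraction factor, and the per-step success probabilities so that the geometric series for $\|u_{S^c}\|_\infty$ sums to something strictly below $1$, the conditioning of $\tilde\Ab_S$ is maintained on the intersection of all good events, and the total probability of failure is at most $\eta$ — all while keeping $\sum_k m_k = m$ at the announced threshold with the explicit constant $C \approx 26.25$. This is essentially a bookkeeping exercise once the two concentration lemmas above are in place; the only genuinely delicate point is that the concentration estimates must be stated in terms of $K(\Ab,p)$ uniformly, rather than in terms of a coherence $\max_k\|a_k\|_\infty^2$, which is what makes the weighted/variable-density drawing pay off when $p$ is chosen proportionally to $\|a_k\|_\infty^2$.
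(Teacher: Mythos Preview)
The paper does not supply its own proof of Theorem~\ref{thm:rauhut}: it is quoted from \cite[Theorem~4.2]{Rauhut10}, and the paper's Appendix~1 proves only the extension (Theorem~\ref{thm:rauhut2}) by establishing the local isometry $\|\tilde\Ab^{S*}\tilde\Ab^S-I_s\|\le\delta$ via the matrix Bernstein inequality and then deferring the remainder to \cite[Section~7.3]{Rauhut10}. Rauhut's argument there is \emph{not} the golfing scheme. It is the \emph{exact} dual-certificate construction: once the restricted matrix is well-conditioned, one takes the least-squares dual $v=\tilde\Ab^*\tilde\Ab_S(\tilde\Ab_S^*\tilde\Ab_S)^{-1}\mathrm{sgn}(x_S)$ and bounds each coordinate $|v_\ell|$, $\ell\in S^c$, by Hoeffding's inequality applied to the \emph{random sign vector} $\mathrm{sgn}(x_S)$, conditionally on~$\Omega$; a union bound over the $n$ coordinates then produces the second logarithm and the constant $C\approx 26.25$. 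The Rademacher/Steinhaus hypothesis is there precisely for this step.

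Your route is genuinely different --- it is the Gross/Cand\`es--Plan golfing scheme --- and the paper itself flags this distinction in the Remark following the theorem: golfing yields the stronger $O(s\log n)$ bound and \emph{dispenses with} the random-sign assumption. That exposes a conceptual slip in your outline: the sentence ``this is where the Rademacher/Steinhaus randomness of the signs of $x$ is exploited'' is misplaced. In golfing the residual vectors $v$ at each iteration are determined by the previous batches, not by $\mathrm{sgn}(x_S)$, and the off-support $\ell_\infty$ bound comes from concentration over the random \emph{sampling} (scalar Bernstein coordinate-wise), not over the signs. Carried out correctly, your scheme would prove a stronger statement than Theorem~\ref{thm:rauhut} and would not need the sign hypothesis at all --- so it neither matches the stated hypotheses nor the stated $\ln^2$ exponent and constant.

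If your goal is to reproduce the theorem as written, keep your pillar~(1) (local isometry via matrix Bernstein, exactly as in Appendix~1) but replace the golfing construction by the exact dual certificate above, and use the sign randomness in a Hoeffding bound on each $v_\ell$, $\ell\in S^c$; that is the argument the paper is invoking.
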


\begin{remark}
Cand\`es and Plan have stated stronger results in the case of real matrices in~\cite{Candes11}. 
Namely, the number of necessary measurements was decreased to $O(s\log(n))$, with lower constants and without any assumption on the vector signs. 
Their results have been derived using the so-called ``golfing scheme'' proposed in~\cite{Gross11}. 
It is likely that these results could be extended to the complex case, however it would not change the optimal distribution which is the main point of this paper. 
We thus decided to stick to Theorem~\ref{thm:rauhut}.
\end{remark}

The choice of an accurate distribution $p$ is crucial since it directly impacts the number of measurements required. In the MRI community, a lot of heuristics have been proposed so far to identify the \textit{best} sampling distribution. 
In the seminal paper on CS-MRI~\cite{Lustig07}, Lustig \textit{et al} have proposed to sample the $k$-space using a density that polynomially decays towards high frequencies. More recently, Knoll \textit{et al} have generalized this approach by inferring the best exponent from MRI image databases~\cite{Knoll11}.
It is actually easy to derive the theoretically \textit{optimal distribution}, i.e. the one that minimizes the right hand-side in~\eqref{eq:meas_rauhut} as shown in Proposition~\ref{prop:theoopt}\ADDED{, introduced in~\cite{Chauffert13}}.
\begin{proposition}
\label{prop:theoopt}
Denote by $\displaystyle K^*(\Ab):=\min_{p \in \Delta_n} K(\Ab,p)$\DELETED{, where $\Delta_n$ is the simplex in $\R^n$}. 
\begin{enumerate}[(i)]
\item the optimal distribution $\pi \in \Delta_n$ that minimizes $K(\Ab,p)$ is:
\begin{align}
\label{eq:opt_distrib}
\pi_i=\frac{\|a_i\|_\infty^2}{\sum_{i=1}^n \|a_i\|_\infty^2}
\end{align}
\item $K^*(\Ab)=K(\Ab,\pi)=\sum_{i=1}^n \|a_i\|_\infty^2$.\\
\end{enumerate}
\end{proposition}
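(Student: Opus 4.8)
The statement to prove is Proposition~\ref{prop:theoopt}: the distribution $\pi_i = \|a_i\|_\infty^2 / \sum_{j=1}^n \|a_j\|_\infty^2$ minimizes $K(\Ab,p) = \max_k \|a_k\|_\infty^2 / p_k$ over the simplex $\Delta_n$, with optimal value $\sum_{i=1}^n \|a_i\|_\infty^2$. The plan is to first verify the claimed optimal value at $p = \pi$ by direct substitution, and then to prove a matching lower bound $K(\Ab,p) \geqslant \sum_i \|a_i\|_\infty^2$ for every $p \in \Delta_n$.

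First I would compute $K(\Ab,\pi)$: for the proposed $\pi$, each ratio $\|a_k\|_\infty^2 / \pi_k$ equals $\sum_{i=1}^n \|a_i\|_\infty^2$ for every $k$ (with the convention that indices $k$ with $\|a_k\|_\infty = 0$ can be handled separately — either they do not affect the max, or one assumes no zero rows; a brief remark suffices). Since all these ratios are equal, the maximum over $k$ is exactly $\sum_{i=1}^n \|a_i\|_\infty^2$, establishing part~(ii) conditional on optimality, and giving the upper bound $K^*(\Ab) \leqslant \sum_i \|a_i\|_\infty^2$.

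For the lower bound, let $p \in \Delta_n$ be arbitrary. For each $k$, $\|a_k\|_\infty^2 \leqslant p_k \, K(\Ab,p)$ by definition of the max. Summing over $k = 1, \hdots, n$ and using $\sum_k p_k = 1$ gives $\sum_k \|a_k\|_\infty^2 \leqslant K(\Ab,p) \sum_k p_k = K(\Ab,p)$. Hence $K(\Ab,p) \geqslant \sum_i \|a_i\|_\infty^2$ for all $p$, which combined with the computation above shows $\pi$ attains the minimum and $K^*(\Ab) = \sum_i \|a_i\|_\infty^2$. To get uniqueness of the minimizer (part~(i) asserts ``the'' optimal distribution), I would note that equality in the summed inequality forces $\|a_k\|_\infty^2 = p_k K(\Ab,p)$ for every $k$, i.e. $p_k$ is proportional to $\|a_k\|_\infty^2$, which pins down $p = \pi$ uniquely.

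I do not expect a serious obstacle here; the only mild subtlety is the degenerate case where some rows satisfy $\|a_k\|_\infty = 0$ (then $\pi_k = 0$ and the ratio $0/0$ must be read as $0$, and uniqueness only holds among distributions supported where $\|a_k\|_\infty > 0$), and ensuring the denominator $\sum_i \|a_i\|_\infty^2$ is nonzero, which holds since $\Ab$ is an orthogonal matrix and hence has no zero row overall. Otherwise the argument is a one-line averaging/duality bound.
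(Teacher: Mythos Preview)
Your proof is correct. Both you and the paper first compute $K(\Ab,\pi)=\sum_i\|a_i\|_\infty^2$ by direct substitution; the difference lies in how optimality is established. The paper argues by pigeonhole: if $q\neq\pi$ with $\sum_k q_k=\sum_k\pi_k=1$, there must exist an index $j$ with $q_j<\pi_j$, and then $K(\Ab,q)\geqslant\|a_j\|_\infty^2/q_j>\|a_j\|_\infty^2/\pi_j=K(\Ab,\pi)$, giving strict inequality (and hence uniqueness) in one stroke. You instead use an averaging bound: summing $\|a_k\|_\infty^2\leqslant p_k\,K(\Ab,p)$ over $k$ yields the lower bound, and you recover uniqueness from the equality case. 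Both arguments are elementary and equally short; yours is the standard duality/averaging trick for minimax ratio problems and generalizes cleanly, while the paper's version delivers strict optimality and uniqueness slightly more directly without a separate equality analysis.
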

\begin{proof}
\begin{enumerate}[(i)]
\item Taking $p=\pi$, we get $K(\Ab,\pi)=\sum_{i=1}^n \|a_i\|_\infty^2$. Now assume that $q\neq \pi$, since $\sum_{k=1}^n q_k = \sum_{k=1}^n \pi_k=1$, $\exists j \in \{1,\hdots,n\}$ such that $q_j<\pi_j$. Then $K(\Ab,q)\geqslant \|a_j\|_\infty^2/q_j > \|a_j\|_\infty^2/\pi_j = \sum_{i=1}^n \|a_i\|_\infty^2 = K(\Ab,\pi)$. So, $\pi$ is the distribution that minimizes $K(\Ab,p)$.
\item This equality is a consequence of $\pi$'s definition.
\end{enumerate}
\end{proof}

The theoretical optimal distribution only depends on the acquisition matrix, i.e. on the acquisition and sparsifying bases. For instance, if we measure some Fourier frequencies of a sparse signal in the time domain~(a sum of diracs), we should sample the frequencies according to a uniform distribution, since $\|a_i\|_\infty = 1/\sqrt{n}$ for all $1 \leqslant i \leqslant n$. In this case, $K^*(\Fb)=1$ and the number of measurements $m$ is proportional to $s$, which is in accordance with the seminal paper by Cand\`es~\textit{et al.}~\cite{Candes06}.

\subsubsection*{Independent drawings in MRI} 

In the MRI case, the images are usually assumed sparse~(or at least compressible) in a wavelet basis, while the acquisition is performed in the Fourier space. In this setting, the acquisition matrix can be written as $\Ab = \Fb^* \Psib$\DELETED{ where $\Fb^*$ and $\Psib$ denote the Fourier and inverse wavelet transform matrices, respectively}. 
In that case, the \textit{optimal distribution} only depends on the choice of the wavelet basis. The optimal distributions in 2D and 3D are depicted in Fig.~\ref{fig:pi_opt}(a)-(b), respectively if we assume that the MR images are sparse in the Symmlet basis with 3 decomposition levels in the wavelet transform.

\begin{figure}[!h]
\begin{center}
\begin{tabular}{cc}
(a)&(b)\\
\includegraphics[width=.4 \linewidth]{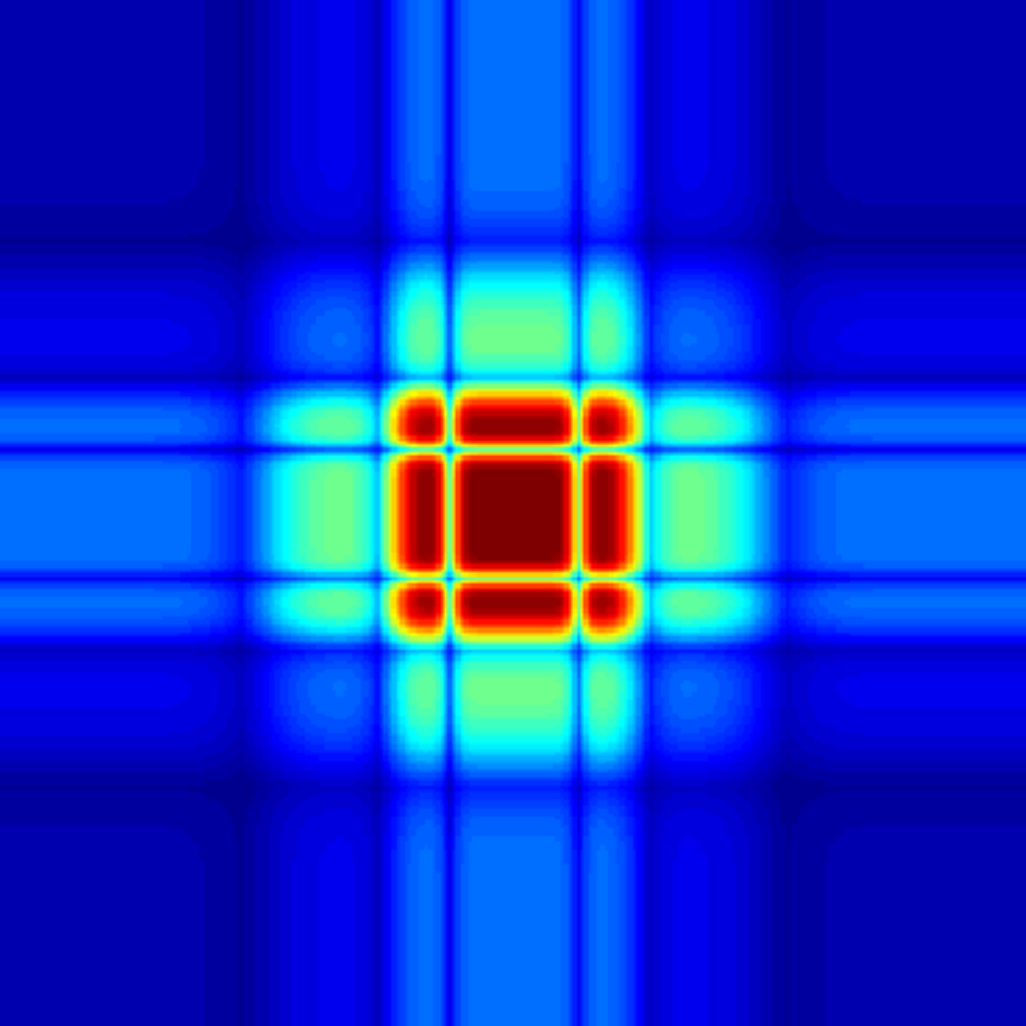}&
\includegraphics[width=.4 \linewidth]{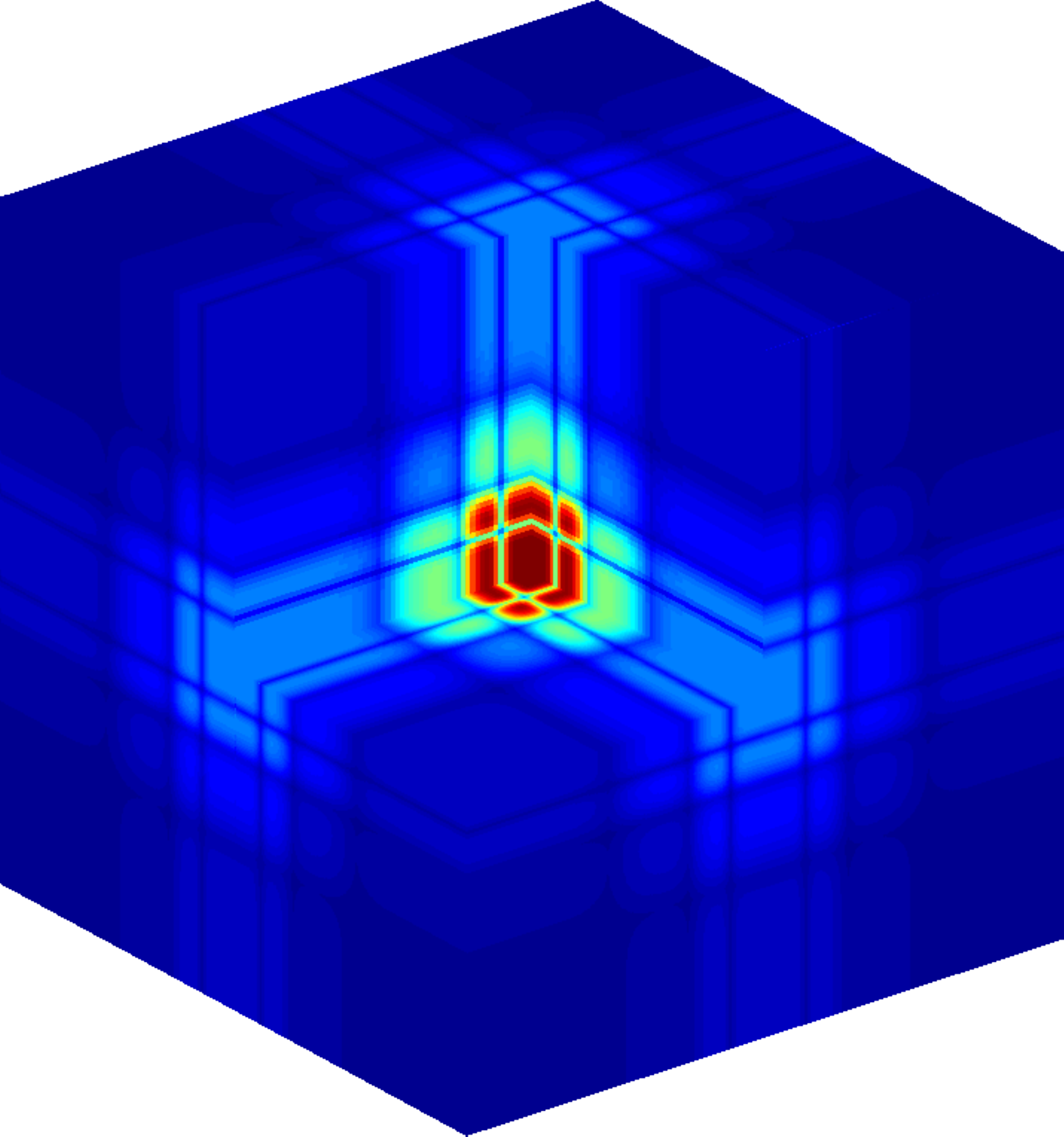}\\
\end{tabular}
\caption{\label{fig:pi_opt} Optimal distribution $\pi$ for a Symmlet-10 tranform in 2D (a) and a maximal projection of the optimal distribution in 3D (b).}
\end{center}
\end{figure}

Let us mention that similar distributions have been proposed in the literature. 
First, an alternative to independent drawing was proposed by Puy \textit{et al.}~\cite{Puy11}. 
Their approach consists in selecting or not a frequency by drawing a Bernoulli random variable. Its parameter is determined by minimizing a quantity that slightly differs from $K(\Ab,p)$. Second, Krahmer and Ward~\cite{Krahmer12} tried to unify theoretical results and empirical observations in the MRI framework. 
For Haar wavelets, they have shown that 
a polynomial distribution on the 2D \REPLACED{$(k_x,k_y)$}{$k$}-space which varies as $1/(k_x^2+k_y^2)$ is close to the optimal solution since it verifies $K(\Ab,p)=O(\log (n))$. Our numerical experiments have confirmed that a decay as a power of 2 is near optimal in 2D. 

In the next section, we improve the existing theories by showing that a deterministic sampling of highly coherent vectors~(i.e. those satisfying $\|a_i\|_\infty^2\gg \frac{1}{n}$) may decrease the total number of required measurements. In MRI, this amounts to fully sampling the low frequencies, which exactly matches what has been done heuristically hitherto.

\subsection{Mixing deterministic and independent samplings}

In a recent work~\cite{Chauffert13}, we observed and partially justified the fact that a deterministic sampling of the low frequencies in MRI could drastically improve reconstruction quality. The following theorem proven in Appendix~1 provides a theoretical justification to this approach.

\begin{theorem}
\label{thm:rauhut2}
Let $S \in \{1, \hdots , n\}$ be a set of cardinality $s$. 
Let $x$ be an $s$-sparse vector with support $S$ such that the signs of its non-zero entries is a Rademacher or Steinhaus sequence.
Define the acquisition set $\Omega \subseteq \{1, \hdots n\}$ as the union of:\\
\begin{enumerate}[(i)]
\item a deterministic set $\Omega_1$ of cardinality $m_1$. 
\item a random set $\Omega_2$ obtained by $m_2$ independent drawing according to distribution $p$ defined on $\{1 \hdots n \} \setminus \Omega_1$.
\end{enumerate}
Denote $m=m_1+m_2$, $\Omega_1^c=\{1,\hdots, n\} \setminus \Omega_1$ and let \DELETED{$\Ab_\Omega\in \C^{m\times n}$ be the sensing matrix built by selecting the lines of $\Ab$ corresponding to the indexes in }$\Omega=\Omega_1 \cup \Omega_2$.
Assume that:
\begin{align}
\label{eq:meas_det}
m\geqslant m_1 + C  K(\Ab_{\Omega_1^c},p) s \ln^2\left(\frac{6n}{\eta}\right)
\end{align}
where $C=7/3$ is a constant, and $\displaystyle  K(\Ab_{\Omega_1^c},p)=\max_{i\in \{1,\hdots, n\} \setminus \Omega_1} \frac{\|a_i\|_\infty^2}{p_i}$. 
Then, with probability $1-\eta$, vector $x$ is the unique solution of the $\ell_1$ minimization problem~\eqref{eq:minl1}. \\
\end{theorem}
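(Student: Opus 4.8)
\textbf{Proof plan for Theorem~\ref{thm:rauhut2}.}

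The plan is to reduce this mixed deterministic/random setting to the purely random setting of Theorem~\ref{thm:rauhut} applied to a modified problem. The key observation is that once the deterministic rows $\Omega_1$ are acquired, the only thing left to recover is the component of $x$ that is not already pinned down by the equations $\Ab_{\Omega_1} z = \Ab_{\Omega_1} x$. So the first step is to set up the correct conditional recovery problem: the $\ell_1$ minimizer of~\eqref{eq:minl1} with constraint set $\Ab_\Omega z = y_\Omega$ coincides with $x$ provided a suitable dual certificate exists. Following the standard CS recovery machinery (as in~\cite{Rauhut10,Candes11}), it suffices to exhibit an inexact dual certificate living in the row space of $\Ab_\Omega$, together with an approximate isometry condition on $\Ab_\Omega$ restricted to the support $S$. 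I would split each of these two ingredients into a contribution from $\Omega_1$ and a contribution from $\Omega_2$.

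The main work is then to show that the random part $\Omega_2$ — which is $m_2$ i.i.d. draws from $p$ on the reduced index set $\Omega_1^c$ — already does the job by itself on the orthogonal complement, so that adding the deterministic block only helps. Concretely, I would follow the proof of~\cite[Theorem 4.2]{Rauhut10} line by line, but with $n$ replaced by $|\Omega_1^c| = n - m_1$ draws and the coherence factor $K(\Ab,p)$ replaced by $K(\Ab_{\Omega_1^c},p)$, since the maximum in~\eqref{eq:K} now ranges only over $i \notin \Omega_1$. The rows indexed by $\Omega_1$ contribute a rank-$m_1$ deterministic positive-semidefinite term to the relevant Gram matrix $\Ab_\Omega^* \Ab_\Omega$ restricted to $S$; this can only improve the conditioning, so the restricted-isometry-type estimate and the construction of the dual certificate go through with the random-part bound $m_2 \geqslant C\,K(\Ab_{\Omega_1^c},p)\, s\,\ln^2(6n/\eta)$, which rearranges to~\eqref{eq:meas_det} with $m = m_1 + m_2$. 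The improved constant $C = 7/3$ (versus $26.25$) presumably comes from a sharper concentration/matrix-Chernoff argument or from the fact that we no longer need to control the full matrix, only the reduced one; I would track the constants through the relevant tail bounds (Rudelson–Vershynin / matrix Bernstein, and the moment bounds for the off-support inner products) carefully at that stage.

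The hard part will be making rigorous the claim that ``the deterministic block only helps'' for the dual certificate, not just for the isometry. The certificate in the golfing-type or inexact-duality construction is built iteratively inside the range of $\Ab_\Omega^*$; one must check that the extra deterministic directions do not spoil the bound on $\|\ \cdot\ \|_\infty$ off the support, i.e. that one can simply use the certificate built from $\Omega_2$ alone and ignore $\Omega_1$, treating $\Omega_1$'s contribution as a free bonus in the $S$-restricted invertibility. Equivalently, I would argue that if $x$ is recoverable from $\Omega_2$-measurements alone on the signal model supported away from $\Omega_1$, then it is a fortiori recoverable from $\Omega_1 \cup \Omega_2$, because adding linear constraints to~\eqref{eq:minl1} shrinks the feasible set and cannot destroy uniqueness of $x$ as the minimizer — one must only verify that $x$ remains feasible (which is automatic) and that no new minimizer of equal or smaller $\ell_1$ norm is created (which follows from the dual certificate of the $\Omega_2$-only problem still being valid, since its support-space and range conditions are preserved under enlarging $\Omega$). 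This monotonicity-under-added-measurements argument, combined with the reduced-dimension version of Theorem~\ref{thm:rauhut}, is what I expect to carry the whole proof, and getting the bookkeeping of which quantities are evaluated on $\Omega_1^c$ versus $\{1,\dots,n\}$ exactly right is the delicate point.
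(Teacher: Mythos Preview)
Your first two paragraphs are on the right track and broadly match the paper. The paper also reduces everything to the machinery of \cite[Section~7.3]{Rauhut10}; the only new ingredient is a modified local-isometry estimate (the paper's Proposition in Appendix~1) for a weighted matrix $\tilde\Ab$ that stacks the unweighted deterministic rows $a_i$, $i\in\Omega_1$, on top of the rescaled random rows $\frac{1}{\sqrt{m_2\,p_i}}\,a_i$, $i\in\Omega_2$. The precise mechanism, which you only gesture at with ``can only improve the conditioning'', is that the deterministic block contributes exactly $\Mb_1:=\sum_{i\in\Omega_1}a_i^S(a_i^S)^*$ to $\tilde\Ab^{S*}\tilde\Ab^S$, while the random summands have expectation $I_s-\Mb_1$ (from $\sum_{i=1}^n a_i^S(a_i^S)^*=I_s$). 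The paper then applies matrix Bernstein to the centered matrices $\Zb_j:=\Mb_1+\tilde a_j^S(\tilde a_j^S)^*-I_s$, using $0\preceq I_s-\Mb_1\preceq I_s$ to bound both variance and sup-norm; the $7/3$ is the Bernstein constant from that computation.

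Your third paragraph, however, is a genuine wrong turn and is \emph{not} equivalent to the above. The monotonicity claim ``if $x$ is recoverable from $\Omega_2$ alone then a fortiori from $\Omega_1\cup\Omega_2$'' is logically valid, but its premise fails in the regime of interest. Since $p$ is supported on $\Omega_1^c$, the random Gram matrix $\frac{1}{m_2}\sum_{j\in\Omega_2}\tilde a_j^S(\tilde a_j^S)^*$ concentrates around $I_s-\Mb_1$, not $I_s$; when the deterministic rows are the highly coherent ones (the whole point of the theorem), $\Mb_1$ is far from zero, $\Ab_{\Omega_2}$ restricted to $S$ is ill-conditioned, and no dual certificate ``built from $\Omega_2$ alone'' exists with the required bounds. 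Equivalently, invoking Theorem~\ref{thm:rauhut} as a black box on $\Omega_2$ is vacuous because $K(\Ab,p)=\infty$ once $p_i=0$ for $i\in\Omega_1$. The deterministic rows are therefore not a ``free bonus'' you may ignore in the certificate step; they are what recenters the Gram matrix at the identity. Drop the monotonicity route, make the $\Mb_1$-centering explicit in both the isometry and the certificate construction, and the rest of Rauhut's argument then carries over as you outline in paragraph two.
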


This result implies that there exists an optimal partition between deterministically and randomly selected samples, which is moreover easy to compute. 
For example, consider the optimal distribution $p_i \propto \|a_i\|_\infty^2$, then $\displaystyle K^*(\Ab_{\Omega_1^c})=\sum_{i\in \{1,\hdots, n\} \setminus \Omega_1} \|a_i\|_\infty^2$. If the measurement matrix contains rows with large values of $\|a_i\|_\infty$, we notice from inequality~\eqref{eq:meas_det} that these frequencies should be sampled deterministically, whereas the rest of the measurements should be obtained from independent drawings. This simple idea is another way of overcoming the so-called coherence barrier~\cite{Krahmer12,Adcock13}. 

A striking example raised in~\cite{Bigot13} is the following. Assume that $\Ab=\begin{pmatrix} 1 & 0 \\ 0 & \Fb^*_{n-1} \end{pmatrix}$\DELETED{, where $\Fb^*_{n-1}\in \C^{(n-1)\times (n-1)}$ denotes the 1D discrete Fourier matrix}. The assumed optimal independent sampling strategy would consist in independently drawing the rows with distribution $p_1=1/2$ and $p_k=1/\sqrt{n-1}$ for $k\geqslant 2$. According to Theorem~\ref{thm:rauhut}, the number of required measurements is $2Cs \ln^2\left(\frac{6n}{\eta}\right)$. The alternative approach proposed in Theorem~\ref{thm:rauhut2} basically performs a deterministic drawing of the first row combined with an independent uniform drawing over the remaining rows. In total, this scheme requires $1+ Cs \ln^2\left(\frac{6n}{\eta}\right)$ measurements and thus reduces the number of measurements by almost a factor 2. Note that the same gain would be obtained by using independent drawings with rejection.

\subsubsection*{Mixed deterministic and independent sampling in MRI}

\DELETED{Recall that in MRI, the acquisition matrix is $\Ab = \Fb^* \Psib$ where $\Fb^*$ is the Fourier transform and $\Psib$ an inverse wavelet transform. }In our experiments, we will consider wavelet transforms with three decomposition levels and the Symmlet basis with 10 vanishing moments. Fig.~\ref{fig:omega1}(a)-(b) shows the modulus of $\Ab$'s entries with a specific reordering in~(b) according to decaying values of $\|a_i\|_\infty$. This decay is illustrated in Fig.~\ref{fig:omega1}(c). We observe that a typical acquisition matrix in MRI shows large differences between its $\|a_i\|_\infty$ values. More Precisely, there is a small number of rows with a large infinite norm, sticking perfectly to the framework of Theorem~\ref{thm:rauhut2}. This observation justifies the use of a partial deterministic $k$-space sampling, which had already been used in ~\cite{Lustig07,Chauffert13}. In Fig.~\ref{fig:omega1}(d), the set $\Omega_1$ is depicted for a fixed number of deterministic samples $m_1$, by selecting the rows with the largest infinite norms.

\begin{figure}[!h]
\begin{center}
\begin{tabular}{cccc}
(a) &(b)&\hspace{-.025\linewidth}(c)&(d)\\
\includegraphics[width=.27\linewidth]{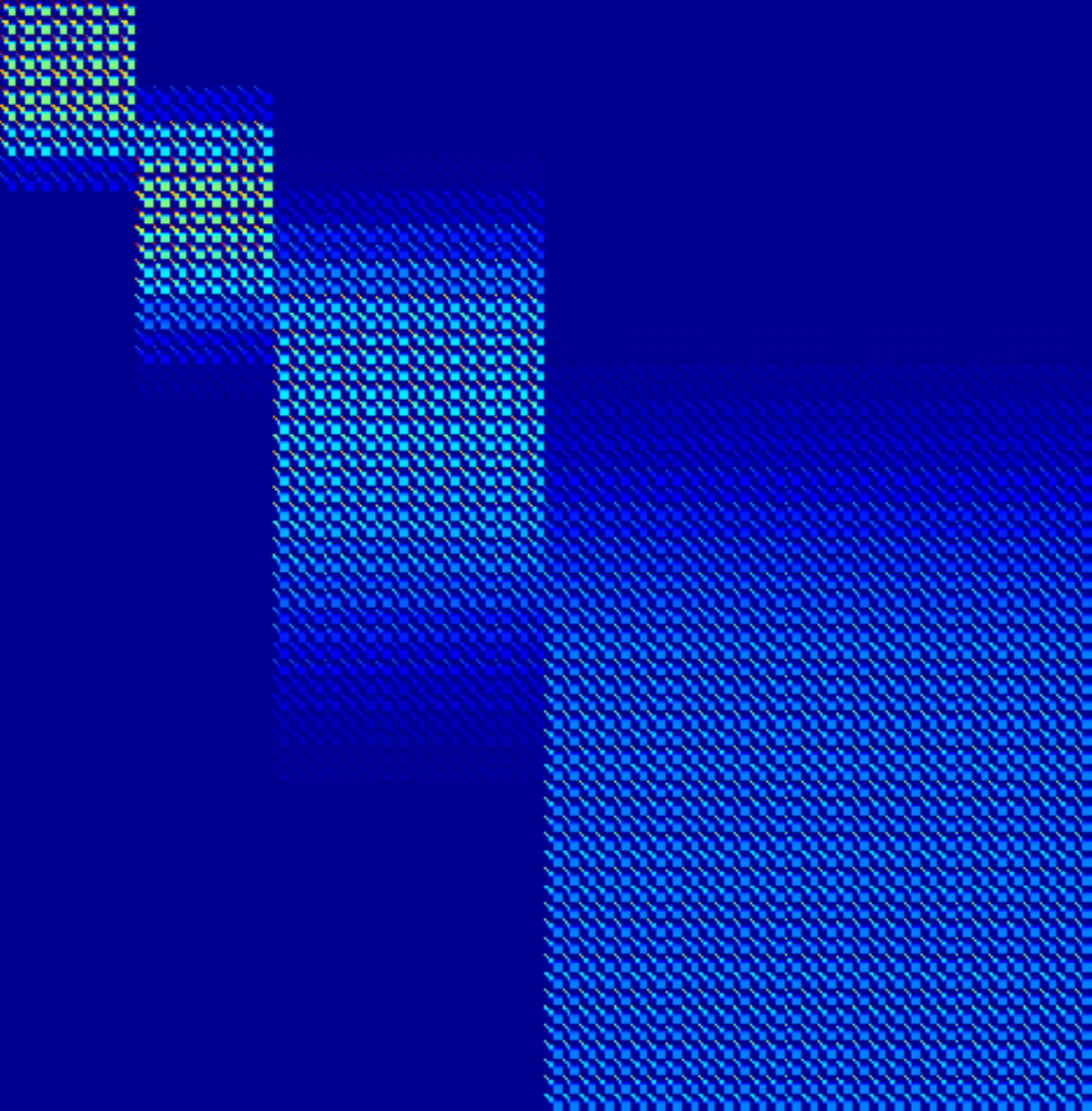}&
\includegraphics[width=.27\linewidth]{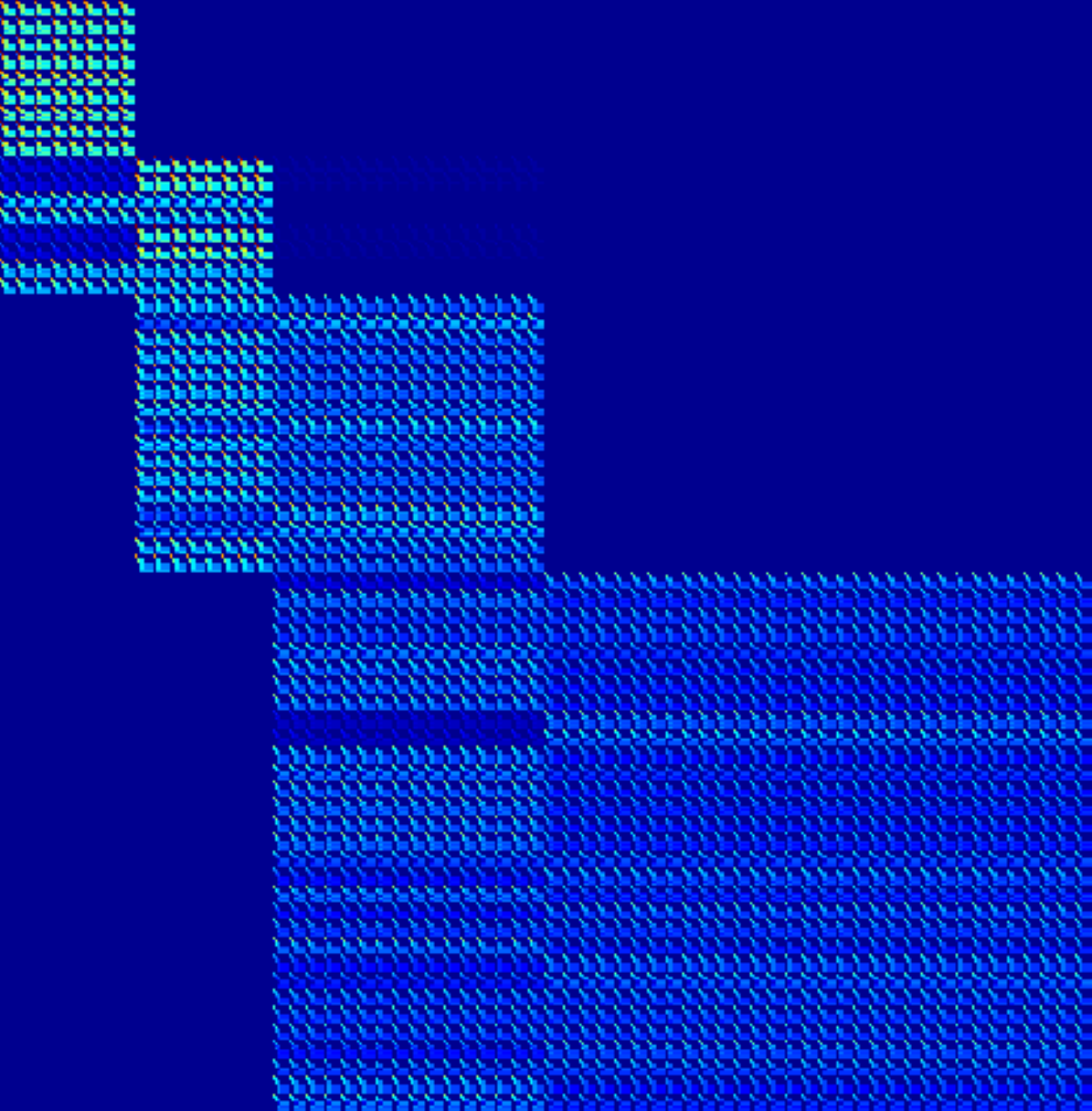}&
\hspace{-.04\linewidth}
\includegraphics[height=.27\linewidth]{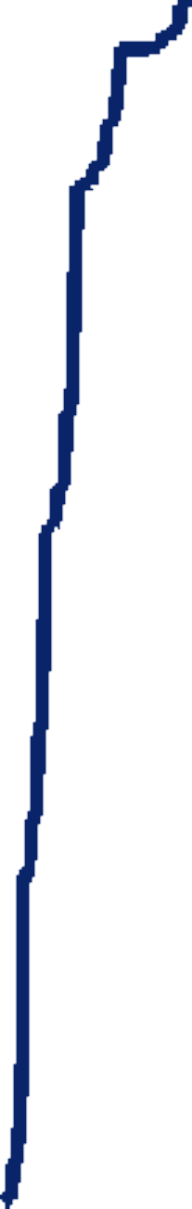}&
\includegraphics[width=.27\linewidth]{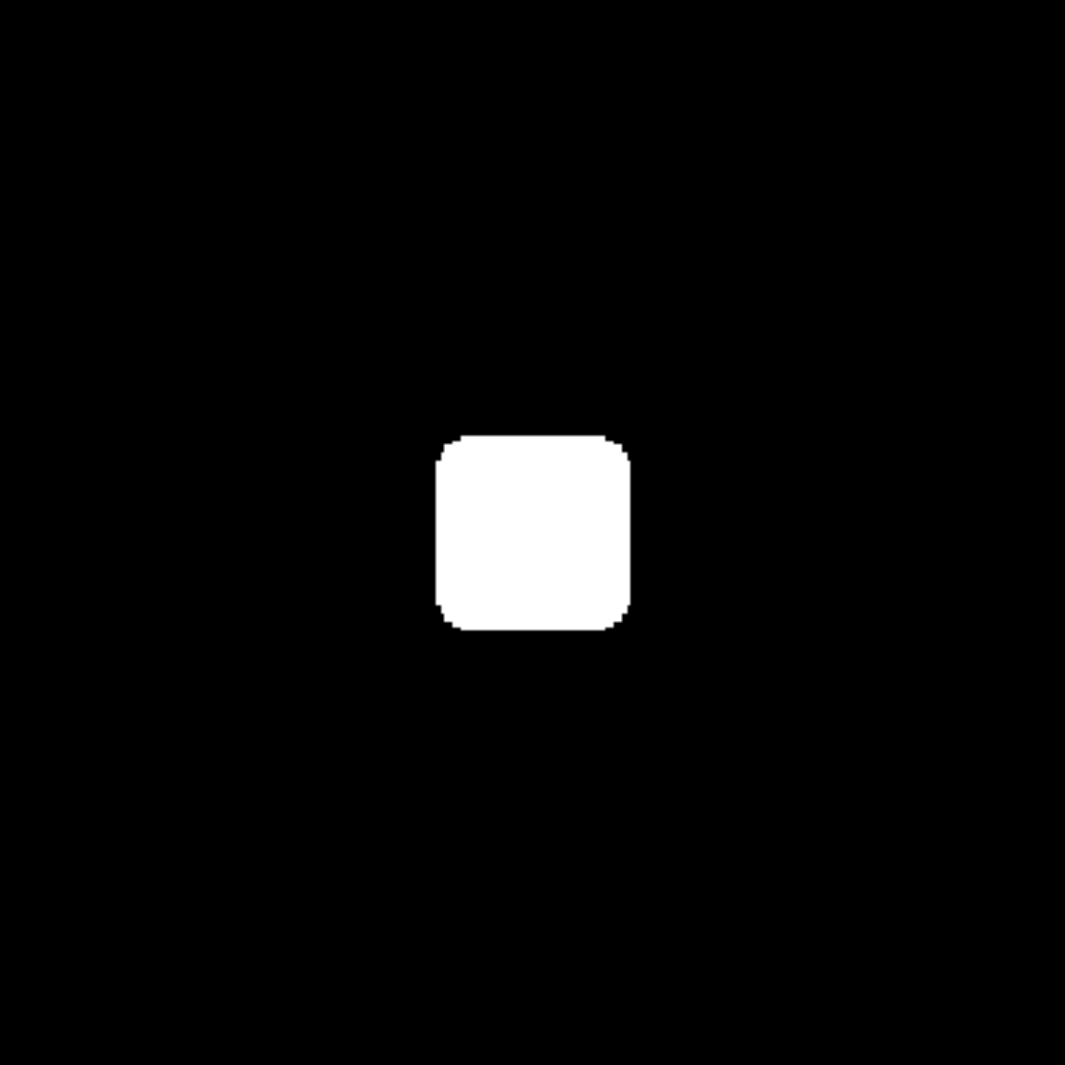}\\
\end{tabular}
\caption{\label{fig:omega1} (a): Absolute magnitudes of $\Ab$ for a 2D Symmlet basis with $10$ vanishing moments and 3 levels of decomposition. (b): same quantities as in~(a) but sorted by decaying $\|a_i\|_\infty$~(i.e. by decreasing order). (c): decay of $\|a_i\|_\infty$. (d): Set $\Omega_1$ depicted in the 2D $k$-space.}
\end{center}
\end{figure}

Hereafter, the strategy we adopt is driven by the previous remarks. All our sampling schemes are performed according to Theorem.~\ref{thm:rauhut2}: a deterministic part is sampled, and \ADDEDTHREE{a }VDS is performed on the rest of the acquisition space (\textit{e.g.} the high frequencies in MRI).

\section{Variable density samplers along continuous curves}
\label{sec:cont}

\subsection{Why independent drawing can be irrelevant}

In many imaging applications, the number of samples is of secondary importance compared to the time spent to collect the samples.
A typical example is MRI, where the important variable to control is the scanning time. 
It depends on the total length of the pathway used to visit the $k$-space rather than the number of collected samples.
MRI is not an exception and many other acquisition devices have to meet such physical constraints amongst which are
scanning probe microscopes, ultrasound imaging, ecosystem monitoring, radio-interferometry or sampling using vehicles subject to kinematic constraints~\cite{Willett}.
In these conditions, measuring isolated points is not relevant and existing practical CS approaches consist in designing parametrized curves performing a variable density sampling.
In what follows, we first review existing variable density sampling approaches based on continuous curves. 
Then, we propose two original contributions and analyze some of their theoretical properties. 
\ADDED{We mostly concentrate on continuity of the trajectory which is not sufficient for implementability in many applications. For instance, in MRI the actual requirement for a trajectory to be implementable is piecewise smoothness. More realistic constraints are discussed in Section~\ref{sec:discussion}.}

\subsection{A short review of samplers along continuous trajectories}

The prototypical variable density samplers in MRI were based on spiral trajectories~\cite{spielman1995magnetic}. Similar works investigating different shapes and densities from a heuristic point of view were proposed in~\cite{tsai2000reduced,kim2003simple,park2005artifact}.
The first reference to compressed sensing appeared in the seminal paper~\cite{Lustig07}. In this work, Lustig~\textit{et al} have proposed to perform independent drawings in a 2D plane~(defined by the partition and phase encoding directions) and sample continuously along the orthogonal direction to design piecewise continuous schemes in the 3D $k$-space~(see Fig.~\ref{fig:schemes_lustig}). These authors have also suggested to make use of randomly perturbed spirals. The main advantage of these schemes lies in their simplicity of practical implementation since they only require minor modifications of classical MRI acquisition sequences.

\begin{figure}[!h]
\begin{center}
\begin{tabular}{cc}
(a)&(b)\\
\includegraphics[height=.38 \linewidth]{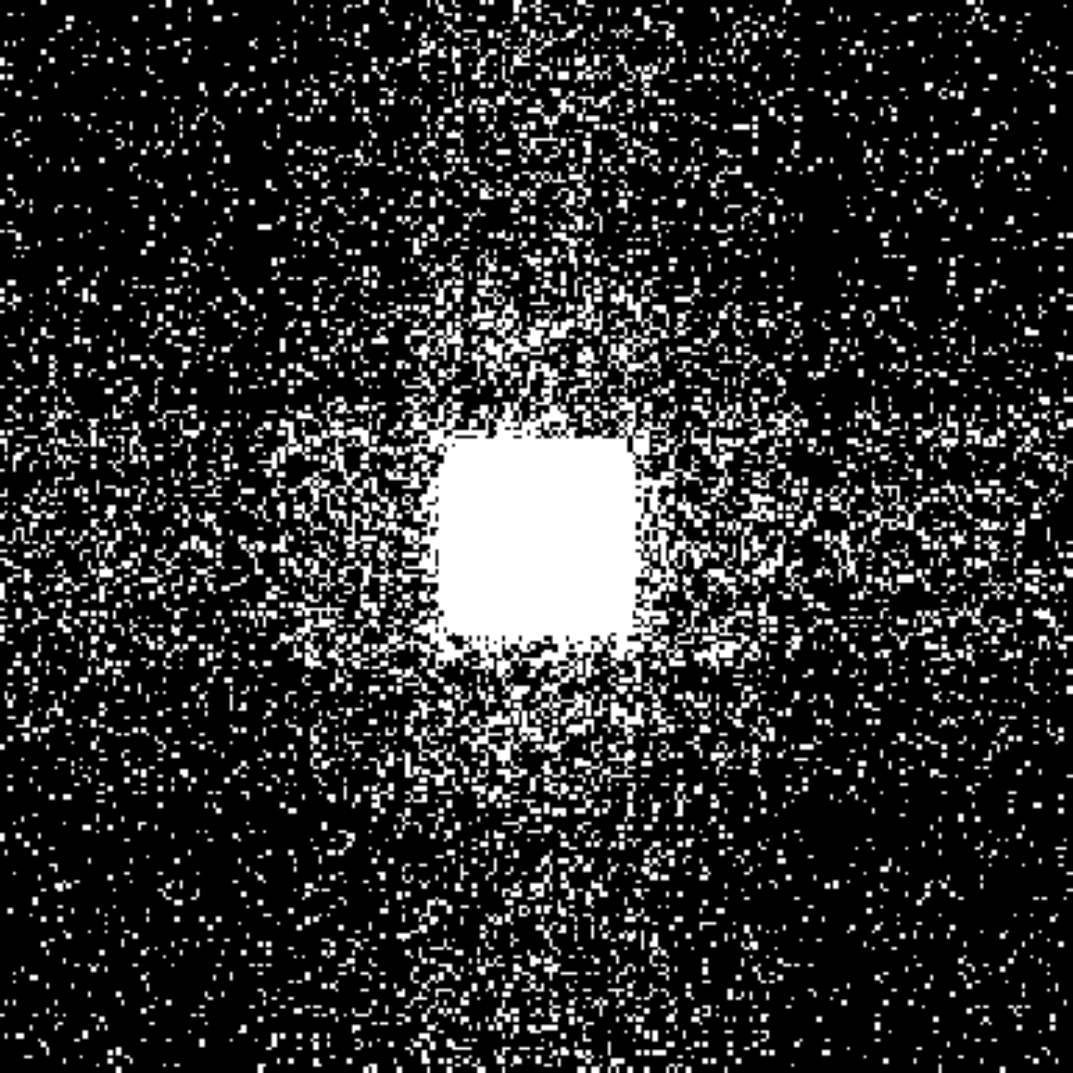}&
\includegraphics[height=.38 \linewidth]{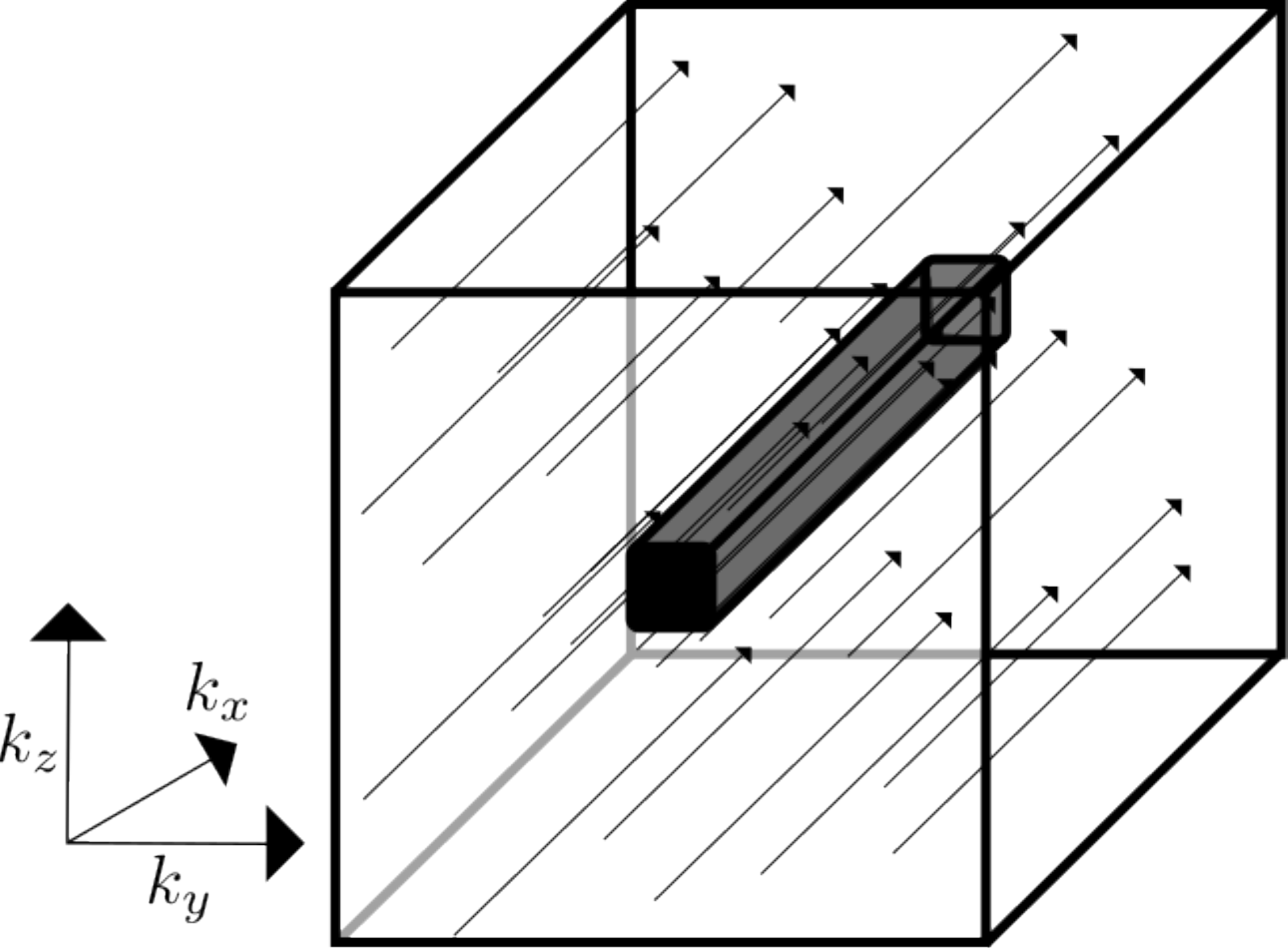}\\
\end{tabular}
\caption{\label{fig:schemes_lustig} Classical CS-MRI strategy. (a): 2D independent sampling according to a distribution $\pi$.  (b): measurements performed in the orthogonal readout direction.}
\end{center}
\end{figure}

Recent papers~\cite{polak2012grouped,Bigot13,boyer2013algorithm} have generalized CS results from independent drawing of isolated measurements to independent drawings of blocks of measurements. In these contributions, the blocks can be chosen arbitrarily and may thus represent continuous trajectories. Interestingly, these authors have provided closed form expressions for the optimal distribution on the block set. Nevertheless, this distribution is very challenging to compute in large scale problems. Moreover, the restriction to sets of admissible blocks reduces the versatility of many devices such as MRI and can therefore impact the image reconstruction quality. 

In many applications the length of the sampling trajectory is more critical than the number of acquired samples, therefore, finding the shortest pathway amongst random points drawn independently has been studied as a way of designing continuous trajectories~\cite{Willett,Wang12}. Since this problem is NP-hard, one usually resorts to a TSP solver to get a reasonable suboptimal trajectory. To the best of our knowledge, the only practical results obtained using the TSP were given by Wang~\textit{et al}~\cite{Wang12}. In this work, the authors did not investigate the relationship between the initial sample locations and the empirical measure of the TSP curve. In Section~\ref{sec:TSP}, it is shown that this relationship is crucial to make efficient TSP-based sampling schemes.

In what follows, we first introduce an original sampler based on random walks on the acquisition space and then analyse its asymptotic properties. Our theoretical investigations together with practical experiments allows us to show that the VDS mixing properties play a central role to control its efficiency. This then motivates the need for more global VDS schemes.

\subsection{Random walks on the acquisition space}
\label{sec:markov}

Perhaps the simplest way to transform independent random drawings into continuous random curves consists in performing random walks on the acquisition space.
Here, we discuss this approach and provide a brief analysis of its practical performance in the discrete setting. 
Through both experimental and theoretical results, we show that this technique is doomed to fail. However, we believe that this theoretical analysis provides a deep insight on what VDS properties characterize its performance. 

\DELETED{Let us introduce a time-homogeneous Markov chain $X=(X_n)_{n\in \mathbb{N}}$ on the set $\{1, \dots, n\}$ that represents the locations of possible measurements. 
The transition matrix of the Markov chain is denoted $\Pb\in \mathbb{R}^{n\times n}$.}
\ADDED{Let us consider a time-homogeneous Markov chain $X=(X_n)_{n\in \mathbb{N}}$ on the set $\{1, \dots, n\}$ and its transition matrix denoted $\Pb\in \mathbb{R}^{n\times n}$.}
If $X$ possesses a stationary distribution, i.e. a row vector $p\in \mathbb{R}^n$ such that $p = p \Pb$ then, by definition, $X$ is a $p$-variable density sampler. 

\subsubsection{Construction of the transition matrix $\Pb$}
A classical way to design a transition kernel ensuring that (i) $p$ is the stationary distribution of the chain and (ii) the trajectory defined by the chain is continuous, is the Metropolis algorithm~\cite{hastings1970montecarlo}. For a pixel/voxel position $i$ in the 2D/3D acquisition space, let us define by $\mathcal{N}(i)\subseteq \{1,\hdots,n\}$ its neighbourhood, \textit{i.e.} the set of possible measurement locations allowed when staying on position $i$. Let $|\mathcal{N}(i)|$ denote the cardinal of $\mathcal{N}(i)$ and define the \textit{proposal kernel} $\Pb^*$ as $\Pb^*_{i,j}=|\mathcal{N}(i)|^{-1}\delta_{j\in \mathcal{N}(i)}$. The Metropolis algorithm proceeds as follows:
\begin{enumerate}
\item from state $i$, draw a state $i^*$ with respect to the distribution $\Pb^*_{i,:}$.
\item accept the new state $i^*$ with probability:
\begin{equation}
q(i,i^*)=\min\left(1,\frac{p(i^*)\Pb^*_{i^*,i}}{p(i)\Pb^*_{i,i^*}} \right).
\end{equation}
Otherwise stay in state $i$.
\end{enumerate}
The transition matrix $\Pb$ can then be defined by $\Pb_{i,j}=q(i,j)\Pb^*_{i,j}$ for $i\neq j$. The diagonal is defined in a such a way that $\Pb$ is a stochastic matrix. It is easy to check that $p$ is an invariant distribution for this chain\footnote{If the neighboring system is such that the corresponding graph is connected, then the invariant distribution is unique.}.
It is worth noticing that if the chain is irreducible positive recurrent~(which is fulfilled if the graph is connected and the distribution $p$ positive), the ergodic theorem ensures that $X$ is a $p$-VDS.

Unfortunately, trajectories designed by this technique leave huge parts of the acquisition space unexplored~(see Fig.~\ref{fig:example_chains}~(a)). To circumvent this problem, we may allow the chain to \textit{jump} to independent locations over the acquisition space. Let $\tilde{\Pb}$ be the Markov kernel corresponding to independent drawing with respect to $p$, \textit{i.e.} $\tilde{\Pb}_{i,j}=p_j$ for all $1\leqslant i , j \leqslant n$. Define:
\begin{equation}
\label{eq:p_alpha}
\Pb^{(\alpha)}=(1-\alpha)\Pb + \alpha\tilde{\Pb} \qquad \forall \,\, 0 \leqslant \alpha \leqslant 1.
\end{equation}
Then the Markov chain associated with $\Pb^{(0)}$ corresponds to a continuous random walk, while the Markov chain associated with $\Pb^{(\alpha)}$, $\alpha>0$ has a nonzero \textit{jump} probability. This means that the trajectory is composed of continuous parts of average length $1/\alpha$.

\subsubsection{Example}
In Fig.~\ref{fig:example_chains}, we show illustrations in the 2D MRI context where the discrete $k$-space is of size $64\times 64$. On this domain, we set a distribution $p$ which  matches distribution $\pi$ in Fig.~\ref{fig:pi_opt}~(a). We perform a random walk on the acquisition space until $10\%$ of the coefficients are selected. In Fig.~\ref{fig:example_chains}(a), we set $\alpha=0$ whereas $\alpha=0.1$ in Fig.~\ref{fig:example_chains}(b). As expected, $\alpha=0$ leads to a sampling pattern where large parts of the $k$-space are left unvisited. The phenomenon is partially corrected using a nonzero value of $\alpha$.
\begin{figure}[!h]
\centering
\begin{tabular}{cc}
(a)&(b)\\
\includegraphics[width=.35\linewidth]{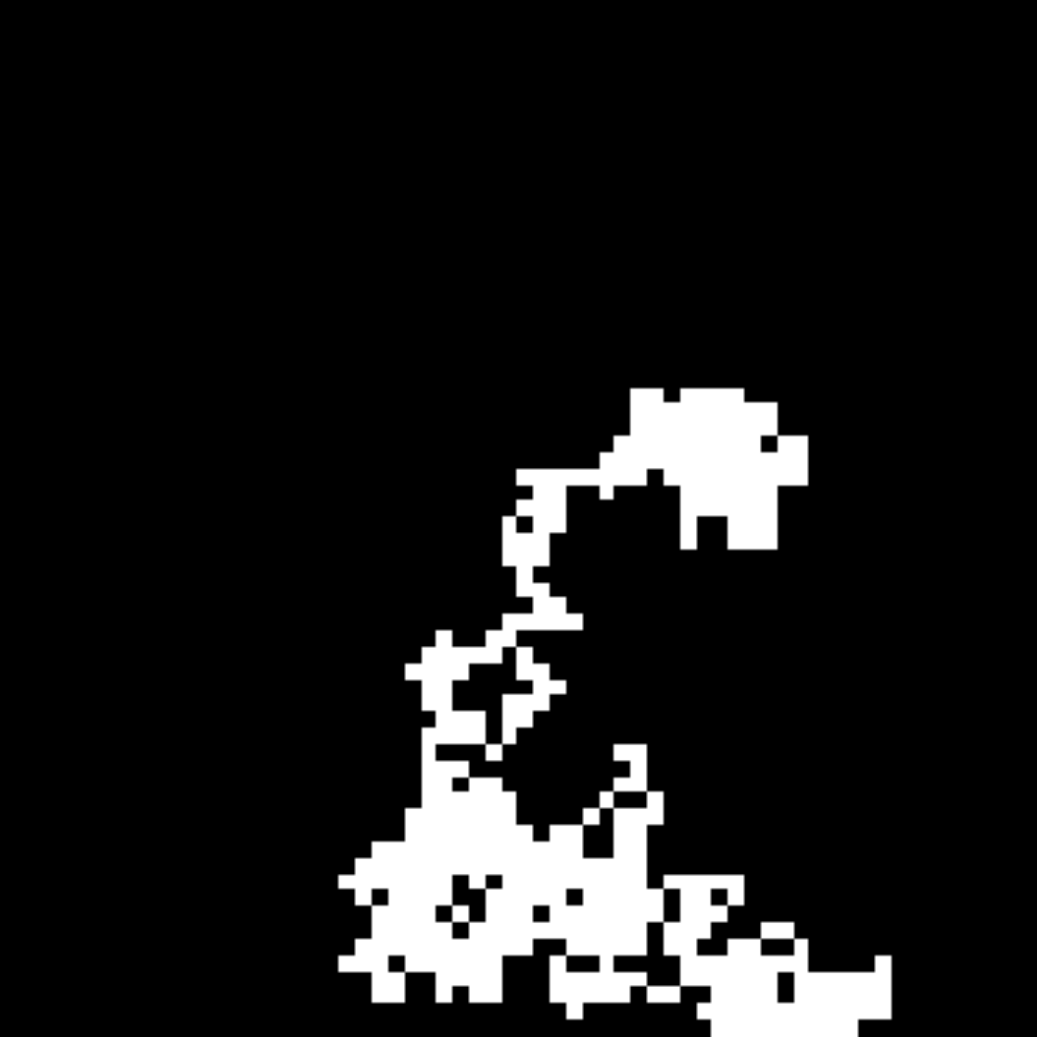}&
\includegraphics[width=.35\linewidth]{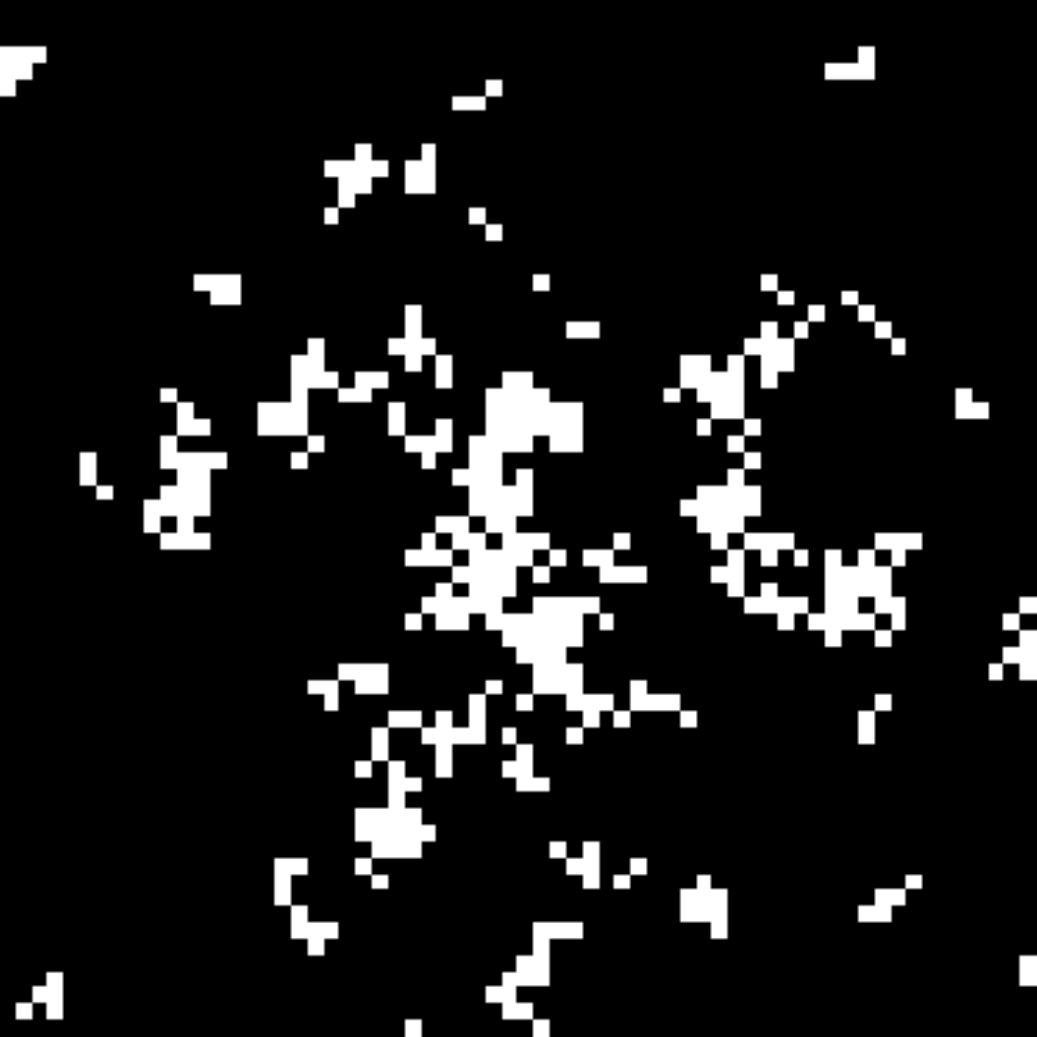}
\end{tabular}
\caption{\label{fig:example_chains} Example of sampling trajectories in 2D MRI. (a) (resp. (b)): 2D sampling scheme of the $k$-space with $\alpha=0$ (resp. $\alpha=0.1$). Drawings are performed until $10\%$ of the coefficients are selected~($m=0.1 n$).}
\end{figure}

\ADDEDONE{
\begin{remark}
Performing $N$ iterations of the Metropolis algorithm requires $O(N)$ computations leading to a fast sampling scheme design procedure. In our experiments, we iterate the algorithm until $m$ \emph{different} measurements are probed. Therefore, the number of iterations $N$ required increases non linearly with respect to $m$, and can be time consuming especially when $R=m/n$ is close to $1$. This is not a tough limitation of the method since the sampling scheme is computed off-line.
\end{remark}
}

\subsubsection{Compressed sensing results}
\REPLACED{To derive theoretical reconstruction guarantees in the context of continuous sampling, let us define the spectral gap of the Markov transition kernel $\Pb$ by $\epsilon(\Pb)= 1- \lambda_2(\Pb)$ where $\lambda_2(\Pb)$ is the second largest eigenvalue of $\Pb$. We further}{Let us} assume\footnote{By making this assumption, there is no burn-in period and the chain $X$ converges more rapidly to its stationary distribution $p$.} that $\Prob(X_1=i)=p_i$ and that $X_i$ is drawn using $\Pb$ as a transition matrix. 
The following result provides theoretical guarantees about the performance of the VDS $X$:
\begin{proposition}[see~{\cite{Chauffert13c}}] 
\label{prop:measurements_needed}
Let $\Omega\ADDED{:= X_1, \hdots, X_m}\subset\{1,\hdots,n\}$ denote a set of $m$ indexes selected using the Markov chain $X$.
\DELETED{Let $\Ab_\Omega$ be a measurement matrix designed by selecting the lines of $\Ab$ in $\Omega$:}

 Then, with probability $1-\eta$, if
\begin{equation}
\label{eq:measurements}
m \geqslant \frac{12}{\epsilon(\Pb)} K^2(\Ab,p) s^2 \log(2n^2/\eta),
\end{equation}
every  $s$-sparse signal $x$ is the unique solution of the $\ell_1$ minimization problem.
\end{proposition}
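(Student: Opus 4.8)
The plan is to adapt the standard RIPless-style / $\ell^1$-recovery argument (as used for Theorem~\ref{thm:rauhut}) to the case where the rows of $\Ab_\Omega$ are drawn along a Markov chain rather than independently. The key quantities to control are the deviation of the weighted empirical covariance matrix from the identity on the support $S$, and the behavior of the dual certificate construction; both reduce, in the i.i.d.\ case, to Bernstein-type concentration for sums of independent matrices/vectors. The heart of the matter here is therefore to replace these independent Bernstein inequalities by concentration inequalities for \emph{Markov chains}, and this is exactly where the spectral gap $\epsilon(\Pb)$ enters: for a reversible chain started from stationarity, sums of the form $\frac1m\sum_{k=1}^m g(X_k)$ concentrate around $p(g)$ at a rate governed by $1/\epsilon(\Pb)$ (a Markov-chain Bernstein/Chernoff bound, e.g.\ of Lezaud or Paulin type). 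Plugging such a bound in place of the independent one is what turns the $O(K s \log n)$ sample complexity of Theorem~\ref{thm:rauhut} into the $O\bigl(\epsilon(\Pb)^{-1} K^2 s^2 \log n\bigr)$ complexity stated here.

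First I would fix the support $S$ and, following the exact recovery condition for $\ell^1$ minimization, reduce the claim to two events: (a) the rescaled submatrix $\frac1m \Ab_\Omega^* D \Ab_\Omega$ restricted to columns in $S$ is well-conditioned (say, within $1/2$ of the identity in operator norm), where $D$ is the diagonal reweighting $1/p$; and (b) an inexact dual certificate exists, which one builds from the chain samples by the usual averaged construction and then controls via an $\ell^\infty$ bound on $\frac1m\sum_k (\text{off-support entries})$. Because $\Prob(X_1=i)=p_i$ and the chain is stationary, each summand $a_{X_k} a_{X_k}^*/p_{X_k}$ has the right mean (the identity on $\C^n$ up to the usual normalization), and the per-step operator-norm bound is exactly $K(\Ab,p)$ by definition~\eqref{eq:K}. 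Then for step (a) I would apply a matrix concentration bound for Markov chains (a spectral-gap version of the matrix Bernstein inequality), and for step (b) a scalar spectral-gap Bernstein bound to each coordinate followed by a union bound over the $n$ coordinates, which produces the $\log(2n^2/\eta)$ factor together with the probability split.

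The reason the exponent on $s$ and on $K$ doubles, compared with Theorem~\ref{thm:rauhut}, is worth isolating as its own step: Markov-chain Bernstein bounds typically lose a factor relative to the independent case because the effective variance proxy is inflated by $1/\epsilon(\Pb)$ and because the crude bound one uses for the sum of squared operator norms is $m K(\Ab,p) \cdot K(\Ab,p)$-type rather than the sharper variance seen in the independent analysis; carrying the conditioning-number requirement on an $s$-dimensional subspace then contributes the extra $s$. I would make this quantitative by writing out the variance term $\sigma^2$ in the Markov Bernstein inequality explicitly in terms of $K(\Ab,p)$, $s$ and $\epsilon(\Pb)$, and then solving the resulting inequality for $m$, which yields \eqref{eq:measurements} with the constant $12$ after absorbing numerical factors.

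The main obstacle I anticipate is purely the Markov-chain concentration input: one must be careful that the chain $\Pb = \Pb^{(\alpha)}$ (or whichever transition matrix is used) is reversible with respect to $p$ — the Metropolis construction guarantees this — so that $\lambda_2(\Pb)$ is real and the spectral-gap Bernstein/Chernoff machinery applies; if one only had a non-reversible chain, one would instead need the pseudo-spectral gap and the bookkeeping would be messier. A secondary technical point is the matrix (operator-valued) version of the Markov concentration bound, which is less standard than the scalar one; if a clean matrix version is unavailable, I would fall back on controlling the conditioning of the $s\times s$ support block entrywise or via an $\epsilon$-net on the sphere in $\C^S$, at the cost of the extra $s$ (and possibly $\log$) factors already present in the statement, so this does not change the final bound.
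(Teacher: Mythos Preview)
Your plan follows the RIPless/dual-certificate template of Theorem~\ref{thm:rauhut}, but the paper proves Proposition~\ref{prop:measurements_needed} by a completely different route. It invokes the Juditsky--Nemirovski verifiable sufficient condition: if there exists $\Yb$ with $\|I_n-\Yb^T\Ab_\Omega\|_\infty<\frac{1}{2s}$, then \emph{every} $s$-sparse vector is recovered. The candidate is $\Yb^T\Ab_\Omega=\Wb_m:=\frac{1}{m}\sum_{l=1}^m \Thetab_{X_l}$ with $\Thetab_i=a_ia_i^T/p_i$, so that $\Exp\,\Wb_m=I_n$. One then applies Lezaud's \emph{scalar} Chernoff bound for reversible Markov chains to each of the $n^2$ entries of $I_n-\Wb_m$ (after rescaling by $K(\Ab,p)$ so that $\|f\|_\infty\le 1$), followed by a union bound; this is exactly where the $\log(2n^2/\eta)$ comes from. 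Setting $t=\frac{1}{2s}$ in the resulting bound $\Prob(\|I_n-\Wb_m\|_\infty\ge t)\le n(n+1)\exp(-mt^2\epsilon(\Pb)/(12K^2))$ immediately produces the $K^2s^2/\epsilon(\Pb)$ factor in \eqref{eq:measurements}.

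This difference is not cosmetic. First, the Juditsky--Nemirovski condition is \emph{uniform} in $x$, which is what the proposition actually asserts (``every $s$-sparse signal''; see also Remark~\ref{rmk:SpectralGap} and the remark preceding it). Your dual-certificate scheme, as you describe it (fix the support $S$, build an inexact certificate), is intrinsically non-uniform in the style of Theorem~\ref{thm:rauhut}; without an additional argument you would not obtain the stated conclusion. Second, the paper's route needs only scalar Markov concentration applied entrywise to an $n\times n$ matrix, so your anticipated obstacle---a matrix Bernstein inequality for Markov chains---never arises. Third, your heuristic for the doubling of exponents (``inflated variance proxy'', ``crude bound on sums of squared operator norms'') is not how the $K^2s^2$ actually appears: in the paper it falls out transparently from the $t^2$ in Lezaud's exponent together with $t=1/(2s)$ and the $1/K$ rescaling needed to make the entrywise function bounded by $1$.
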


The proof of this proposition is given in Appendix~2.
Before going further, some remarks may be useful to explain this theoretical result.

\begin{remark}
Since the constant $ K^2(\Ab,p)$ appears in Eq.~\eqref{eq:measurements}, the optimal sampling distribution using Markov chains is also distribution $\pi$, as proven in Proposition~\ref{prop:theoopt}.
\end{remark}

\begin{remark}
In contrast to Theorem~\ref{thm:rauhut}, Proposition \ref{prop:measurements_needed} provides uniform results, i.e. results that hold for \textit{all} $s$-sparse vectors. 
\end{remark}

\begin{remark}
Ineq.~\ref{eq:measurements} suffers from the so-called \textit{quadratick bottleneck} (i.e. an $O(s^2\log(n))$ bound). 
It is likely that this bound can be improved to $O(s\log(n))$ by developing new concentration inequalities on matrix-valued Markov chains.
\end{remark}

\begin{remark}
\label{rmk:SpectralGap}
More importantly, it seems however unlikely to avoid the spectral gap  $O(1/\epsilon(P))$ using the standard mechanisms for proving compressed sensing results. Indeed, all concentration inequalities obtained so far on Markov chains~(see e.g. \cite{Lezaud98,Kargin07,paulin2012concentration}) depend on $1/\epsilon(P)$. 
The spectral gap satisfies $0 < \epsilon(P) \leqslant 1$ and corresponds to mixing properties of the chain. 
The closer the spectral gap to 1, the fastest ergodicity is achieved.
Roughly speaking, if $|i-j|>1/\epsilon(P)$ then $X_i$ and $X_j$ are almost independent random variables. 
Unfortunately, the spectral gap usually depends on the dimension $n$~\cite{diaconis1991geometric}.
In our example, it can be shown using Cheeger's inequality that $\epsilon(P)=O\left( n^{-\frac{1}{d}} \right)$ if the stationary distribution $\pi$ is uniform (see Appendix~3). 
This basically means that the number of measurements necessary to accurately reconstruct $x$ could be as large as $O(s n^{1/d} \log(n))$, which strongly limits the interest of this CS approach. 
The only way to lower this number consists in frequently jumping since Weyl's theorem~\cite{Horn91} ensures that $\epsilon(P^{(\alpha)})>\alpha$.
\end{remark}

To sum up, the main drawback of random walks lies in their inability to cover the acquisition space quickly since they are based on local considerations. 
Keeping this in mind, it makes sense to focus on more global displacement strategies that allow a faster exploration of the whole acquisition domain.
In the next section, we thus introduce this global sampling alternative based on TSP-solver.
Our main contribution is the derivation of the link between a prescribed a priori sampling density and the distribution of samples located on the TSP solution so as to eventually get a VDS.

\section{Travelling salesman-based VDS}
\label{sec:TSP}
In order to design continuous trajectories, we may think of picking points at random and join them using a travelling salesman problem~(TSP) solver. Hereafter, we show how to draw the initial points in order to reach a target distribution $p$. In this section, the probability distribution $p$ is assumed to be a density. 

\subsection{Introduction}
The naive idea would consist in drawing some points according to the distribution $p$ and joining them using a TSP solver. Unfortunately, the trajectory which results from joining all samples does not fit the distribution $p$, as shown in Fig.~\ref{fig:TSP2}(b)-(d). To bring evidence to this observation, we performed a Monte Carlo study, where we drew one thousand sampling schemes, each one designed by solving the TSP on a set of independent random samples. We notice in Fig.~\ref{fig:TSP2}~(d) that the empirical distribution of the points along the TSP curve, hereafter termed the final distribution, departs from the original distribution $p$.
A simple intuition can be given to explain this discrepancy between the initial and final distributions in a $d$-dimensional acquisition space. Consider a small subset of the acquisition space $\omega$. In $\omega$, the number of points is proportional to $p$. The typical distance between two neighbors in $\omega$ is then proportional to $p^{-1/d}$. Therefore, the local length of the trajectory in $\omega$ is proportional to $p p^{-1/d} = p^{1-1/d} \neq p$. In what follows, we will show that the empirical measure of the TSP solution converges to a measure proportional to $p^{1-1/d}$.

\subsection{\REPLACED{Notation and d}{D}efinitions}

We shall work on the hypercube $\setOmega = [0,1]^d$ with $d \geqslant 2$. \DELETED{Let $h\in \N$. The set $\setOmega$ will be partitioned in $h^d$ congruent hypercubes $(\omega_i)_{i\in I}$ of edge length $1/h$.}
In what follows, $\left\{ x_i \right\} _{i\in \mathbb{N}^* }$ denotes a sequence of points in the hypercube $\setOmega$, independently drawn from a density $p:\setOmega \mapsto \R_+$. The set of the first $N$ points is denoted $X_N = \left\{ x_i \right\} _{i\leqslant N}$. 
\DELETED{For a set of points $F$, we consider the solution to the TSP, that is the shortest Hamiltonian path between those points.\\}

\DELETED{We denote $T(F)$ its length. For any set $R\subseteq \setOmega $ we define $T(F, R) = T(F \cap R)$.}

\DELETED{We also introduce $C(X_N, \setOmega )$ to denote the optimal curve itself, and $\gamma _N: [0,1] \to \setOmega $ the function that parameterizes $C(X_N, \setOmega )$ by moving along it at constant speed $T(X_N, \setOmega )$.}
\ADDED{Using the definitions introduced in Tab.~\ref{Tab:not}, we introduce $\gamma _N: [0,1] \to \setOmega $ the function that parameterizes $C(X_N)$ by moving along it at constant speed $T(X_N, \setOmega )$.}
\DELETED{The Lebesgue measure on an interval $[0,1]$ is denoted $\lambda _{[0,1]}$.} Then, the \textit{distribution of the TSP solution} reads as follows:

\begin{definition}
    The distribution of the TSP solution is denoted $\tilde{P}_N$ and defined, for any Borelian $B$ in $\setOmega $ by:
    \begin{align*}
        \tilde{P}_N(B) & = \lambda _{[0,1]} \left( \gamma _N^{-1} (B) \right).
    \end{align*}
\end{definition}

\begin{remark}
    The distribution $\tilde{P}_N$ is defined for fixed $X_N$. It makes no reference to the stochastic component of $X_N$.
\end{remark}
\begin{remark}
A more intuitive definition of $\tilde{P}_N$ can be given if we introduce other tools. 
For a subset $\omega \subseteq \setOmega $, we denote the length of $\REPLACED{C(X_N, \setOmega )}{C(X_N)}\cap \omega$ as $T_{|\omega }(X_N, \setOmega )=T(X_N, \setOmega ) \tilde{P}_N(\omega )$. Using this definition, it follows that:
\begin{equation}
\label{eq:defalternative}
\tilde{P}_N(\omega) = \frac{T_{|\omega}(X_N, \setOmega )}{T(X_N,\setOmega)}, \ \forall \omega. 
\end{equation}
Then $\tilde{P}_N(\omega)$ is the relative length of the curve inside $\omega$.
\end{remark}

\subsection{Main results}
\label{part:thmTSP}
Our main theoretical result introduced in~\cite{Chauffert13b} reads as follows: \\

\begin{theorem}
    \label{thm:convergence_proba}
Define the density $\displaystyle \tilde p = \frac{p^{(d-1)/d}}{\int_{\setOmega} p^{(d-1)/d}(x) dx}$ where $p$ is a density defined on $\setOmega$. Then almost surely with respect to the law $p^{\otimes \mathbb{N} }$ of the random points sequence $\{x_i\}_{i\in \mathbb{N} ^*}$ in $\setOmega$, the distribution $\tilde{P}_N$ converges in distribution to $\tilde p$:
    \begin{align}
        \label{convForm}
        \tilde P_N & \stackrel{(d)}{\rightarrow} \tilde p & \mbox{$p^{\otimes \mathbb{N}}$-a.s.}
    \end{align}
\end{theorem}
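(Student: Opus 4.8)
The plan is to reduce the statement to classical asymptotics for the TSP functional due to Beardwood--Halton--Hammersley (BHH) and its boundary/subadditive refinements (Steele, Rhee, Redmond--Yukich). The key quantitative input is: for points $\{x_i\}$ drawn i.i.d.\ from a density $q$ on a cube $Q$ of side $\ell$, the shortest Hamiltonian path length satisfies $T(X_N,Q)/N^{(d-1)/d} \to \beta_d \int_Q q(x)^{(d-1)/d}\,dx$ almost surely, where $\beta_d>0$ is a dimensional constant; moreover this holds \emph{uniformly enough} over subcubes that one can localize it. I would first fix a dyadic (or side-$1/h$) partition of $\setOmega=[0,1]^d$ into congruent subcubes $(\omega_i)_{i\in I}$, $|I|=h^d$, exactly as in the deleted notation. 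On each $\omega_i$ the empirical number of points is $N_i \approx N\,p(\omega_i)$ by the strong law, and applying BHH inside $\omega_i$ gives $T(X_N,\omega_i) \sim \beta_d\, N^{(d-1)/d} \int_{\omega_i} p^{(d-1)/d}$ almost surely (the side length $1/h$ only rescales things and is absorbed because $T$ scales linearly in length).

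Next I would control the discrepancy between $T(X_N,\setOmega)$ restricted-and-summed over the partition, $\sum_i T(X_N,\omega_i)$, and the true quantity $T_{|\omega}(X_N,\setOmega)$ appearing in the definition of $\tilde P_N$ via \eqref{eq:defalternative}. These differ only through the edges of the global optimal path that cross cube boundaries: each such crossing edge contributes at most $O(1/h)$ (diameter of a subcube) once we are at fine enough scale, and the number of boundary-crossing edges of a near-optimal TSP tour on $N$ points in a side-$1/h$ cube is $O((N/h^d)^{(d-1)/d}\cdot h)$ per face, hence the total boundary error over all cubes is lower order than the main term $N^{(d-1)/d}$ as first $N\to\infty$ and then $h\to\infty$. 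This is the standard superadditive/subadditive sandwiching: the boundary-rooted TSP functional $T_B$ is superadditive and $T_B \le T \le T_B + (\text{boundary correction})$, and both normalized versions converge to the same BHH constant.

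Putting these together: for every fixed finite union $\omega$ of partition cubes,
\begin{align*}
\tilde P_N(\omega) = \frac{T_{|\omega}(X_N,\setOmega)}{T(X_N,\setOmega)} \longrightarrow \frac{\int_{\omega} p^{(d-1)/d}}{\int_{\setOmega} p^{(d-1)/d}} = \tilde p(\omega) \qquad p^{\otimes \mathbb{N}}\text{-a.s.}
\end{align*}
A diagonal argument over a countable generating family of such finite unions (all dyadic boxes), each holding on a full-measure event, yields a single full-measure event on which $\tilde P_N(B)\to \tilde p(B)$ for all dyadic boxes $B$; since these determine convergence in distribution (they are a convergence-determining class, and $\tilde p$ charges no hyperplane), we conclude $\tilde P_N \stackrel{(d)}{\to} \tilde p$ almost surely. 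Finally I would note that the hypothesis $d\ge 2$ is used because for $d=1$ the exponent $(d-1)/d=0$ degenerates (the TSP length is essentially the diameter, not a BHH-type integral), and that $p$ being a genuine density — bounded regions of positivity, integrability of $p^{(d-1)/d}$, which holds since $p\in L^1$ and $(d-1)/d<1$ so Jensen gives finiteness on the bounded set $\setOmega$ — is what makes the localized BHH applicable even where $p$ vanishes or blows up mildly.

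**Main obstacle.** The delicate point is not the almost-sure convergence on each cube (that is textbook BHH) but the \emph{uniform} handling of the boundary corrections simultaneously across all $h^d$ cubes together with the order of limits: one must show the aggregate boundary error is $o(N^{(d-1)/d})$ after taking $N\to\infty$ for fixed $h$ and only then $h\to\infty$, and that the limiting ratio is genuinely independent of the partition. Equivalently, one needs the normalized boundary-functional lower bound and the subadditive upper bound to pinch to the same limit; this is exactly where Redmond--Yukich-style smoothness/regularity of the Euclidean TSP functional is invoked, and it is the part a careful write-up must not gloss over.
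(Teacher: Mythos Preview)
Your proposal is correct and follows essentially the same route as the paper: apply BHH on a grid of congruent subcubes, sandwich $T_{|\omega_i}(X_N,\setOmega)$ between the boundary-TSP functional $T_B(X_N,\omega_i)$ and $T(X_N,\omega_i)$ using super/subadditivity to obtain $\tilde P_N(\omega_i)\to\tilde p(\omega_i)$ almost surely on each cube, and then upgrade to weak convergence (the paper does this last step via a Prokhorov-metric coupling rather than your convergence-determining class of dyadic boxes, but both arguments are standard and equivalent here). Your ``main obstacle'' is slightly overstated: the relevant boundary corrections the paper invokes are $|T-T_B|=O(N^{(d-2)/(d-1)})$ and $\bigl|T(X_N,\setOmega)-\sum_i T(X_N,\omega_i)\bigr|=O(N^{(d-2)/(d-1)})$, which are already $o(N^{(d-1)/d})$ for every \emph{fixed} $h$, so no $h\to\infty$ limit is needed to get exact cube-wise convergence --- the mesh $h$ only enters when passing from cubes to arbitrary Borel sets in the final step.
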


The proof of the theorem is given in Appendix~4. 
\begin{remark} The $TSP$ solution does not define as such a VDS, since the underlying process is finite in time. \DELETED{However, the sequence of random measures $\{\tilde P_N\}_{N\in \mathbb{N}}$ almost surely converges weakly to $\tilde p$ as $N\to \infty$.} 
\ADDED{Nevertheless, since $\tilde P_N$ is the occupation measure of $\gamma_N$, the following result holds:
\begin{corollary}
$(\gamma_N)_{N\in\N}$ is a generalized $\tilde p$ VDS.
\end{corollary}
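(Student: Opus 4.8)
The plan is to recognize that this corollary is a direct unpacking of Theorem~\ref{thm:convergence_proba} together with Definition~\ref{def:gVDS}, once one identifies $\tilde P_N$ with the occupation measure of $\gamma_N$. First I would fix a realization of the random sequence $\{x_i\}_{i\in\N^*}$, so that $X_N$ and hence $\gamma_N:[0,1]\to\setOmega$ are deterministic. Since $\gamma_N$ is continuous (it parameterizes a polygonal path at constant speed), it is Borel measurable, so $\gamma_N^{-1}(B)$ is Borel for every Borel $B\subseteq\setOmega$, and the definition $\tilde P_N(B)=\lambda_{[0,1]}\bigl(\gamma_N^{-1}(B)\bigr)$ says precisely that $\tilde P_N$ is the pushforward (image) measure $(\gamma_N)_\ast\lambda_{[0,1]}$. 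The transfer theorem for image measures then yields, for every continuous bounded $f:\setOmega\to\R$,
$$\int_{0}^{1} f\bigl(\gamma_N(t)\bigr)\,\dd t \;=\; \int_{\setOmega} f\,\dd\tilde P_N \;=\; \tilde P_N(f).$$
In other words, with the parameterization over $[0,1]$ (equivalently, writing $T_N=1$), $\tilde P_N$ is exactly the occupation measure of $\{\gamma_N(t)\}_{0\leqslant t\leqslant 1}$; this is just the rigorous version of the intuitive identity~\eqref{eq:defalternative}. If one prefers the arc-length parameterization on $[0,T(X_N,\setOmega)]$, the substitution $s=T(X_N,\setOmega)\,t$ shows $\tfrac{1}{T(X_N,\setOmega)}\int_0^{T(X_N,\setOmega)} f\,\dd s=\int_0^1 f(\gamma_N(t))\,\dd t=\tilde P_N(f)$, so the choice of parameterization is immaterial.

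Next I would invoke Theorem~\ref{thm:convergence_proba}: almost surely with respect to $p^{\otimes\N}$ one has $\tilde P_N\stackrel{(d)}{\to}\tilde p$. By the very definition of convergence in distribution of probability measures, this means $\tilde P_N(f)\to\tilde p(f)$ for every continuous bounded $f$. Combining this with the identity of the previous paragraph, almost surely
$$\frac{1}{T_N}\int_{0}^{T_N} f\bigl(\gamma_N(t)\bigr)\,\dd t \;=\; \tilde P_N(f)\;\longrightarrow\;\tilde p(f)\qquad\text{as } N\to\infty,$$
for all continuous bounded $f$. This is exactly the condition in Definition~\ref{def:gVDS}, so $(\gamma_N)_{N\in\N}$ is a generalized $\tilde p$-VDS.

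I expect no real obstacle here: all the analytic content is already in Theorem~\ref{thm:convergence_proba}, and the corollary only requires matching definitions — namely that the occupation measure of $\gamma_N$ is the image measure $\tilde P_N$, and that ``convergence in distribution'' and the convergence of occupation measures demanded by Definition~\ref{def:gVDS} are the same statement tested against the same class of test functions. The only points deserving a word of care are that the almost-sure statement is understood with respect to the law $p^{\otimes\N}$ of the draw of the ``cities'' (which is precisely the randomness referred to by the ``a.s.'' in Definition~\ref{def:gVDS}), and that, for each fixed $N$, $\tilde P_N$ is a genuine probability measure on $\setOmega$ because $\lambda_{[0,1]}([0,1])=1$.
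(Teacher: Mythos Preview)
Your proposal is correct and follows exactly the paper's approach: the paper states the corollary as an immediate consequence of Theorem~\ref{thm:convergence_proba} together with the observation that $\tilde P_N$ is the occupation measure of $\gamma_N$. You have simply spelled out in detail what the paper leaves implicit, namely the identification $\tilde P_N=(\gamma_N)_\ast\lambda_{[0,1]}$ via the transfer theorem and the fact that weak convergence of $\tilde P_N$ to $\tilde p$ is precisely the condition in Definition~\ref{def:gVDS}.
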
}
\end{remark}

\begin{remark} The theorem indicates that if we want to reach distribution $p$ in 2D, we have to draw the initial points with respect to a distribution proportional to $p^2$, and to $p^{3/2}$ in 3D. Akin to the previous Monte Carlo study illustrating the behavior of the naive approach in Fig.~\ref{fig:TSP2} (top row), we repeated the same procedure after having taken this result into account. The results are presented in Fig.~\ref{fig:TSP2}(e)-(g), in which it is shown that the final distribution now closely matches the original one~(compare Fig.~\ref{fig:TSP2}(g) with Fig.~\ref{fig:TSP2}(a)).

\begin{figure}[!h]
\centering
\begin{tabular}{cccc}
&  (b)&(c)&(d) \\
\multirow{4}*{\begin{minipage}{.3 \linewidth} \vspace{-.4 \linewidth}\begin{tabular}{cc}(a) & \multirow{2}*{\includegraphics[height=.83\linewidth]{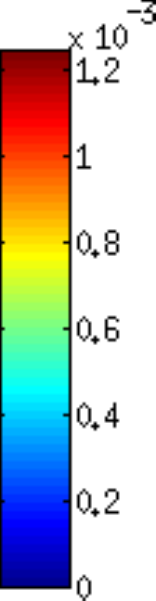}} \\ \includegraphics[width=.66\linewidth]{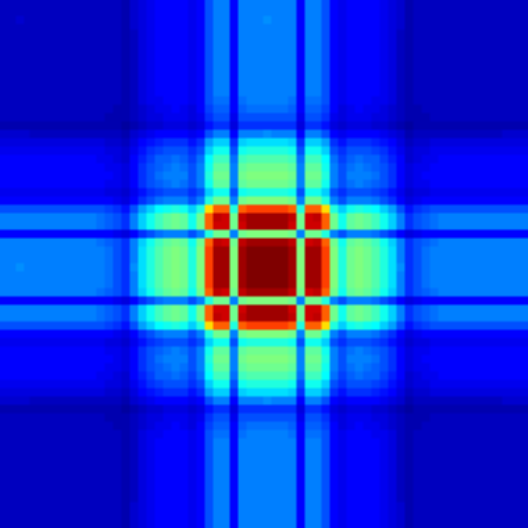} & \end{tabular}\end{minipage}}&
\includegraphics[width=.2\linewidth]{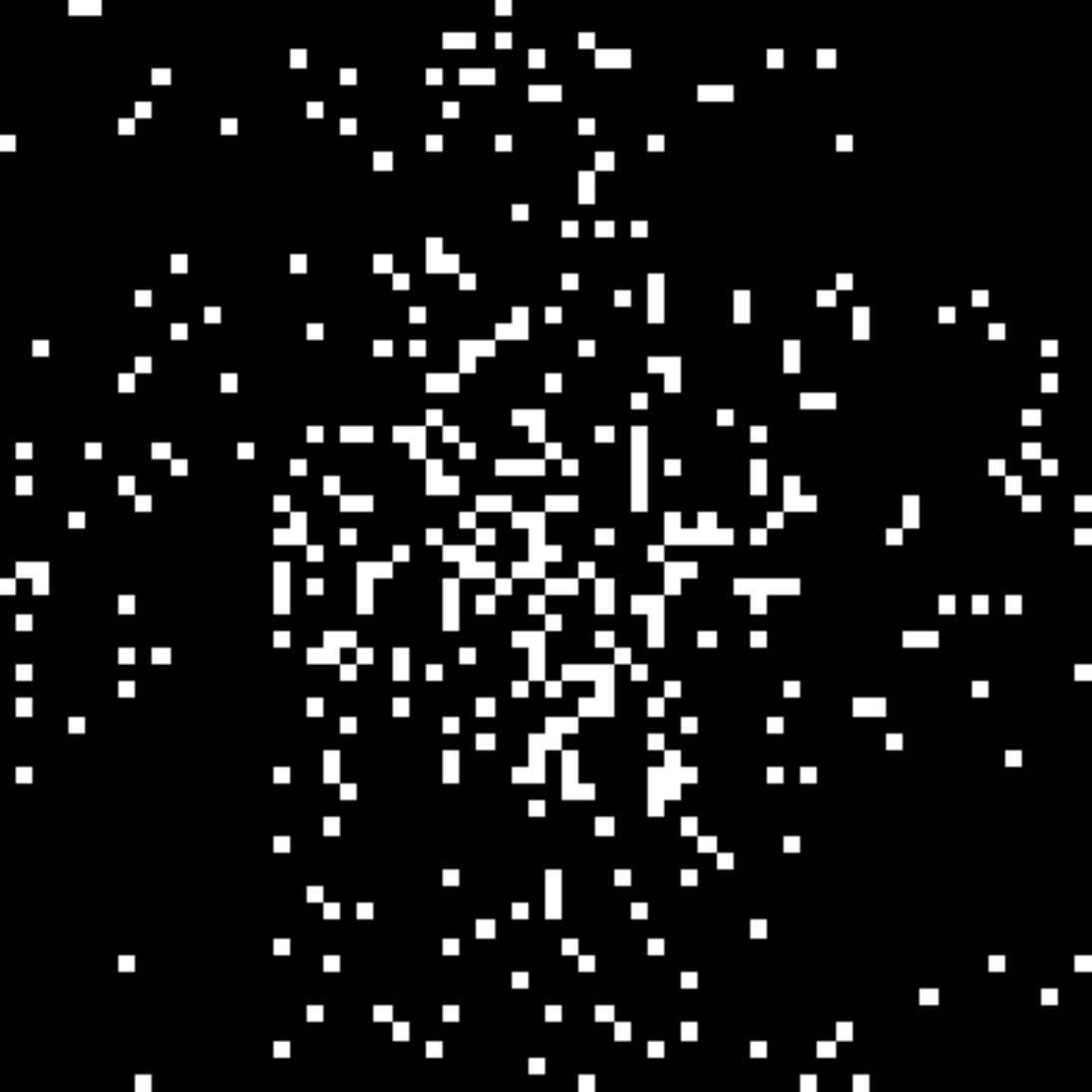}&
\includegraphics[width=.2\linewidth]{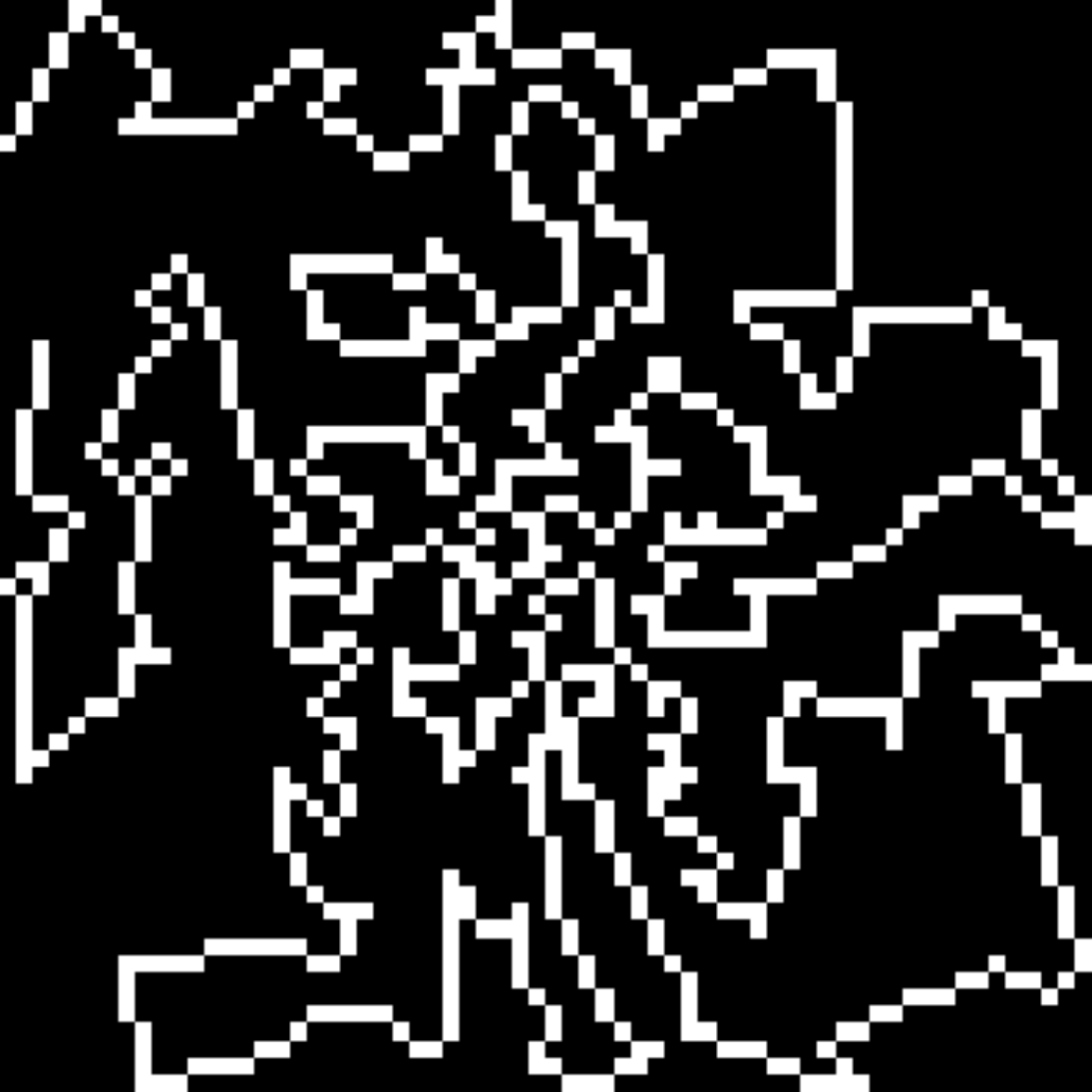}&
\includegraphics[width=.2\linewidth]{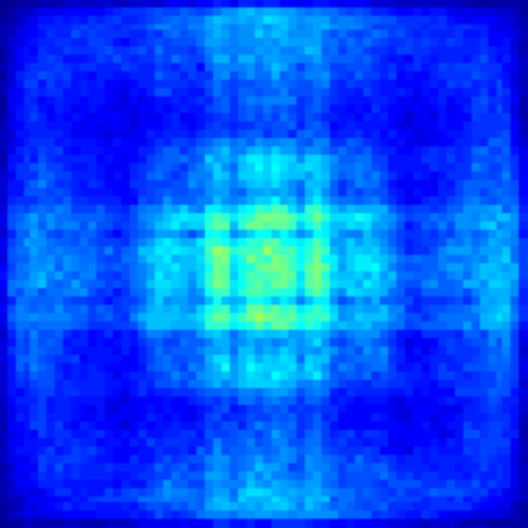}\\
& (e)&(f)&(g) \\
& 
\includegraphics[width=.2\linewidth]{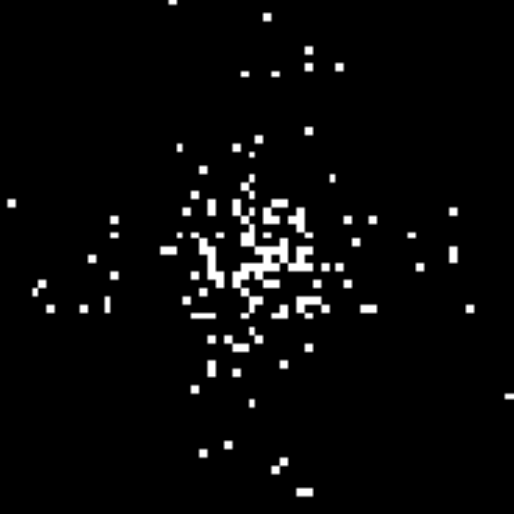}&
\includegraphics[width=.2\linewidth]{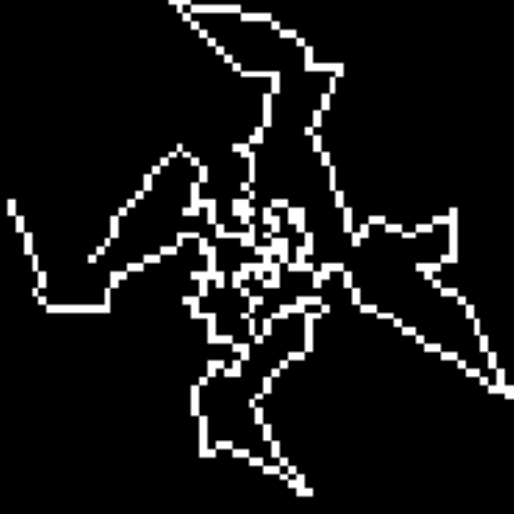}&
\includegraphics[width=.2\linewidth]{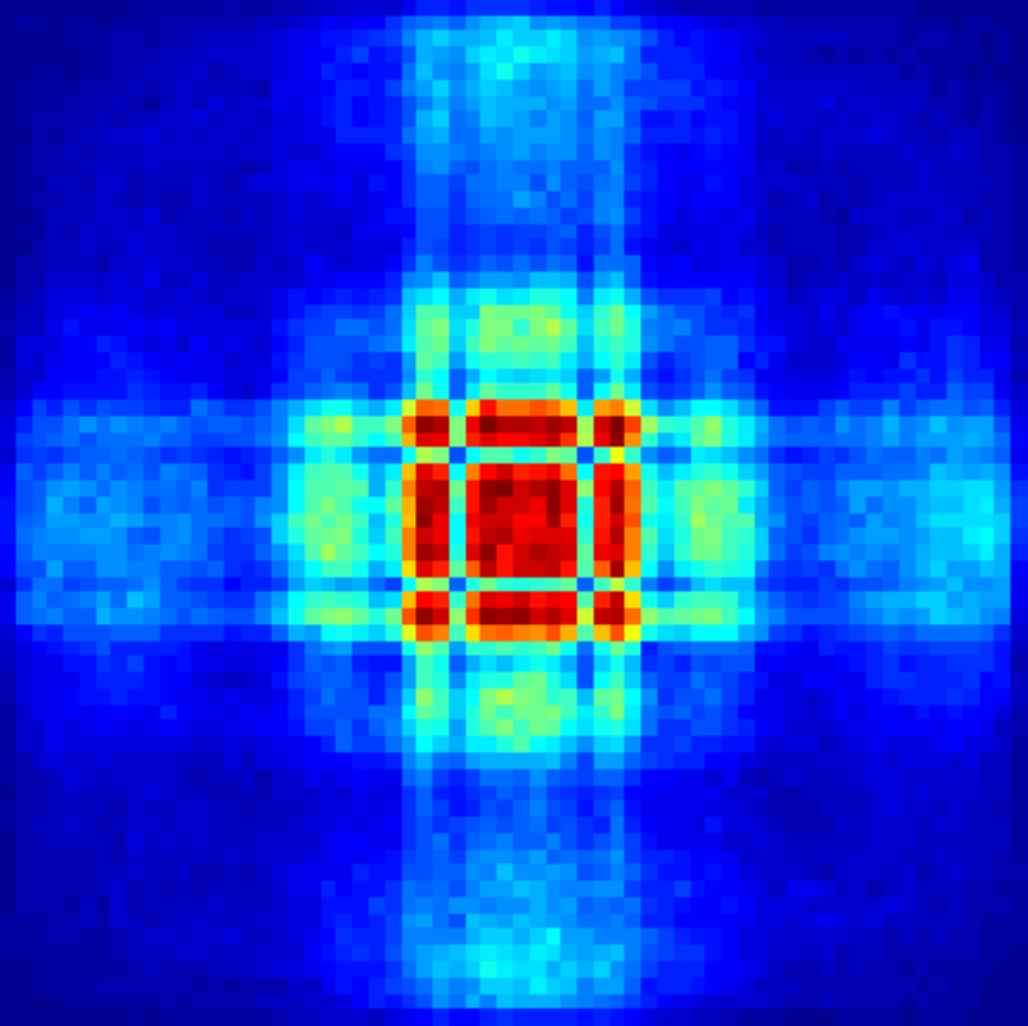}\\
\end{tabular}
\caption{\label{fig:TSP2} Illustration of the TSP-based sampling scheme to reach distribution $\pi$. (a): distribution $\pi$. (b) (resp. (e)): independent drawing of points from distribution $\pi$ (resp. $\propto \pi^2$). (c) (resp (f)): solution of the TSP amongst points of (b) (resp. (e)) . (d) and (g): Monte Carlo study: average scheme over one thousand drawings of sampling schemes, with the same color scale as in~(a).} 
\end{figure}
\end{remark}

\begin{remark}
Contrarily to the Markov chain approach for which we derived compressed sensing results in Proposition \ref{prop:measurements_needed}, the TSP approach proposed here is mostly heuristic and based on the idea that the TSP solution curve covers the space rapidly. An argument supporting this idea is the fact that in 2D, the TSP curve $\REPLACED{C(X_N, \setOmega )}{C(X_N)}$ does not self-intersect. This property is clearly lacking for random walks.
\end{remark}

\begin{remark}
One of the drawback of this approach is the TSP's NP-hardness. We believe that this is not a real problem. \REPLACED{First, the computation is done off-line from the acquisition procedure. The computation time is thus not a real limitation. Moreover, }{Indeed, }there now exist very efficient approximate solvers such as the Concorde solver~\cite{Concorde06}. It finds an approximate solution with $10^5$ cities from a few seconds to a few hours depending on the required accuracy of the solution. \ADDED{The computational time of the approximate solution is not a real limitation since the computation is done off-line from the acquisition procedure.}
Moreover, many solvers are actually designed in such a way that their solution also fulfil Theorem.~\ref{thm:convergence_proba}. \ADDEDONE{For example, in 2D, to reach a sampling factor of $R=5$ on a $256\times 256$ image, one need $N\simeq10^4$ cities, and an approximate solution is obtained in $142$s. In 3D, for a $256\times 256 \times 256$ image, $N\simeq9 \ 10^5$ and an approximate solution is obtained in about 4 hours. In each case the solutions seem to be correctly approximated. In particular they do not self-intersect in 2D.}
\end{remark}

\section{Experimental results in MRI}
\label{sec:results}
In this section, we focus on the reconstruction results by minimizing the $\ell_1$ problem~\eqref{eq:minl1} with a simple MRI model: $\Ab=\Fb^* \Psib$, where $\Psib$ denote the inverse Symmlet-10 transform\footnote{We focused on $\ell_1$ reconstruction since it is central in the CS theory. The reconstruction quality can be improved by considering more \textit{a priori} knowledge on the image. Moreover we considered a simple MRI model, but our method can be extended to parallel MRI~\cite{Pruessmann99}, or spread-spectrum techniques~\cite{Haldar11,Puy12}.}. The solution is computed using Douglas-Rachford's algorithm~\cite{Combettes11b}. We consider an MR image of size $256\times 256 \times 256$ as a reference, and perform reconstruction for different discrete sampling strategies. Every sampling scheme was regridded using a nearest neighbour approach to avoid data interpolation.\footnote{\ADDEDTHREE{We provide Matlab codes to reproduce the proposed experiments here: http://chauffertn.free.fr/codes.html}} \\

\subsection{2D-MRI}

In 2D, we focused on a single slice of the MR image and considered its discrete Fourier transform as the set of possible measurements. First, we found the best made a comparison of independent drawings with respect to various distributions in order to find heuristically the best sampling density. Then we explored the performance of the two proposed methods to design continuous schemes: random walks and Travelling Salesman Problem. We also compared our solution to classical MRI sampling schemes. In every sampling schemes, the number of measurements is the same and equals $20\%$ of the number of pixels in the image, so that the \REPLACEDTHREE{\emph{acceleration factor}}{\emph{sampling factor}}
~$R$ is equal to $5$.
In cases where the sampling strategy is based on randomness (VDS, random walks, TSP...), we performed a Monte Carlo study by generating 100 sampling patterns for each variable density sampler.

\subsubsection{Variable density sampling using independent drawings}
Here, we assessed the impact of changing the sampling distribution using independent drawings. In all experiments, we sampled the Fourier space center deterministically as shown on Figure \ref{fig:cont2D}.

\begin{table}[!h]
\begin{center}
\caption{\label{tab:res2D_indep} Quality of reconstruction results in terms of PSNR for 2D sampling with variable density independent drawings.}
\begin{tabular}{l|c|c|c|c|c|c|c|}
\cline{2-8}
	  & \multirow{2}*{$\pi$} & \multicolumn{6}{|c|}{polynomial decay: $(k_x^2+k_y^2)^{-d/2}$}  \\
 \cline{3-8}
  	  &	  & $d=1$ & $\mathbf{d=2}$ & $d=3$ & $d=4$ & $d=5$ & $d=6$ \\ 
\hline
\multicolumn{1}{|l|}{mean PSNR (dB)} & 35.6  & \textbf{36.4} & \textbf{36.4} & 36.3 & 36.0 & 35.5 & 35.2 \\ \hline
\multicolumn{1}{|l|}{\ADDEDTHREE{std dev.}}       &  $<0.1$    &  $<0.1$   & $<0.1$    &  $<0.1$   &  $<0.1$   &  $<0.1$  & $<0.1$   \\ \hline
\end{tabular}
\end{center}
\end{table}

Table \ref{tab:res2D_indep} shows that the theoretically-driven optimal distribution $\pi$ is outperformed by the best heuristics. Amongst the latter, the distribution leading to the best reconstruction quality decays as $1/|k|^2$, which is the distribution used by Krahmer and Ward~\cite{Krahmer12} as an approximation of $\pi$ for Haar wavelets. 
The standard deviation of the PSNR is negligible compared to the mean values and for a given distribution, each reconstrucion PSNR equals its average value at the precision used in Tab.~\ref{tab:res2D_indep}.

\subsubsection{Continuous VDS}
In this part we compared various variable density samplers:
\begin{itemize}
 \item Random walks with a stationary distribution proportional to $1/|k|^2$ and different average chain lengths of $1/\alpha$,
 \item TSP-based sampling with distributions proportional to $1/|k|^2$ and $\pi$,
 \item Classical MRI sampling strategies such as spiral, radial and radial with random angles. \ADDED{The choice of the spiral follows Example~\ref{ex:spiral}: the spiral is parameterized by $s: [0,T] \rightarrow \R^2$, $\theta  \mapsto r(\theta/T)\begin{pmatrix}\cos \theta \\ \sin \theta \end{pmatrix}$ where $r(t):=\frac{r(0)r(1)}{r(1)-t(r(1)-r(0))}$, so as the spiral density decays as $1/|k|^2$.}
\end{itemize}
The sampling schemes are presented in Fig.~\ref{fig:cont2D} and the reconstruction results in Tab.~\ref{tab:cont2D}.

\begin{figure}[!h]
\begin{center}
\begin{tabular}{cccc}
(a)&(b)&(c)&(d) \\
\includegraphics[width=.2\linewidth]{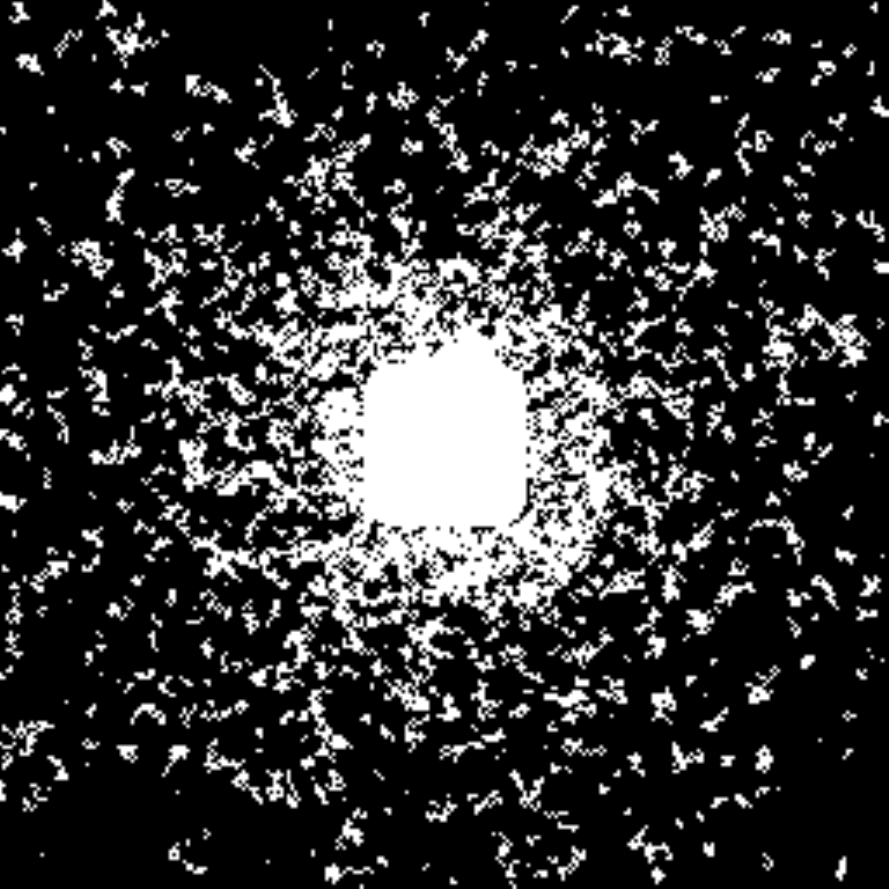}&
\includegraphics[width=.2\linewidth]{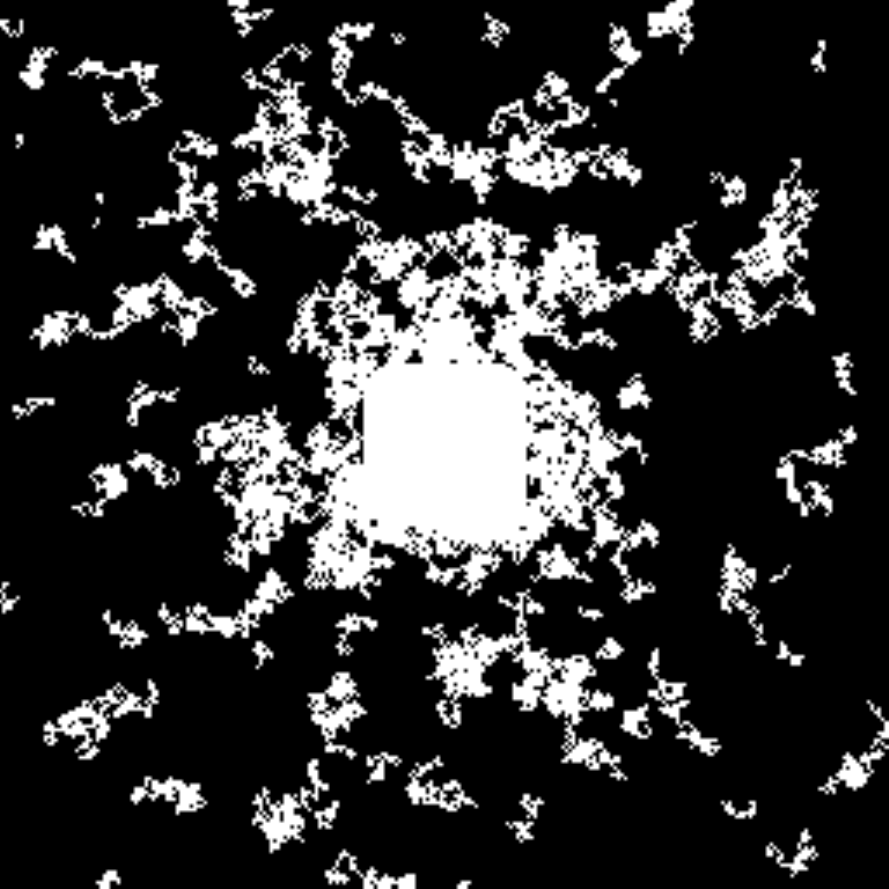}&
\includegraphics[width=.2\linewidth]{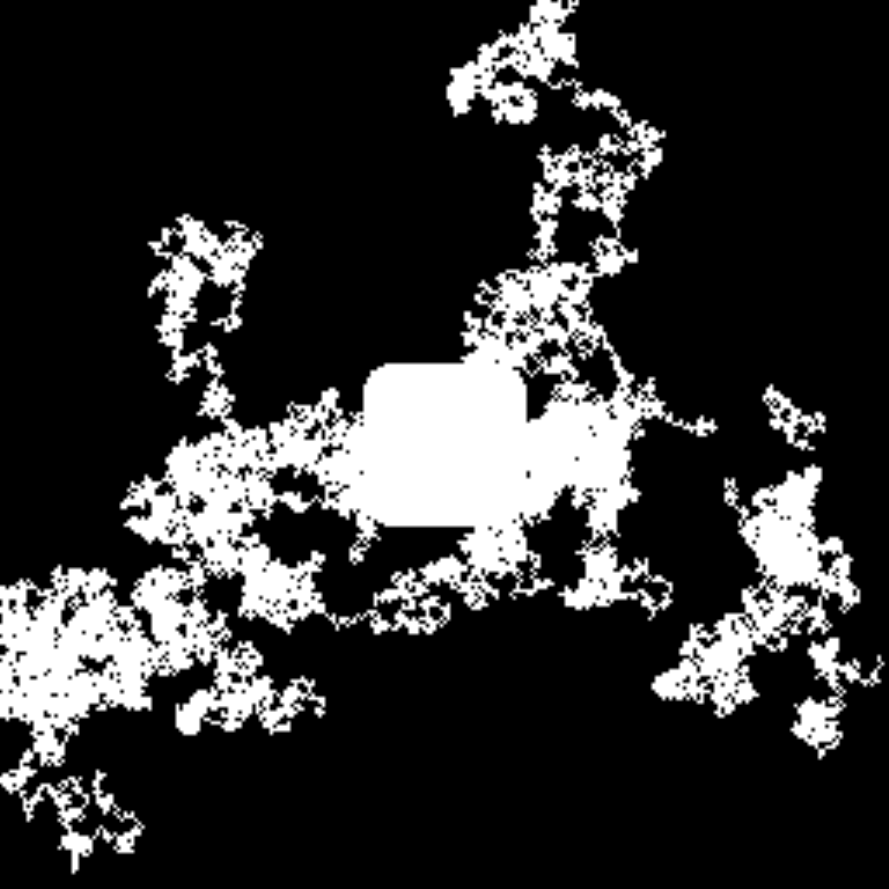}&
\includegraphics[width=.2\linewidth]{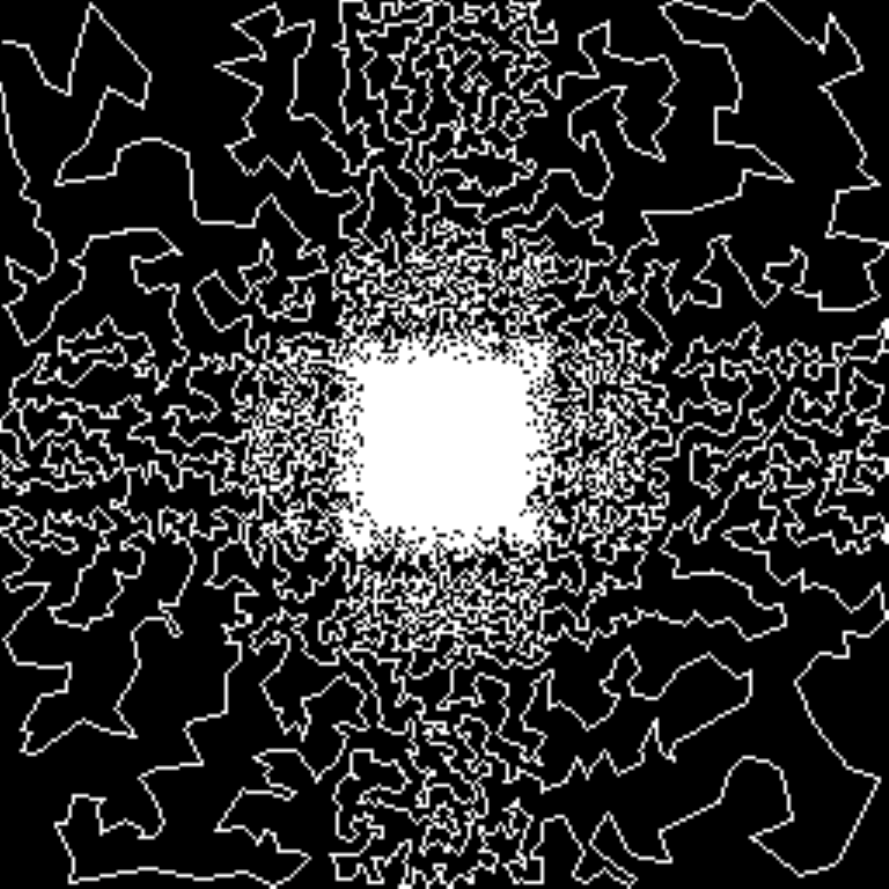}\\
(e)&(f)&(g)&(h)\\ 
\includegraphics[width=.2\linewidth]{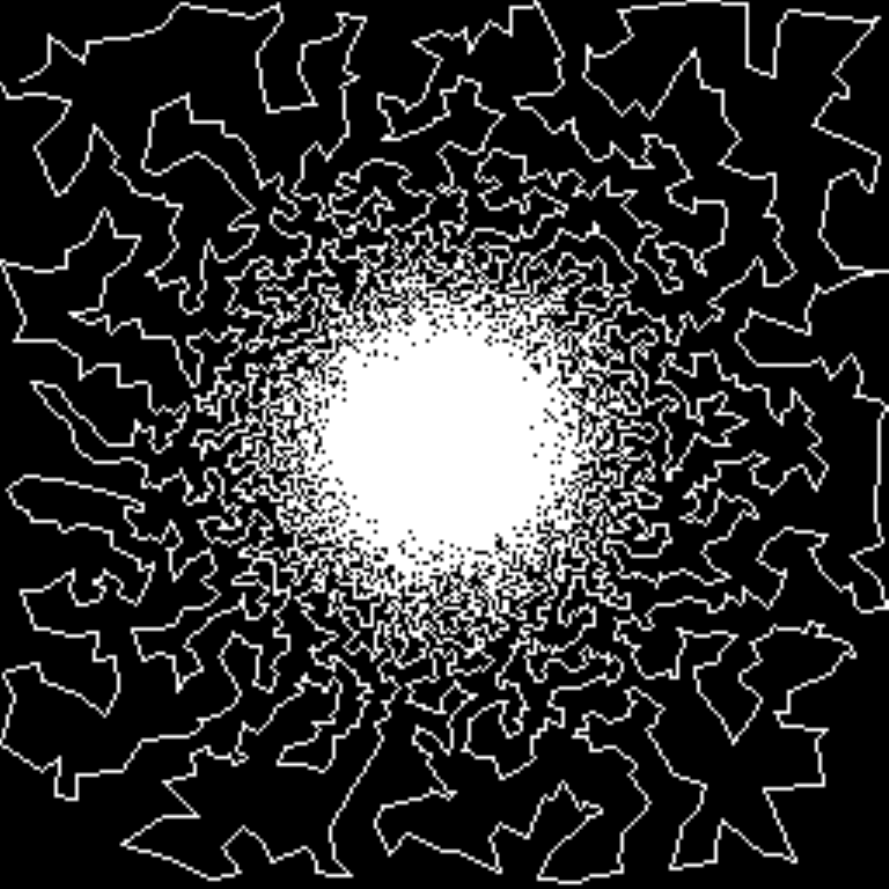}&
\includegraphics[width=.2\linewidth]{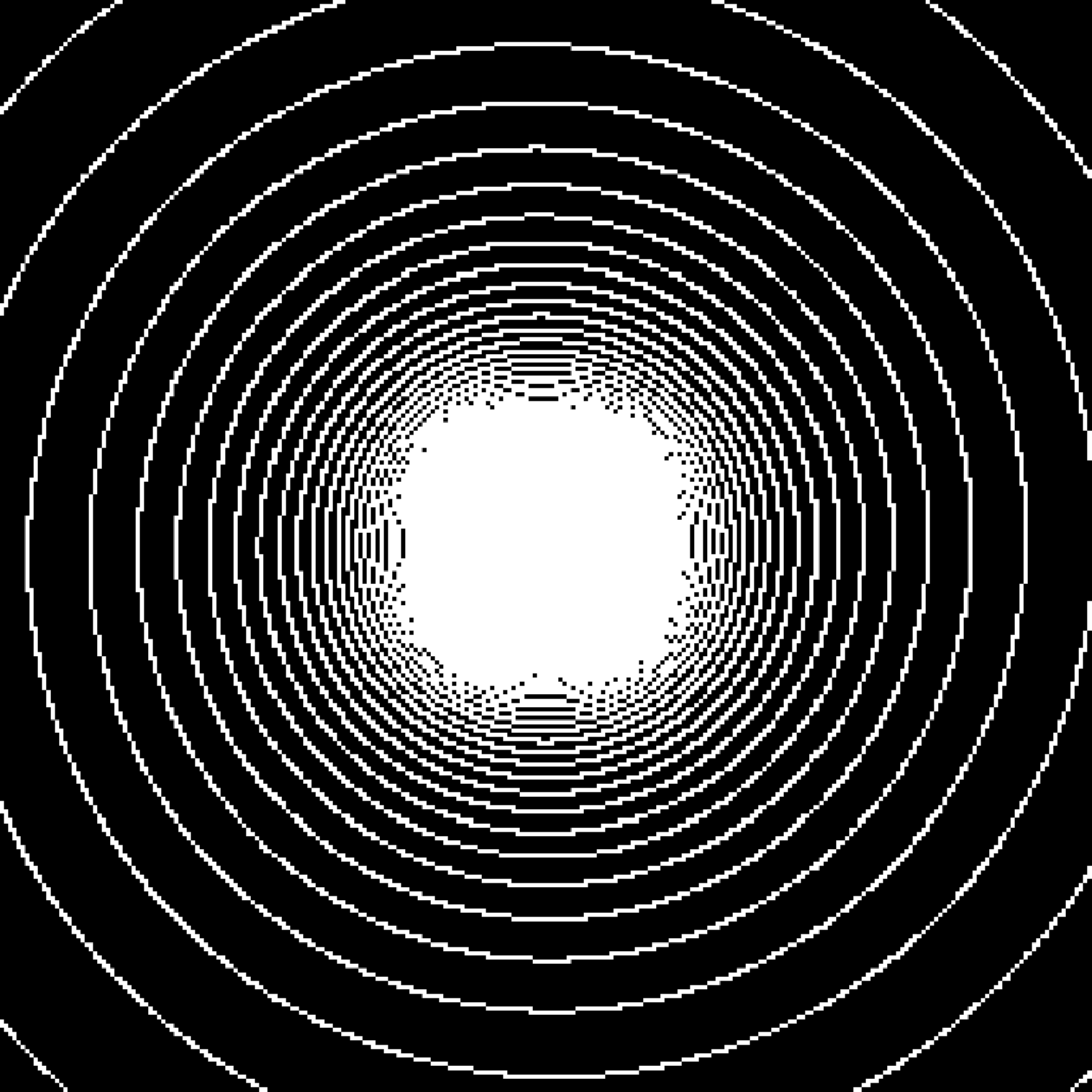}&
\includegraphics[width=.2\linewidth]{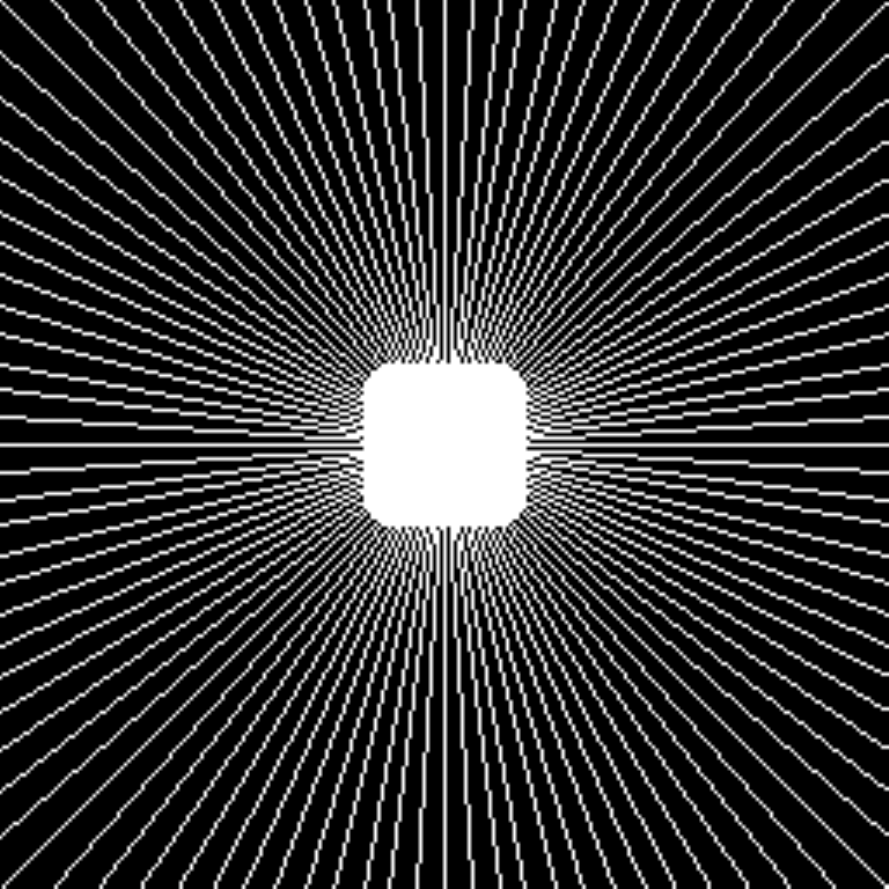}&
\includegraphics[width=.2\linewidth]{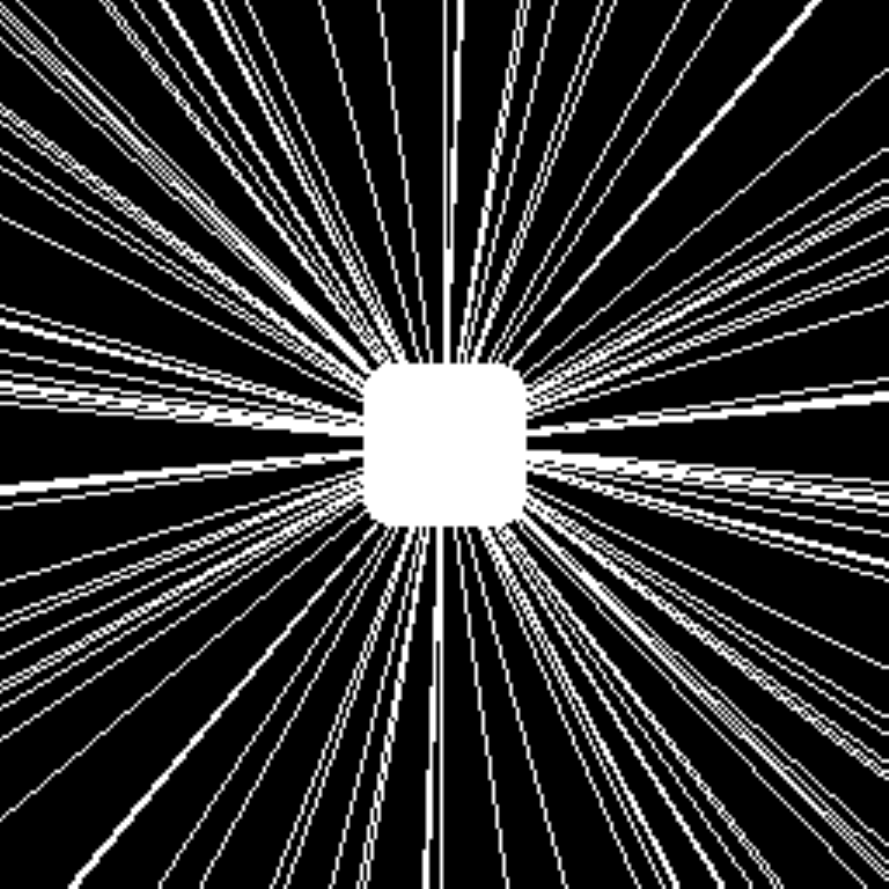}\\
\end{tabular}
\end{center}
\caption{\label{fig:cont2D} 2D continuous sampling schemes based on random walks with $\alpha=.1$~(a), $\alpha=.01$~(b), $\alpha=.001$~(c), and based on TSP solutions with distributions proportional to $\pi$~(d) and to $1/|k|^2$~(e). Classical sampling schemes: spiral~(f), radial~(g) and radial with random angles~(h).}
\end{figure}

\begin{table}[!h]
\caption{\label{tab:cont2D} Quality of reconstruction results in terms of PSNR for continuous sampling trajectories.}
\begin{center}
\begin{tabular}{l|c|c|c||c|c||c|c|c|}
\cline{2-9}
 & \multicolumn{3}{|c||}{Markovian drawing ($\alpha$)} & \multicolumn{2}{|c||}{TSP sampling}  & \multirow{2}*{spiral} & \multirow{2}*{radial} & \multirow{1}*{radial} \\
\cline{2-6}
 &$0.1 $&$0.01  $&$0.001$& $\propto\pi$ &$\mathbf{\propto 1/|\boldsymbol k|^2}$& & & \multirow{1}*{random} \\\hline
\multicolumn{1}{|l|}{mean PSNR} & 35.7 & 34.6 & 33.5 & 35.6 & \textbf{36.1} & 35.6 & 34.1 & 33.1 \\
\multicolumn{1}{|l|}{std dev.}  & 0.1  & 0.3  & 0.6  & 0.1  & 0.1  &      &      &  0.4   \\
\multicolumn{1}{|l|}{max value} & 36.0 & 35.1 & 34.8 & 35.9 & 36.2 &      &      & 34.0   \\\hline
\multicolumn{1}{|l|}{\ADDEDTHREE{in Fig.~\ref{fig:cont2D}: }}      &  (a) &  (b) &  (c) &  (d) &  (e) &  (f) &  (g) &  (h) \\
\hline
\end{tabular}
\end{center}
\end{table}

As predicted by the theory, the shorter the chains the better the reconstructions. The optimal case corresponds to chains of length 1 ($\alpha=1$) i.e. corresponding to independent VDS. When the chain is too long, large $k$-space areas are left unexplored, and the reconstruction quality decreases. 

Besides, the use of a target distribution proportional to $ 1/|k|^2$ instead of $\pi$ for TSP-based schemes provides slightly better reconstruction results. 

We also considered more classical sampling scheme. \DELETED{First, we designed a general spiral trajectory parametrized by $s: \R_+^* \rightarrow \R^2$, $\theta  \mapsto r(\theta)(\cos \theta; \sin \theta)$, which deviates from the classical case (e.g., Archimedes' spiral where $r(\theta)=a+b\theta$). We notice that the time spent by the spiral in an infinitesimal ring of radius $|k|$ is proportional to $r^{-1}(|k|)$. 
In our experiments, to achieve a radial distribution decaying as $1/|k|^2$, we thus set $r(\theta) \propto 1/\sqrt{\theta}$.
Second, w}We observe that the spiral scheme and the proposed ones provide more accurate reconstruction results than radial schemes. 
We believe that the main reason underlying these different behaviors is closely related to the sampling rate decay from low to high frequencies\ADDED{, which is proportional to $1/|k|$ for radial schemes}.

\subsection{3D-MRI}

Since VDS\DELETEDTHREE{ processes} based on Markov chains have shown rather poor reconstruction results compared to the TSP-based sampling schemes in 2D simulations, we only focus on comparing TSP-based sampling schemes to classical CS sampling schemes. \ADDEDTHREE{Moreover, the computational load to treat 3D images being much higher than in 2D, we only perform one drawing per sampling scheme in the following experiments. Experiments in 2D suggest that the reconstruction quality is not really impacted by the realization of a particular sampling scheme, except for drawing with Markov chains or with radial with random angles, which are not considered in our 3D experiments.}

\subsubsection{Variable density sampling using independent drawings}
The first step of the TSP-based approach is to identify a relevant target distribution. For doing so, we consider independent drawings as already done in 2D. \DELETEDTHREE{Since the computational load to treat 3D images is much higher than in 2D, we only perform one drawing per sampling scheme for different decays of the density in the Fourier domain. }The results are summarized in Tab.~\ref{tab:res_indep3D}. In this experiment, we still use a number of measurements equal to $20\%$ of the total amount ($R=5$).

\begin{table}[!h]
\begin{center}
\caption{\label{tab:res_indep3D} Quality of reconstruction results in terms of PSNR for sampling schemes based on 3D variable density independent drawings, with densities $\propto 1/k^d$ and $\pi$, and with $20\%$ of measured samples.}
\begin{tabular}{|c|c|c|c|c||c|}
\hline
$d$ & 1 & 2 & 3 & 4 & $\pi$ \\
\hline
PSNR (dB) & 44.78 & \textbf{45.01} & 44.56 & 44.03 & 42.94 \\ \hline
\end{tabular}
\end{center}
\end{table}

\REPLACED{The best reconstruction result is achieved with $d=2$. The best heuristical distributions radially decay from low to high frequencies. Nevertheless, because of the tensor product structure of $\Ab$ matrix, it seems unlikely that distributions with radial decays are optimal in theory. A good approximation of $\pi$ with radial decay has been proposed in~\cite{Krahmer12} for 2D Haar wavelets, but it seems difficult to extend their result to trivariate Haar wavelets. We believe that the reason for the good practical behavior of radial distributions is that no assumptions on the sparsity structure is made to derive the optimal one in the literature. Wavelet coefficients tend to become sparser as the resolution levels increase, and this feature should be accounted for to derive optimal sampling densities for natural images (see Section~\ref{sec:discussion}).}
{The best reconstruction result is achieved with $d=2$ and not the theoretically optimal distribution $\pi$. This illustrates the importance of departing from the sole sparsity hypothesis under which we constructed $\pi$. Natural signals have a much richer structure. For instance wavelet coefficients tend to become sparser as the resolution levels increase, and this feature should be accounted for to derive optimal sampling densities for natural images (see Section~\ref{sec:discussion}.)}

\subsubsection{Efficiency of the TSP sampling based strategy}
Let us now compare the reconstruction results using the TSP based method and the method proposed in the original CS-MRI paper \cite{Lustig07}. These two sampling strategies are depicted in Fig.~\ref{fig:3dkspace}. For 2D independent drawings, we used the distribution providing the best reconstruction results in 2D, \textit{i.e.} proportional to $1/|k|^2$. The TSP-based schemes were designed by drawing city locations independently with respect to a distribution proportional to $p^{\frac{3}{2}}$.  According to Theorem~\ref{thm:convergence_proba} this is the correct way to reach distribution $p$ after joining the cities with constant speed along the TSP solution path. The experiments were performed with $p=\pi$ (see Fig.~\ref{fig:pi_opt}~(b)), and $p\propto 1/|k|^2$, since the latter yielded the best reconstruction results in the 3D independent VDS framework. 
We also compared these two continous schemes to 3D independent drawings with respect to a distribution proportional to $1/|k|^2$. 

\begin{figure}[!h]
\begin{center}
\begin{tabular}{cc}
(a)&(b)\\[-.02\linewidth]
\includegraphics[width=.48 \linewidth]{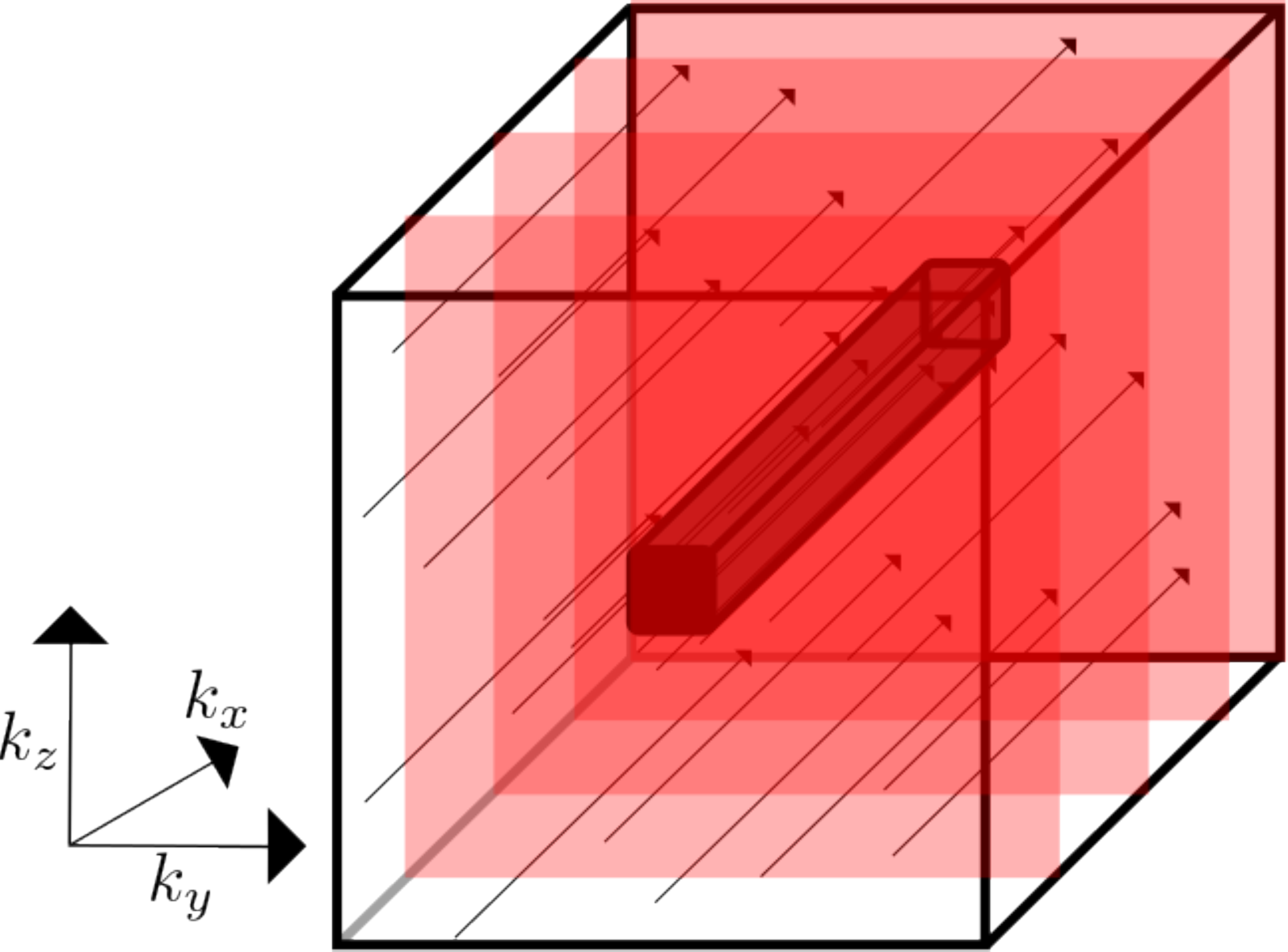}&
\includegraphics[width=.5 \linewidth]{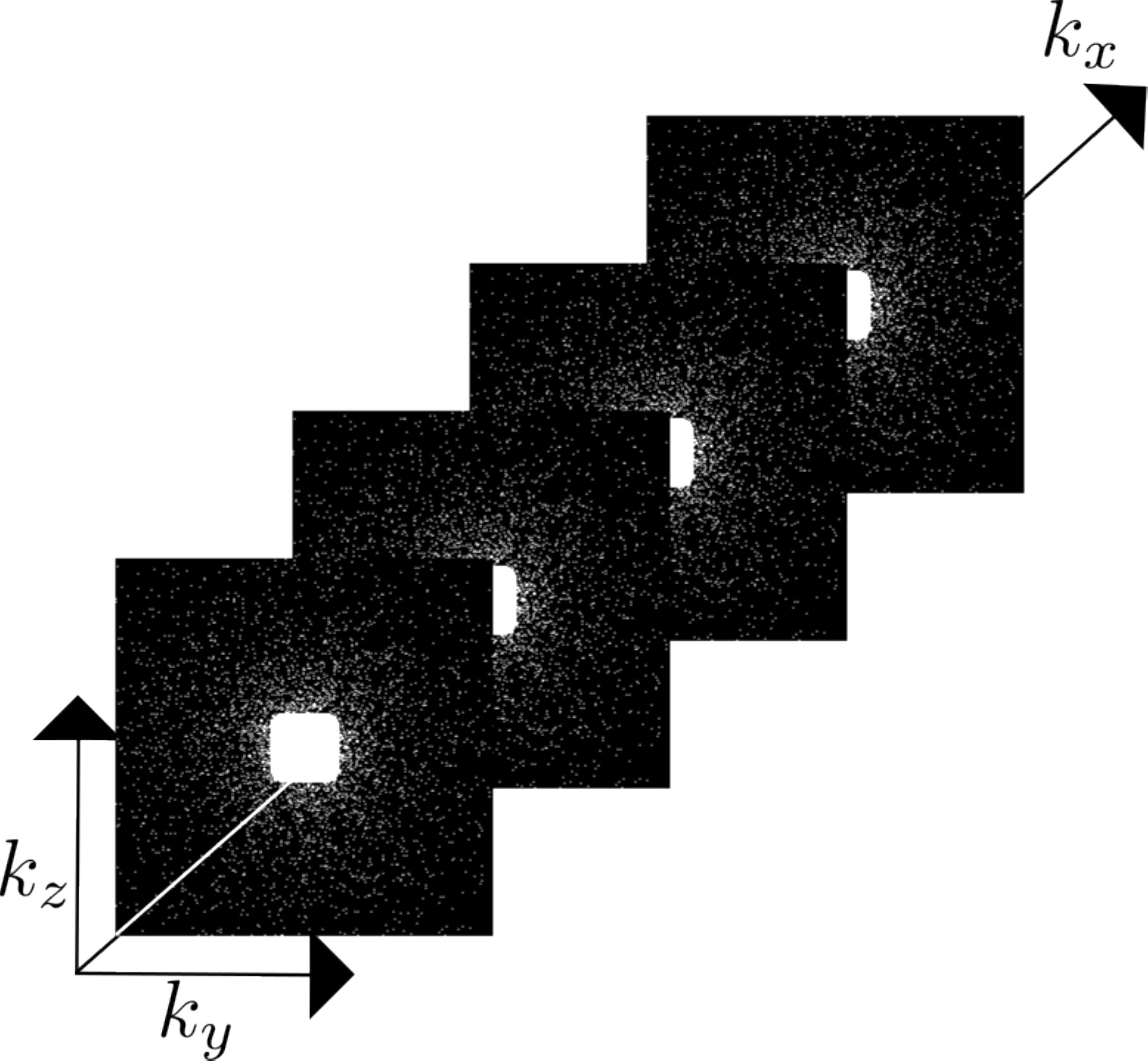}\\[.02 \linewidth]
(c)&(d)\\[-.05\linewidth]
\includegraphics[width=.48 \linewidth]{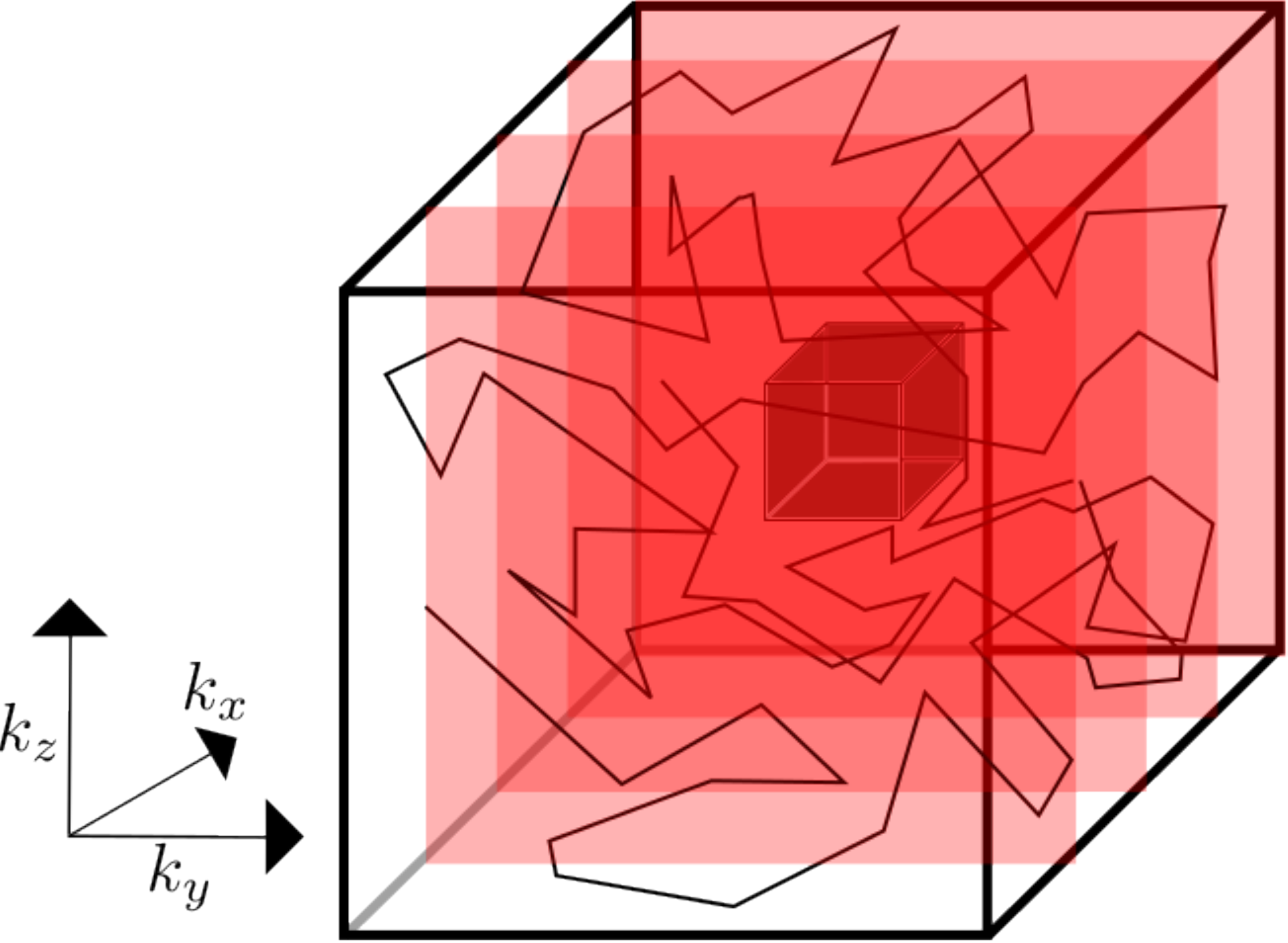}&
\includegraphics[width=.5 \linewidth]{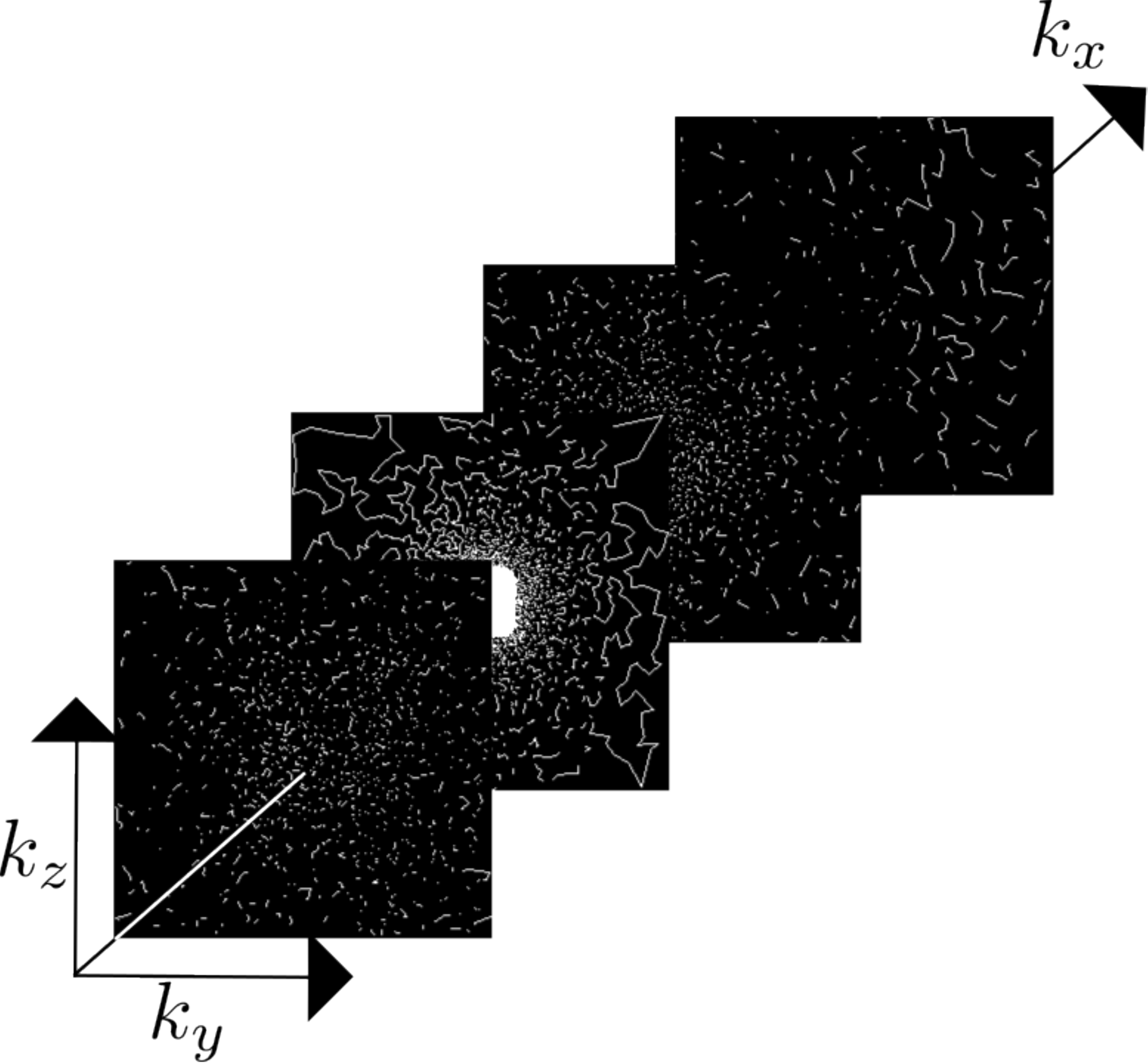}\\
\end{tabular}
\caption{\label{fig:3dkspace} Compared sampling strategies in 3D-MRI. \textbf{Top:} 2D independent drawing sampling schemes designed by a planar independent drawing and measurements in the orthogonal readout direction. \textbf{Bottom:} 3D TSP-based sampling scheme. \textbf{Left:} Schematic representation of the 3D sampling scheme. \textbf{Right:} Representations of 4 parallel slices.}
\end{center}
\end{figure}

\begin{figure}[!h]
\begin{center}
\rotatebox{90}{\hspace{2.6cm} PSNR (dB)}  \includegraphics[width=.8\linewidth]{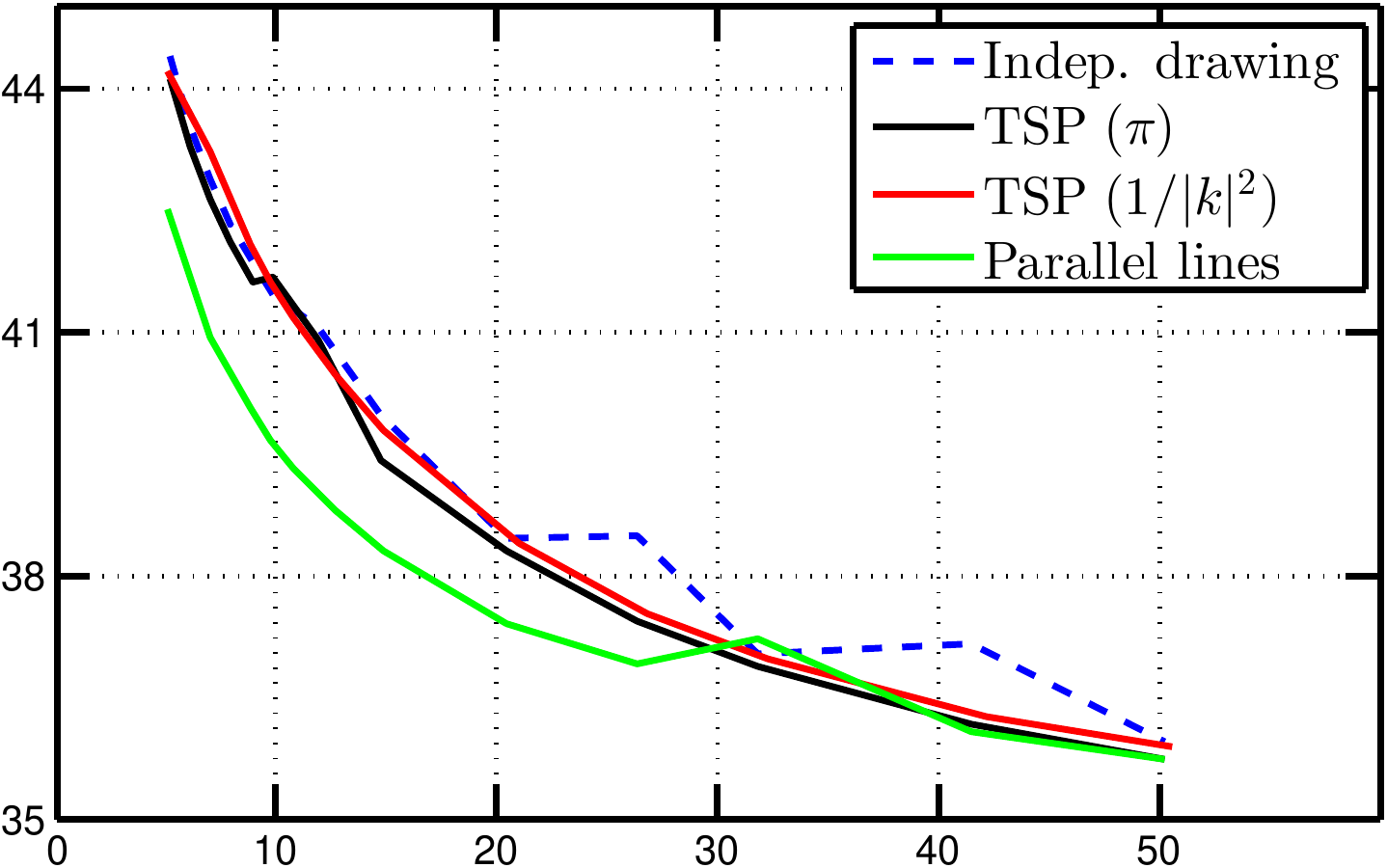}\\
$R$
\caption{\label{fig:res_3D} Quality of 3D reconstructed images in terms of PSNR as a function of \REPLACED{acceleration}{sampling} rates $R$ for various sampling strategies: independent drawings with respect to distribution $\propto 1/|k|^2$ (dashed blue line), TSP-based sampling with target densities $\pi$ (black line) and $\propto 1/|k|^2$ (red line), and parallel lines with 2D independent drawing with respect to $\propto 1/|k|^2$ distribution (green line) as depicted in Fig.~\ref{fig:3dkspace}[Top row].}
\end{center}
\end{figure}

\begin{figure}[!h]
\begin{center}
\begin{tabular}{c}
\hspace{.17\linewidth} (a)  \hfill (b)\hspace{.18\linewidth}  \\
\includegraphics[height=.4\linewidth]{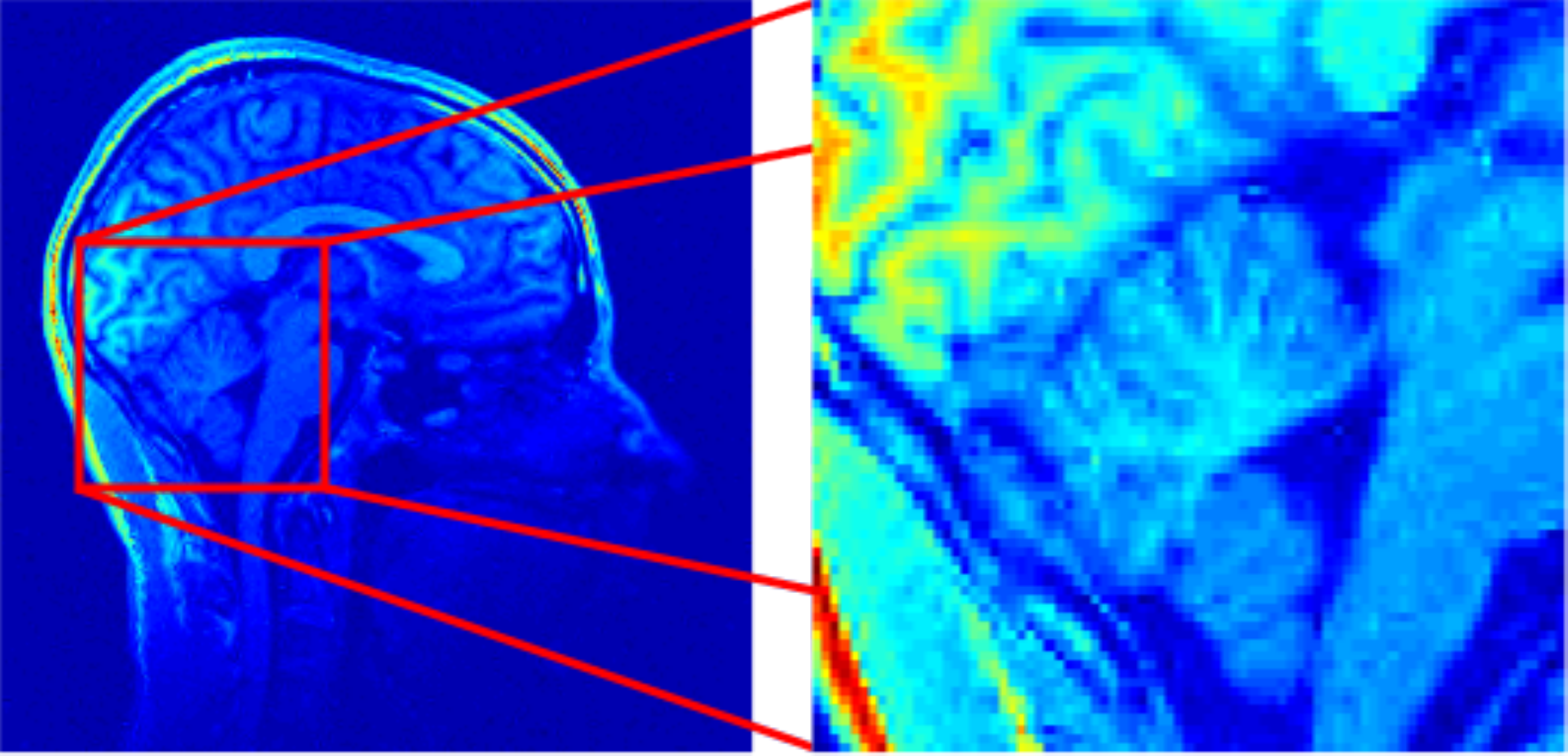}
\\
\hspace{.17\linewidth} (c)  \hfill (d)\hspace{.18\linewidth}  \\
\includegraphics[height=.4\linewidth]{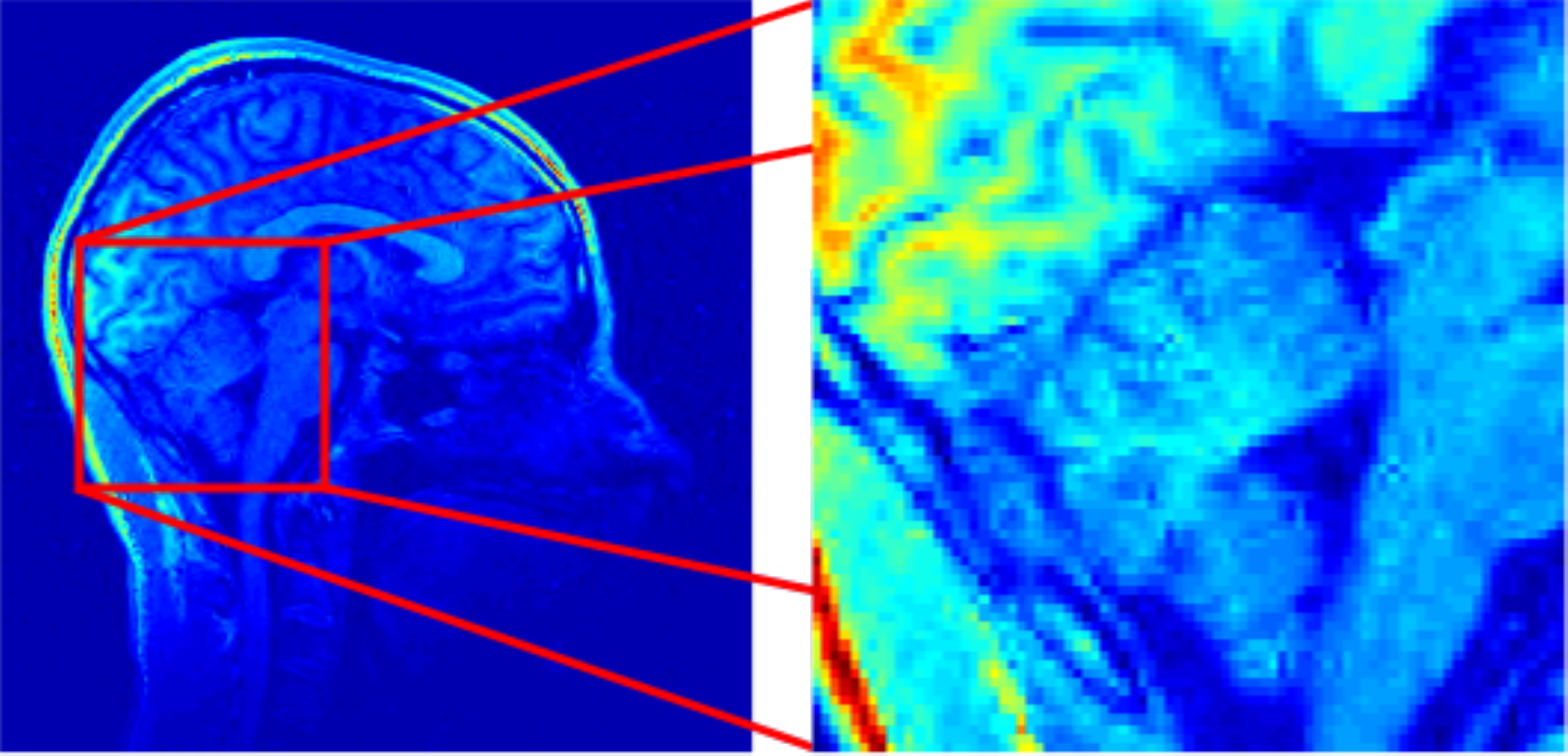}
\end{tabular}
\caption{\label{fig:res_R9} Reconstruction results for $R= 8.8$ for various sampling strategies. \textbf{Top row:} TSP-based sampling schemes (PSNR=$42.1$ dB). \textbf{Bottom row:}  2D random drawing and acquisitions along parallel lines (PSNR=$40.1$ dB). Sagital view (\textbf{left}) and zoom on the cerebellum (\textbf{right}).}
\end{center}
\end{figure}

\begin{figure}[!h]
\begin{center}
\begin{tabular}{cc}
(a) & (b) \\
\includegraphics[width=.4\linewidth]{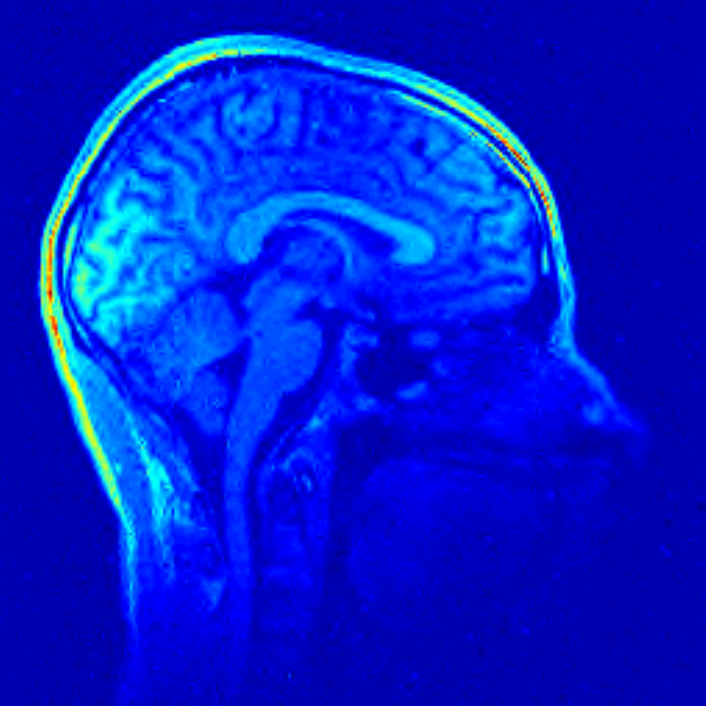}&
\includegraphics[width=.4\linewidth]{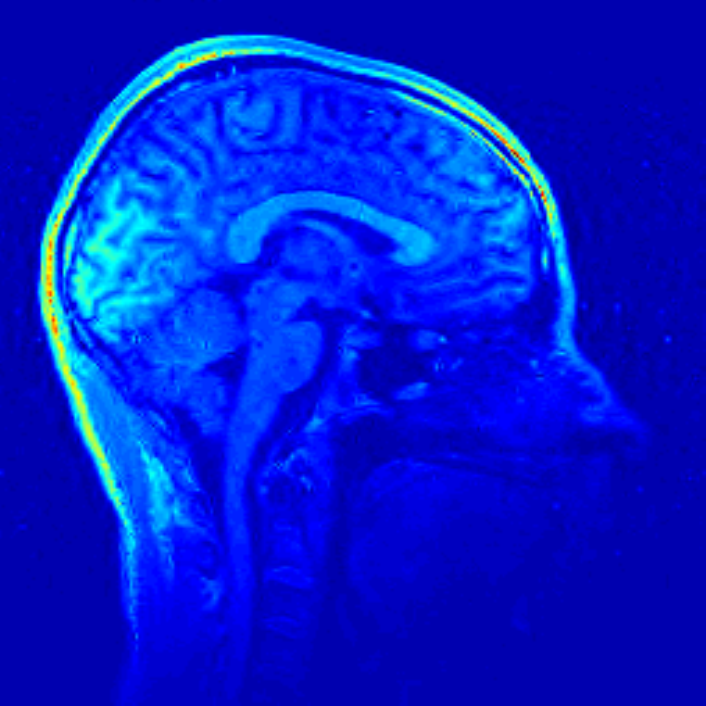}\\
\end{tabular}
\caption{\label{fig:res_R15} Reconstruction results for $R= 14.9$ for various sampling strategies. \textbf{Left:} TSP-based sampling schemes (PSNR=$39.8$ dB). \textbf{Right:}  2D random drawing and acquisitions along parallel lines (PSNR=$38.3$ dB).}
\end{center}
\end{figure}

Reconstruction results with an \REPLACED{acceleration factor}{sampling rate} $R=8.8$ are presented in Fig.~\ref{fig:res_R9}, with a zoom on the cerebellum. The reconstruction quality using the proposed sampling scheme is better than the one obtained from classical CS acquisition and contains less artifacts. In particular, the branches of the cerebellum are observable with our proposed sampling scheme only. At higher \REPLACED{acceleration}{sampling} rate, we still observe less artifacts with the proposed schemes, as depicted in Fig.~\ref{fig:res_R15} with \REPLACED{an acceleration factor}{a sampling rate} $R=14.9$. Moreover, Fig.~\ref{fig:res_3D} shows that our proposed method outperforms the method proposed in \cite{Lustig07} by up to 2dB. If one aims at reaching a fixed PSNR, we can increase $r$ by more than $50\%$ using the TSP based strategy. In other words, we could expect a substantial decrease of scanning time by using more advanced sampling strategies than those proposed until to now.

The two different choices of the target density $\pi$ and $\propto 1/|k|^2$ provide similar results. This is a bit surprising since 3D independent VDS with these two probability distributions provide very different reconstruction results (see Tab.~\ref{tab:res_indep3D}). A potential explanation for that behavior is that the TSP tends to ``smooth out'' the target distribution. An independent drawing would collect very few Fourier coefficients in the blue zones of Fig.~\ref{fig:pi_opt}, notably the vertical and horizontal lines crossing the Fourier plane center. Sampling these zones seems to be of utmost importance since they contain high energy coefficients. The TSP approach tends to sample these zones by crossing the lines.

Perhaps the most interesting fact is that Fig.~\ref{fig:res_3D} shows that the TSP based sampling schemes provide results that are similar to independent drawings up to important \REPLACED{acceleration}{sampling} rates such as $20$. We thus believe that the TSP solution proposed in this paper is near optimal since it provides results similar to unconstrained acquisition schemes. The price to be paid by integrating continuity constraints is thus almost null. 

\section{Discussion and perspectives}
\label{sec:discussion}

In this paper, we investigated the use of variable density sampling along continuous trajectories. 
Our first contribution was to provide a well-grounded mathematical definition of $p$-variable density samplers~(VDS) as stochastic processes with a prescribed limit empirical measure $p$. We identified through both theoretical and experimental results two key features characterizing their efficiency: their \textbf{empirical measure} as well as their  \textbf{mixing properties}. We showed that VDS based on random walks were doomed to fail since they were unable to quickly cover the whole acquisition space. This led us to propose a two-step alternative that consists first of drawing random points independently and then joining them using a Travelling Salesman Problem solver. In contrast to what has been proposed in the literature so far, we paid attention to the manner the points have to be drawn so as to reach a prescribed empirical measure. Strikingly, our numerical results \REPLACEDTwo{demonstrate}{suggest} that the proposed approach yields reconstruction results that are nearly equivalent to independent drawings. This suggests that adding continuity constraints to the sampling schemes might not be so harmful to derive CS results.

We believe that the proposed work opens many perspectives as outlined in what follows.

\textbf{How to select the target density?}
We recalled existing theoretical results to address this point in Section~\ref{sec:VDS} and showed that \textit{deterministic} sampling could reduce the total number of required measurements. The analysis we performed closely followed the proofs proposed in~\cite{Rauhut10,Candes11} and was based solely on sparsity hypotheses on the signal/image to be reconstructed. 
The numerical experiments we performed indicate that heuristic densities still outperform the theoretical optimal ones. This suggests that the optimality critera used so far to derive target sampling densities does not account for the whole structure of the sought signal/image.
Although sparsity is a key feature that characterizes natural signals/images, we believe that introducing stronger knowledge like \emph{structured sparsity} might contribute to derive a new class of optimal densities that would compete with heuristic densities.

To the best of our knowledge, the recent paper~\cite{Adcock13} is the first contribution that addresses the design of sampling schemes by accounting for a simple structured sparsity hypothesis. The latter assumes that wavelet coefficients become sparser as the resolution increases. The main conclusion of the authors is the same as that of Theorem~\ref{thm:rauhut2} even though it is based on different arguments: the low frequencies of a signal should be sampled deterministically.

Finally, let us notice that the best empirical convex reconstruction techniques do not rely on the resolution of a simple $\ell^1$ problem such as \eqref{eq:minl1}. They are based on regularization with redundant frames and total variation for instance \cite{Boyer12}. The signal model, the target density and the reconstruction algorithm should clearly be considered simultaneously to make a substantial leap on reconstruction guarantees. 

\textbf{What VDS properties govern their practical efficiency?}
In Section~\ref{sec:cont}, it was shown that the key feature characterizing random walks efficiency was the mixing properties of the associated stochastic transition matrix.
In order to derive CS results using generic random sets rather than point processes or random walks, it seems important to us to find an equivalent notion of mixing properties.

\textbf{How to generate VDS with higher degrees of regularity?}
This is probably the most important question from a practical point of view. We showed that the TSP based VDS outperformed more conventional sampling strategies by substantial acceleration factors for a given PSNR value or recovers 3D images at an improved PSNR for a given acceleration factor. However, this approach may not really be appealing for many applications: continuity is actually not a sufficient condition for making acquisition sequences implementable on devices like MRI scanners or robot motion where additional kinematic constraints such as bounded first \ADDEDTHREE{(gradients) }and second \ADDEDTHREE{(slew rate) }derivatives should be taken into account. Papers such as~\cite{Lustig08} derive time-optimal waveforms to cross a given curve using optimal control. 
By using this approach, it can be shown that the angular points on the TSP trajectory have to be visited with a zero speed. This strongly impacts the scanning time and the distribution of the parametrized curve. The simplest strategy to reduce scanning time would thus consist in smoothing the TSP trajectory, however this approach dramatically changes the target distribution which was shown to be a key feature of the method. The key element to prove our TSP Theorem~\ref{thm:convergence_proba} was the famous Beardwood, Halton and Hammersley theorem~\cite{beardwood1959shortest}. To the best of our knowledge, extending this result to smooth trajectories remains an open question\footnote{To be precise, many crucial properties of the length of the shortest path used to derive asymptotic results are lost. The most important one is subadditivity  \cite{steele1981subadditive}.}. Recent progresses in that direction were obtained in recent papers such as \cite{le2012dubins}, but they do not provide sufficient guarantees to extend Theorem~\ref{thm:convergence_proba}. Answering this question is beyond the scope of this paper. We believe that the work~\cite{Teuber11} based on attraction and repulsion potentials opens an appealing research avenue for solving this issue.

\section*{Appendix 1 - proof of Theorem~\ref{thm:rauhut2}}
\ADDED{For a symmetric matrix $M$, we denote by $\lambda_{\max}(M)$ its largest eigenvalue and by $\|M\|$ the largest eigenvalue modulus.} The crucial step to obtain Theorem~\ref{thm:rauhut2} is Proposition \ref{prop:propoconcentr} below.  
The rest of the proof is the same as the one proposed in~\cite{Rauhut10} and we refer the interested reader to \cite[Section 7.3]{Rauhut10} for further details.

\begin{proposition}
 \label{prop:propoconcentr}
Let $\Omega =\Omega_1\cup \Omega_2 \subseteq \{1,\hdots, n\}$ be a set constructed as in Theorem~\ref{thm:rauhut2}.
Define
$$
\tilde a_i =  \left\{\begin{array}{ll}
		    a_i &\textrm{ if } i\in \Omega_1 \\
		    a_i/\sqrt{p_i}  &\textrm{ if }  i\in \{1 \hdots n \} \setminus \Omega_1.
                   \end{array}\right.
$$
and 
\begin{align}
\label{eq::defAtilde}
\tilde \Ab &=\begin{pmatrix}
                  \tilde a_{\Omega_1(1)} \\ 
                  \vdots \\
                  \tilde a_{\Omega_1(m_1)} \\ 
                  \frac{1}{\sqrt{m_2}}\tilde a_{\Omega_2(1)}\\ 
                  \vdots \\ 
                  \frac{1}{\sqrt{m_2}}\tilde a_{\Omega_2(m_2)}
                 \end{pmatrix}\in \C^{m\times n}.
\end{align}                 
Then for all $\delta \in [0,\frac{1}{2}]$:
\begin{align*}
\Prob \Bigl( \Bigl\| \tilde \Ab^{S*}\tilde \Ab^S - I_s \Bigr\| \geqslant \delta \Bigr) \leqslant 2s \exp \left(- \frac{m_2 \delta^2}{C K_2^2 s} \right) .
\end{align*} 
where $\tilde \Ab^S\in \C^{m\times s}$ is the matrix composed of the $s$ columns of $\tilde \Ab$ belonging to $S$. $C=7/3$ is a constant. 
\end{proposition}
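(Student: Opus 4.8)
The statement is a matrix Chernoff / Bernstein-type concentration inequality for the random Gram matrix $\tilde \Ab^{S*}\tilde \Ab^S$ around its mean $I_s$. The plan is to decompose $\tilde \Ab^{S*}\tilde \Ab^S$ into a deterministic part coming from $\Omega_1$ and a sum of independent random rank-one matrices coming from $\Omega_2$, and then to apply a matrix Bernstein inequality (the noncommutative Bernstein bound used in~\cite[Section 7.3]{Rauhut10}). First I would write $\tilde\Ab^{S*}\tilde\Ab^S = M_1 + \sum_{\ell=1}^{m_2} Y_\ell$, where $M_1 = \sum_{i\in\Omega_1} \tilde a_i^S (\tilde a_i^S)^*$ is deterministic and, for each $\ell$, $Y_\ell = \tfrac{1}{m_2}\,\tilde a_{\Omega_2(\ell)}^S (\tilde a_{\Omega_2(\ell)}^S)^*$ with $\Omega_2(\ell)$ drawn i.i.d. from $p$ on $\Omega_1^c$. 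Here $\tilde a_i^S$ denotes the restriction of $\tilde a_i$ to the coordinates in $S$.

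The key computation is that the expectation is exactly $I_s$: using $\tilde a_i = a_i/\sqrt{p_i}$ on $\Omega_1^c$ we get $\Exp[m_2 Y_\ell] = \sum_{i\in\Omega_1^c} p_i \cdot \frac{a_i^S (a_i^S)^*}{p_i} = \sum_{i\in\Omega_1^c} a_i^S (a_i^S)^*$, so that $M_1 + \Exp\big[\sum_\ell Y_\ell\big] = \sum_{i=1}^n a_i^S(a_i^S)^* = (\Ab^{S})^*\Ab^S = I_s$, by orthonormality of the columns of $\Ab$ (recall $\Ab$ is a full orthogonal matrix, so its columns restricted to $S$ are orthonormal). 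Hence $\tilde\Ab^{S*}\tilde\Ab^S - I_s = \sum_{\ell=1}^{m_2}\big(Y_\ell - \Exp Y_\ell\big)$, a sum of i.i.d. centered self-adjoint random matrices. Then I would bound the two ingredients needed for matrix Bernstein: the almost-sure operator-norm bound $\|Y_\ell - \Exp Y_\ell\| \leqslant \|Y_\ell\| + \|\Exp Y_\ell\|$, where $\|Y_\ell\| = \tfrac{1}{m_2}\|\tilde a_{\Omega_2(\ell)}^S\|_2^2 \leqslant \tfrac{1}{m_2}\, s\, \|a_{\Omega_2(\ell)}\|_\infty^2/p_{\Omega_2(\ell)} \leqslant \tfrac{s}{m_2} K_2$ (writing $K_2 := K(\Ab_{\Omega_1^c},p)$), plus a comparable bound on $\|\Exp Y_\ell\|$; and the variance parameter $\big\|\sum_\ell \Exp\big[(Y_\ell-\Exp Y_\ell)^2\big]\big\| \leqslant \tfrac{1}{m_2} \cdot \tfrac{s}{m_2} K_2 \cdot m_2 \cdot \|\Exp[m_2 Y_\ell]\| \leqslant \tfrac{s K_2}{m_2}$, using $0\preceq \Exp[m_2 Y_\ell] \preceq I_s$. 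Feeding these into the matrix Bernstein inequality in the regime $\delta\leqslant 1/2$ yields a bound of the form $2s\exp(-c\, m_2\delta^2/(sK_2^2))$, and tracking the numerical constant in the version of Bernstein used in~\cite{Rauhut10} gives the stated $C=7/3$.

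The main obstacle, as usual for such results, is not the structure but \emph{getting the constant right}: one must use the precise form of the matrix Bernstein / Chernoff inequality employed in~\cite{Rauhut10} and carefully combine the deviation term $\|Y_\ell-\Exp Y_\ell\|\leqslant sK_2/m_2$ (note $K_2\geqslant 1$, so $K_2\leqslant K_2^2$, which is why $K_2^2$ appears in the final exponent) with the variance bound, then optimize the exponential so that the denominator is exactly $C K_2^2 s$ with $C=7/3$. A secondary technical point is handling the mixed structure — the $\Omega_1$ rows contribute deterministically and one must check that subtracting their (fixed) contribution leaves a clean centered i.i.d. sum, which it does precisely because $\Omega_1$ and the summands of $\Omega_2$ together reconstruct the full orthogonality relation $\sum_{i=1}^n a_i^S(a_i^S)^* = I_s$. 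Once Proposition~\ref{prop:propoconcentr} is in hand, Theorem~\ref{thm:rauhut2} follows by the golfing-free argument of~\cite[Section 7.3]{Rauhut10}: the event $\|\tilde\Ab^{S*}\tilde\Ab^S - I_s\|\leqslant 1/2$ gives invertibility and conditioning of the relevant submatrix, and combined with an inexact-dual / $\ell_1$-recovery certificate this yields exact recovery with probability $1-\eta$ once $m_2 \geqslant C K_2^2 s \ln^2(6n/\eta)$ — wait, more precisely once $m_2\geqslant C K_2 s\ln^2(6n/\eta)$ — matching~\eqref{eq:meas_det} after adding back $m_1$.
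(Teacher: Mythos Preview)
Your approach is essentially identical to the paper's: both write $\tilde\Ab^{S*}\tilde\Ab^S - I_s$ as a sum of $m_2$ i.i.d.\ centered self-adjoint matrices (your $Y_\ell - \Exp Y_\ell$ coincides with the paper's $\tfrac{1}{m_2}\Zb_{X_\ell}$, where $\Zb_j := \Mb_1 + \tilde a_j^S\tilde a_j^{S*} - I_s$), bound the range and variance via $\|\tilde a_j^S\|_2^2 \leqslant s\,\|\tilde a_j\|_\infty^2$ together with $0\preceq I_s-\Mb_1\preceq I_s$, and apply the matrix Bernstein inequality, with the constant $7/3$ coming from $2(1+\delta/3)\leqslant 7/3$ for $\delta\leqslant 1/2$. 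Your hesitation about $K_2$ versus $K_2^2$ is purely notational: the paper implicitly uses $K_2 := \max_{i\in\Omega_1^c}\|a_i\|_\infty/\sqrt{p_i}$, so that $K_2^2 = K(\Ab_{\Omega_1^c},p)$, and the exponent $C K_2^2 s$ is exactly the linear-in-$K$ dependence you derived --- no artificial weakening via ``$K_2\leqslant K_2^2$'' is needed, and this matches~\eqref{eq:meas_det} directly.
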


The proof of this proposition relies heavily on the matrix Bernstein inequality below \cite{Tropp12}.
\begin{proposition}[Matrix Bernstein inequality]
Let $\Zb_k$ be a finite sequence of independent, random, self-adjoint matrices in $\C^{d\times d}$. Assume that each random matrix satisfies
\begin{align*}
\Exp( \Zb_k)=0 \qquad \textrm{and} \qquad \lambda_{\max} (\Zb_k) \leqslant R \qquad a.s.
\end{align*}
Denote $\displaystyle \sigma^2= \Bigl\|\sum_k \Exp (\Zb_k^2) \Bigr\|$.
Then, for all $t\geqslant 0$:
\begin{align*}
\Prob\Bigl(  \Bigl\| \sum_k \Zb_k \Bigr\| \geqslant t \Bigr) \leqslant 2d \exp \Bigr( -\frac{t^2/2}{\sigma^2+Rt/3} \Bigr).
\end{align*}
\end{proposition}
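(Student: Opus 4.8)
The plan is to prove the inequality by the matrix Laplace transform method of Ahlswede--Winter and Tropp, reducing the spectral tail bound to the control of a single trace exponential moment. Writing $S=\sum_k \Zb_k$, the first step is the observation that for any $\theta>0$, Markov's inequality applied to $e^{\theta\lambda_{\max}(S)}=\lambda_{\max}(e^{\theta S})$, together with $\lambda_{\max}(e^{\theta S})\leqslant \operatorname{tr} e^{\theta S}$ (legitimate since $e^{\theta S}$ is positive definite), yields
\begin{align*}
\Prob\bigl(\lambda_{\max}(S)\geqslant t\bigr) \leqslant e^{-\theta t}\,\Exp\operatorname{tr} e^{\theta S}.
\end{align*}
Everything then hinges on estimating the matrix moment generating function $\Exp\operatorname{tr} e^{\theta S}$ and on optimizing the free parameter $\theta$ at the very end.

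The central and hardest step is to establish subadditivity of the matrix cumulants, despite the failure of $e^{A+B}=e^Ae^B$ for non-commuting matrices. The key tool is Lieb's concavity theorem: for a fixed self-adjoint $H$, the map $A\mapsto \operatorname{tr}\exp\bigl(H+\log A\bigr)$ is concave on the cone of positive definite matrices. Applying Jensen's inequality to this concave map, conditionally and successively over the independent summands $\Zb_k$, I would peel off one factor at a time to obtain
\begin{align*}
\Exp\operatorname{tr} e^{\theta S} \leqslant \operatorname{tr}\exp\Bigl(\sum_k \log \Exp\, e^{\theta \Zb_k}\Bigr).
\end{align*}
This is the matrix analogue of the scalar identity $\Exp e^{\theta\sum_k Z_k}=\prod_k \Exp e^{\theta Z_k}$, and making it rigorous --- in particular the conditioning argument that turns independence into the iterated Jensen step --- is where the real work lies.

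With subadditivity in hand, the remaining steps are essentially the scalar Bernstein calculation transferred to matrices. Using $\Exp\Zb_k=0$ and the one-sided bound $\lambda_{\max}(\Zb_k)\leqslant R$, I would invoke the fact that the scalar function $x\mapsto (e^{\theta x}-\theta x-1)/x^2$ is increasing to conclude, via the transfer rule for eigenvalues, the semidefinite bound $\Exp e^{\theta\Zb_k}\preceq \exp\bigl(g(\theta)\,\Exp\Zb_k^2\bigr)$ with $g(\theta)=(e^{\theta R}-\theta R-1)/R^2$. Operator monotonicity of the logarithm then gives $\log\Exp e^{\theta\Zb_k}\preceq g(\theta)\,\Exp\Zb_k^2$, and monotonicity of $A\mapsto\operatorname{tr}\exp A$ lets me collect the summands into
\begin{align*}
\Exp\operatorname{tr} e^{\theta S} \leqslant d\,\exp\bigl(g(\theta)\,\sigma^2\bigr),
\end{align*}
using $\operatorname{tr}\exp(\cdot)\leqslant d\,\lambda_{\max}(\exp(\cdot))$ and the definition $\sigma^2=\bigl\|\sum_k\Exp\Zb_k^2\bigr\|$.

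Finally I would substitute this into the Laplace bound and optimize over $\theta$. The minimizing choice $\theta=R^{-1}\log(1+Rt/\sigma^2)$ produces the exponent $-\sigma^2R^{-2}\,h(Rt/\sigma^2)$ with $h(u)=(1+u)\log(1+u)-u$; the elementary inequality $h(u)\geqslant \tfrac{u^2/2}{1+u/3}$ then converts this into the stated $\exp\bigl(-\tfrac{t^2/2}{\sigma^2+Rt/3}\bigr)$. This controls $\lambda_{\max}(S)$ with prefactor $d$; applying the identical argument to the sequence $(-\Zb_k)$, which satisfies the same hypotheses and leaves $\sigma^2$ unchanged, and taking a union bound over $\{\lambda_{\max}(S)\geqslant t\}$ and $\{\lambda_{\max}(-S)\geqslant t\}$ bounds $\|S\|=\max(\lambda_{\max}(S),\lambda_{\max}(-S))$ and accounts for the factor $2d$. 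The main obstacle throughout is the Lieb-concavity-based subadditivity step; the rest is a careful but routine transcription of the scalar Chernoff--Bernstein argument.
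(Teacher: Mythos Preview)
The paper does not prove this proposition at all: it is quoted verbatim from Tropp~\cite{Tropp12} and used as a black box in the proof of Proposition~\ref{prop:propoconcentr}. Your sketch is precisely Tropp's argument (matrix Laplace transform, Lieb concavity for subadditivity of cumulants, the scalar Bernstein mgf bound transferred via the spectral calculus, and the optimization over~$\theta$), so in that sense you are aligned with the source the paper defers to.

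There is, however, a genuine gap in your final symmetrization step. The hypothesis in the statement is the \emph{one-sided} bound $\lambda_{\max}(\Zb_k)\leqslant R$, and from this your argument correctly delivers $\Prob(\lambda_{\max}(S)\geqslant t)\leqslant d\exp(\cdots)$. But the sequence $(-\Zb_k)$ does \emph{not} satisfy the same hypothesis: one has $\lambda_{\max}(-\Zb_k)=-\lambda_{\min}(\Zb_k)$, and nothing in the assumptions controls $\lambda_{\min}(\Zb_k)$ from below. Hence your union bound to reach $\|S\|$ and the factor~$2d$ is unjustified as written. In Tropp's Theorem~1.4 the actual hypothesis is the two-sided bound $\|\Zb_k\|\leqslant R$, which is what makes the reflection argument legitimate; the paper's transcription of the statement has silently weakened the hypothesis while keeping the two-sided conclusion. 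You should either (i) strengthen the hypothesis to $\|\Zb_k\|\leqslant R$ before invoking symmetry, or (ii) state only the one-sided conclusion on $\lambda_{\max}(S)$ with prefactor~$d$, which is in fact all that the paper's application in Appendix~1 requires.
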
{

We are now ready to prove Proposition~\ref{prop:propoconcentr}.
\begin{proof}
For any vector $v \in \C^n$, denote by $v^S \in \C^s$ the vector composed of the entries of $v$ belonging to $S\subseteq \{1,\hdots,n\}$. 
Consider the random sequence $X_1 , \hdots, X_{m_2}$ where $X_i=j \in \{1 \hdots n \} \setminus \Omega_1$ with probability $p_j$, and denote by $\Omega_2$ the set $\{X_1, \hdots X_{m_2}\}$.
Denote by $\Mb_1:=\sum_{i\in \Omega_1} a_i^S {a_i^S}^*$.
Consider the matrices $\Zb_j:=\Mb_1 + \tilde a_j^S  \tilde a_j^{S*}-I_s$. According to Eq.~\eqref{eq::defAtilde}, we get by construction: 
\begin{align*}
\tilde \Ab^{S*}\tilde \Ab^S - I_s= \frac{1}{m_2}\sum_{j\in \Omega_2} \Zb_j.
\end{align*}
Since $I_s=\sum_{i=1}^n a_i^S {a_i^S}^*$, we notice that $\forall i\in\{1,\hdots, m_2\}$ (i) $\Exp (\Zb_{X_i})=0$, (ii) $\Exp(\tilde a_{X_i}^S \tilde a_{X_i}^{S*})=I_s-\Mb_1$. Moreover, we have (iii) $0 \preceq I_s-\Mb_1 \preceq I_s$ and (iv) $0 \preceq \Mb_1 \preceq I_s $. 

Using the identity $({\tilde a_j}^S {\tilde a_j}^{S*})^2=\|\tilde a_j^S\|^2{\tilde a_j}^S {\tilde a_j}^{S*}$ and the fact that $\|\tilde a_i^S\| \leqslant \sqrt{s} \|\tilde a_i^S\|_\infty $, we get $\Exp((\tilde a_{X_i}^S \tilde a_{X_i}^{S*})^2)\preceq K_2^2 s (I_s-\Mb_1)$ using (ii). 
We can then proceed as follows using points (iii) to (iv):
\begin{eqnarray*}
\Exp(\Zb_{X_i}^2)&=& \Mb_1^2 -2\Mb_1+I_s + \Exp((\tilde a_{X_i}^S \tilde a_{X_i}^{S*})^2)+ 2 \Mb_1 \Exp(\tilde a_{X_i}^S \tilde a_{X_i}^{S*})- 2 \Exp(\tilde a_{X_i}^S \tilde a_{X_i}^{S*})\\
&\leq & \Mb_1^2 -2 \Mb_1+I_s + K_2^2 s (I_s-\Mb_1) + 2\Mb_1 (I_s-\Mb_1) -2 (I_s-\Mb_1) \\
&= & -(I_s-\Mb_1)^2 + K_2^2 s (I_s-\Mb_1) \\
& \preceq &  K_2^2s I_s.
\end{eqnarray*}
Then $\displaystyle \|\sum_{i=1}^{m_2} \Exp(\Zb_{X_i}^2) \| \leqslant  m_2K_2^2s $.\\
By noticing that $ \tilde a_{X_i}^S \tilde a_{X_i}^{S*}- I_s\preceq \Zb_{X_i} \preceq \tilde a_{X_i}^S \tilde a_{X_i}^{S*}$, we obtain $\|\Zb_{X_i}\|\leqslant K_2^2s$.
Finally, by applying Bernstein inequality to the sequence of matrices $\Zb_{X_1} , \hdots \Zb_{X_{m_2}}$, we derive for all $t\geqslant 0$:
\begin{align*}
\Prob \Bigl( \Bigl\|\sum_{j\in \Omega_2} \Zb_j\Bigr\| \geqslant t \Bigr) \leqslant 2s \exp \left(- \frac{ t^2/2}{m_2 K_2^2 s + K_2^2 s t/3} \right) .
\end{align*} 
Plugging $\delta:=t/m_2$, and noticing that $\delta \leqslant 1/2 \Rightarrow 2(1+\delta/3) \leqslant 2(1+\delta/3) \leqslant 7/3$, the announced result is shown. 
\end{proof}

\section*{Appendix 2 - proof of Proposition~\ref{prop:measurements_needed}}
Our approach relies on the following perfect recovery condition introduced in~\cite{Juditsky11}:
\begin{proposition}[\cite{Juditsky11}]
\label{prop:Juditsky11}
If $\Ab_\Omega \in \R^{m\times n}$ satisfies
\begin{equation*}
\gamma(\Ab_\Omega)=\min_{\Yb\in \R^{m\times n}} \|I_n-\Yb^T \Ab_\Omega\|_\infty < \frac{1}{2 s},
\end{equation*}  
all $s$-sparse signals $x \in \R^n$ are recovered exactly by solving the $\ell_1$ minimization problem~\eqref{eq:minl1}.
\end{proposition}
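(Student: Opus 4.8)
The statement is the classical null-space / dual-certificate characterisation of exact $\ell_1$ recovery, so the plan is to prove it by a short contradiction argument organised around the near-optimal matrix $\Yb$. First I would fix an $s$-sparse target $x$ with support $S$, $|S|=s$, let $z$ be any minimiser of the problem~\eqref{eq:minl1}, and set $h=z-x$. Since both $x$ and $z$ are feasible, $\Ab_\Omega h=0$, and the whole argument then reduces to showing $h=0$, which simultaneously gives recovery and uniqueness.

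The engine of the proof is the observation that the norm $\|\cdot\|_\infty$ in the definition of $\gamma$ is the entrywise maximum absolute value of a matrix. Picking $\Yb$ with $\|I_n-\Yb^T\Ab_\Omega\|_\infty=:\gamma<\frac{1}{2s}$ (such a $\Yb$ exists because $\gamma$ is the value of the minimisation), I would write $h = \Yb^T\Ab_\Omega h + (I_n-\Yb^T\Ab_\Omega)h = (I_n-\Yb^T\Ab_\Omega)h$, using $\Ab_\Omega h=0$. Reading this identity one coordinate at a time and bounding each entry of $I_n-\Yb^T\Ab_\Omega$ by $\gamma$ yields the key inequality $\|h\|_\infty\leqslant\gamma\|h\|_1$; this is the only place where the measurement matrix enters.

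Next I would extract the cone condition from optimality. Because $z=x+h$ is feasible with $\|z\|_1\leqslant\|x\|_1$, splitting the $\ell_1$ norm over $S$ and $S^c$ and applying the reverse triangle inequality on the support gives $\|h_{S^c}\|_1\leqslant\|h_S\|_1$. Consequently $\|h\|_1=\|h_S\|_1+\|h_{S^c}\|_1\leqslant 2\|h_S\|_1\leqslant 2s\|h\|_\infty$, the last step using $|S|=s$. Chaining this with $\|h\|_\infty\leqslant\gamma\|h\|_1$ produces $\|h\|_1\leqslant 2s\gamma\|h\|_1$; since $2s\gamma<1$, this forces $\|h\|_1=0$, hence $z=x$.

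The triangle-inequality manipulations are routine; the only genuine care is interpretive, and that is also where the main pitfall lies. One must read $\|\cdot\|_\infty$ as the entrywise max norm and \emph{not} as the induced $\ell_\infty$ operator norm, since the latter would absurdly yield recovery of every vector with no sparsity assumption at all. I would also remark that attainment of the minimum defining $\gamma$ is inessential: the strict inequality $\gamma<\frac{1}{2s}$ already supplies some $\Yb$ with $\|I_n-\Yb^T\Ab_\Omega\|_\infty<\frac{1}{2s}$, which is all the argument consumes.
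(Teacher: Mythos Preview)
Your argument is correct and is precisely the standard proof of this recovery condition. Note, however, that the paper does not supply its own proof of this proposition: it is quoted as a known result from~\cite{Juditsky11} and used as a black box in the proof of Proposition~\ref{prop:measurements_needed}. The paper does confirm your interpretation of the norm, stating explicitly that ``$\|A\|_\infty$ [is] the maximal modulus of all the entries of $A$'', so your warning about not confusing it with the induced $\ell_\infty$ operator norm is apt and consistent with the paper's conventions. There is nothing to compare beyond this; your self-contained derivation fills in exactly what the citation stands for.
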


\ADDED{We noted $\|A\|_\infty$ the maximal modulus of all the entries of $A$.}
This can be seen as an alternative to the \textit{mutual coherence}~\cite{Donoho06}. We limit our proof to the real case but it could be extended to the complex case using a slightly different proof.

We aim at finding $\Yb \in \R^{m \times n}$, such that $\|I_n-\Yb^T \Ab_\Omega \|_\infty < \frac{1}{2 s}$, for a given positive integer $s$, where $\Ab_\Omega$ is the sensing matrix defined in Proposition~\ref{prop:measurements_needed}. 
Following~\cite{Juditsky11b}, we set $\Thetab_i=\frac{a_i a_i^T}{p_i}$ and use the decomposition $I_n = \Ab^T\Ab = \sum_{i=1}^{n} p_i \Thetab_i$.
We consider a realization of the Markov chain $X_1 , \dots, X_m$ , with $X_1 \sim p$ and $X_i \sim \Pb_{X_{i-1},:}$ for $i>1$.
Let us denote $\Wb_m = \frac{1}{m} \sum_{l=1}^m \Thetab_{X_l}$. Then $\Wb_m$ may be written as $\Yb^T \Ab_\Omega$.

\begin{lemma}
\label{prop:markov}
$ \forall\  0<t\leqslant 1$,
\begin{equation}
\label{eq:conc_markov}
 \Prob \left(\|I_n - \Wb_m\|_\infty \! \geqslant t \right) \! \leqslant\! n (n+1) e^{\frac{\epsilon(\Pb)}{5}} \! \exp \Bigl(\!- \frac{mt^2 \epsilon(\Pb)}{12 K^2(\Ab,p)}\Bigr).
\end{equation}
\end{lemma}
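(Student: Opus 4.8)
The plan is to control each entry $(I_n-\Wb_m)_{jk}$ separately and then take a union bound over the $n^2$ entries (with a small correction for the diagonal vs. off-diagonal entries, which accounts for the factor $n(n+1)$ rather than $n^2$). For a fixed pair $(j,k)$, write $(I_n-\Wb_m)_{jk}=\frac{1}{m}\sum_{l=1}^m\bigl((I_n)_{jk}-(\Thetab_{X_l})_{jk}\bigr)$. The summands $g(X_l):=(I_n)_{jk}-(\Thetab_{X_l})_{jk}$ form a function of a stationary Markov chain; since $\Exp_{i\sim p}[(\Thetab_i)_{jk}]=(\Ab^T\Ab)_{jk}=(I_n)_{jk}$, this function has zero mean under the stationary distribution $p$. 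Thus I would invoke a Bernstein- or Hoeffding-type concentration inequality for additive functionals of a reversible Markov chain — e.g.\ the bound of Lezaud~\cite{Lezaud98} or Paulin~\cite{paulin2012concentration} — which controls $\Prob(\frac1m\sum_l g(X_l)\geqslant t)$ in terms of the spectral gap $\epsilon(\Pb)$, the sup-norm $\|g\|_\infty$, and the stationary variance $\mathrm{Var}_p(g)$.

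The two ingredients such an inequality needs are a bound on $\|g\|_\infty$ and on the variance. For the sup-norm: $|g(i)|\le |(I_n)_{jk}| + |(\Thetab_i)_{jk}|$, and $|(\Thetab_i)_{jk}| = |a_{ij}a_{ik}|/p_i \le \|a_i\|_\infty^2/p_i \le K(\Ab,p)$; combined with $|(I_n)_{jk}|\le 1\le K(\Ab,p)$ (since $\sum_i\|a_i\|_\infty^2\ge 1$ forces $K(\Ab,p)\ge 1$), one gets $\|g\|_\infty \le 2K(\Ab,p)$, or a comparable constant times $K(\Ab,p)$. For the variance: $\mathrm{Var}_p(g)\le \Exp_p[(\Thetab_i)_{jk}^2] = \sum_i p_i a_{ij}^2 a_{ik}^2/p_i^2 = \sum_i a_{ij}^2 a_{ik}^2/p_i \le K(\Ab,p)\sum_i a_{ij}^2 \le K(\Ab,p)$, using that the columns of $\Ab$ have unit norm (orthogonality). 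So both the "range" and the "variance" are $O(K(\Ab,p))$ and $O(K^2(\Ab,p))$ respectively — this is what produces the $K^2(\Ab,p)$ in the exponent and, after applying the inequality in the regime $t\le 1$ where the sub-Gaussian term dominates, the exponent $-\frac{mt^2\epsilon(\Pb)}{12K^2(\Ab,p)}$. The $e^{\epsilon(\Pb)/5}$ prefactor and the exact constants $12$ and $5$ are whatever the chosen Markov concentration inequality delivers (the burn-in-free assumption $X_1\sim p$ is used here so there is no extra multiplicative start-distribution penalty).

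Finally I would take a union bound: $\Prob(\|I_n-\Wb_m\|_\infty\ge t)\le \sum_{j,k}\Prob(|(I_n-\Wb_m)_{jk}|\ge t)$, noting that $\pm$ two-sided tails and the split between the $n$ diagonal and $n^2-n$ off-diagonal entries give the stated $n(n+1)$ factor, and conclude. The main obstacle is purely bookkeeping: matching the precise constants in~\eqref{eq:conc_markov} to a specific off-the-shelf Markov-chain Bernstein inequality (the literature versions differ in whether they assume reversibility, in how the variance proxy is defined, and in the exact prefactor), and checking that the transition matrix $\Pb$ built by the Metropolis construction is reversible with respect to $p$ so that spectral-gap-based inequalities apply — reversibility does hold by the detailed-balance property of Metropolis, which is the one structural fact that makes the argument go through.
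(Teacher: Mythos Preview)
Your approach is correct and essentially identical to the paper's: entrywise application of Lezaud's Chernoff-type bound (Proposition~\ref{prop:Lezaud}) to $f(i)=(I_n-\Thetab_i)_{jk}/K(\Ab,p)$, using $X_1\sim p$ so that $N_q=1$, followed by a union bound. The only discrepancies are bookkeeping: the paper does not use your sharper variance estimate but simply takes $b^2=1$ from $\|f\|_\infty\le 1$, which together with $1+g(5t)\le 3$ for $t\le 1$ produces the constant $12$; and the factor $n(n+1)$ arises because $I_n-\Wb_m$ is symmetric (hence only $n(n+1)/2$ distinct entries) combined with the factor $2$ for the two-sided tail, not from a diagonal/off-diagonal distinction.
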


Before proving the lemma, let first recall a concentration inequality for finite-state Markov chains~\cite{Lezaud98}.
\begin{proposition}
\label{prop:Lezaud}
Let $(\Pb,p)$ be an irreducible and reversible Markov chain on a finite set G of size $n$ with transition matrix $\Pb$ and stationary distribution $p$. Let $f:G \rightarrow \mathbb{R}$ be such that $\sum_{i=1}^n p_i f_i = 0 , \, \|f\|_\infty \leqslant1$ and $0< \sum_{i=1}^n f_i^2 p_i \leqslant b^2$. Then, for any initial distribution $q$, any positive integer $m$ and all $0< t\leqslant1$,
\begin{equation*}
\Prob \Bigl(\frac{1}{m} \sum_{i=1}^m f(X_i) \geqslant t \Bigr) \leqslant e^{\frac{\epsilon(\Pb)}{5}} N_q \exp \Bigl(- \frac{m t^2 \epsilon(\Pb)}{4 b^2(1+g(5 t/b^2))} \Bigr)
\end{equation*}
where $N_q=(\sum_{i=1}^n (\frac{q_i}{p_i})^2 p_i)^{1/2}$\REPLACED{, $\beta_1(\Pb)$ is the second largest eigenvalue of $\Pb$, and $\epsilon(\Pb)=1-\beta_1(\Pb)$ is the spectral gap of the chain. Finally}{ and} $g$ is given by $g(x)=\frac{1}{2}(\sqrt{1+x} - (1-x/2))$.
\end{proposition}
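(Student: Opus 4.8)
The statement is the Chernoff-type deviation bound for reversible Markov chains due to Lezaud, and I would follow the exponential-moment route adapted to the non-product (Markovian) setting. For a fixed tilt $\theta > 0$, Markov's inequality applied to $e^{\theta \sum_{i=1}^m f(X_i)}$ gives
\[
\Prob\Bigl(\tfrac{1}{m}\sum_{i=1}^m f(X_i) \geqslant t\Bigr) \leqslant e^{-\theta m t}\,\Exp_q\Bigl[e^{\theta \sum_{i=1}^m f(X_i)}\Bigr],
\]
so the whole difficulty is transferred to controlling the exponential moment and then optimising over $\theta$.

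The first substantial step is to turn this exponential moment into a spectral quantity. Writing $E_\theta$ for the diagonal matrix with entries $e^{\theta f_i}$, the Markov property yields the operator product
\[
\Exp_q\Bigl[e^{\theta \sum_{i=1}^m f(X_i)}\Bigr] = q^T E_\theta (\Pb E_\theta)^{m-1}\one.
\]
Because the chain is reversible, detailed balance makes $\Pb$ self-adjoint for the weighted inner product $\langle u,v\rangle_p = \sum_i p_i u_i v_i$, and the conjugate matrix $S_\theta := E_\theta^{1/2}\Pb E_\theta^{1/2}$, which shares its spectrum with $E_\theta \Pb$, is self-adjoint for the same inner product. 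Hence $S_\theta$ has real eigenvalues and operator norm equal to its largest eigenvalue $\lambda(\theta)$. A Cauchy--Schwarz step separating the initial law $q$ from the stationary law $p$ then gives
\[
\Exp_q\Bigl[e^{\theta \sum_{i=1}^m f(X_i)}\Bigr] \leqslant N_q\,\lambda(\theta)^m,
\]
with $N_q = (\sum_i (q_i/p_i)^2 p_i)^{1/2}$ exactly as in the statement, and $N_q = 1$ when $q = p$.

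The heart of the argument is to bound the principal eigenvalue $\lambda(\theta)$ of $S_\theta$. At $\theta = 0$ we have $S_0 = \Pb$, whose top eigenvalue is $1$ with constant eigenvector $\one$ and whose spectral gap is $\epsilon(\Pb) = 1 - \lambda_2(\Pb)$. I would treat $S_\theta$ as an analytic perturbation of $\Pb$ and expand its leading eigenvalue: the centering hypothesis $\sum_i p_i f_i = 0$ forces $\lambda'(0) = 0$, so that $\lambda(\theta) = 1 + O(\theta^2)$, and the quadratic coefficient is the asymptotic variance of $f$ along the chain, which the spectral gap bounds by a multiple of $b^2/\epsilon(\Pb)$. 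Converting this asymptotic statement into an explicit estimate valid for all admissible $\theta$ is where Lezaud's auxiliary function $g$ appears: a resolvent (Neumann-series) bound on the perturbed eigenvalue, using $\|f\|_\infty \leqslant 1$ to control the operator norm of $E_\theta^{1/2}-I$ and $b^2$ to control its action in the gap direction, gives an inequality of the form $\log\lambda(\theta) \leqslant c\,\theta^2 b^2 (1 + g(\cdot))/\epsilon(\Pb)$, together with the correction prefactor $e^{\epsilon(\Pb)/5}$.

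To finish I would insert the eigenvalue bound into the Chernoff estimate and optimise over $\theta$. The minimiser is of order $t/b^2$, and it produces the Gaussian-type exponent $-m t^2 \epsilon(\Pb)/(4 b^2(1 + g(5t/b^2)))$ of the statement, the argument $5t/b^2$ of $g$ being the value of the perturbation parameter at the optimum; the prefactor $N_q\, e^{\epsilon(\Pb)/5}$ is carried through unchanged. The main obstacle is precisely the perturbation bound on $\lambda(\theta)$ above: unlike the i.i.d.\ case one cannot factor the exponential moment into a product, so one must track the full spectral deformation of $S_\theta$ uniformly in $\theta$ while keeping the dependence on the gap $\epsilon(\Pb)$ explicit. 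This delicate step is the technical core of Lezaud's paper, and in practice I would cite it, or re-derive it through Kato's perturbation series combined with the variational characterisation $\lambda(\theta) = \max_u \langle u, S_\theta u\rangle_p / \langle u,u\rangle_p$.
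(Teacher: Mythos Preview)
Your sketch is a faithful outline of Lezaud's original Chernoff-type argument (exponential tilt, operator-product representation, self-adjointness of $S_\theta=E_\theta^{1/2}\Pb E_\theta^{1/2}$ in $\ell^2(p)$, Kato-type perturbation of the top eigenvalue, then optimisation over $\theta$). However, the paper does \emph{not} prove this proposition at all: it is stated there as a quoted result from~\cite{Lezaud98} and used as a black box in the proof of Lemma~\ref{prop:markov}. So there is no ``paper's own proof'' to compare against; the authors simply cite the inequality and apply it.

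If the intent of the exercise is to supply a proof where the paper gives none, your route is the right one and is essentially Lezaud's. The only caveat is that you describe the perturbation step somewhat loosely: the precise constants $4$, $5$ and the exact form of $g(x)=\tfrac{1}{2}(\sqrt{1+x}-(1-x/2))$, as well as the prefactor $e^{\epsilon(\Pb)/5}$, do not fall out of a generic Kato expansion but require the specific resolvent computation carried out in Lezaud's paper. Citing~\cite{Lezaud98} for that step, as you suggest at the end, is the honest thing to do.
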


Now, we can prove Lemma~\ref{prop:markov}
\begin{proof}
By applying Proposition~\ref{prop:Lezaud} to a function $f$ and then to its opposite $-f$, we get:

\begin{equation*}
\label{eq:ineq_abs}
\Prob \Bigl(\Bigl|\frac{1}{m} \sum_{i=1}^m f(X_i)\Bigr| \geqslant t \Bigr) \leqslant2 e^{\frac{\epsilon(\Pb)}{5}} N_q  \exp \Bigl(- \frac{m t^2 \epsilon(\Pb)}{4 b^2(1+g(5 t/b^2))} \Bigr).
\end{equation*}

\noindent Then we set $f(X_i)=(I_n-\Thetab_{X_i})^{(a,b)}/K(\Ab,p)$ as real-valued function. 
Recall that $p$ satisfies $\sum_{i=1}^np_i f(X_i)=0$. 
Since $\|f\|_\infty \leqslant1$, $b=1$ and $t\leqslant 1$, we deduce $1+g(5t)<3$.
Moreover, since the initial distribution is $p$, $q_i=p_i, \forall i$ and thus $N_q=1$. Finally, resorting to a union bound enables us to extend our result for the $(a,b)$th entry to the whole infinite norm of the $n \times n$ matrix $I_n-\Wb_m$~\eqref{eq:conc_markov}.
\hfill\end{proof}

\noindent Finally, set $s\in \N^*$ and $\eta \in (0,1)$. If $m$ satisfies Ineq.~\eqref{eq:measurements}, then 
\begin{align*}
\Prob \left(\|I_n - \Wb_m\|_\infty  \geqslant \frac{1}{2s} \right) <\eta \; .
\end{align*}
In other words, with probability at least $1-\eta$, every $s$-sparse signal can be recovered by $\ell_1$ minimization~\eqref{eq:minl1}.

\begin{remark}
It is straightforward to derive a similar result to Theorem~\ref{thm:rauhut2} and thus to justify that a partial deterministic sampling reduces the total number of measurements required for perfect recovery.
\end{remark}

\section*{Appendix 3 - proof of Remark~\ref{rmk:SpectralGap}}
In this part, we prove that for a random walk with uniform stationary distribution $p$, $\displaystyle \epsilon(\Pb)=O(n^{-\frac{1}{d}})$. We use geometric bounds known as \emph{Cheeger's inequality} in~\cite{diaconis1991geometric} and \emph{Conductance Bounds} in~\cite{Jerrum89,Bremaud99}. Let us recall a useful result concerning finite state space irreducible reversible transition matrices $\Pb$.

The \emph{capacity} of a set $B\subset \{1,\hdots,n \}$ is defined as $ p(B):=\sum_{i \in B}p(i)$ and the \emph{ergodic flow} out of $B$ is defined by $ F(B):=\sum_{i \in B, j\in B^c }p(i)\Pb_{i,j}$. The \emph{conductance} of the pair $(\Pb,p)$ is:

\begin{align*}
\varphi(\Pb):=\inf_B \left( \frac{F(B)}{p(B)};\;\; 0 <|B|<n,\;\; p(B)\leqslant \frac{1}{2}  \right).
\end{align*}

Then the following result holds (see~\cite{Jerrum89} and~\cite[Theorem~4.3]{Bremaud99}):
\begin{proposition}
\begin{align*}
\frac{\varphi(\Pb)^2}{2} \leqslant \epsilon(\Pb) \leqslant 2\varphi(\Pb).
\end{align*}
\end{proposition}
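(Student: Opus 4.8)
This is the classical Cheeger / conductance estimate, and I would derive it from the Dirichlet form $\mathcal{E}(f,f) := \tfrac12 \sum_{i,j} p_i \Pb_{i,j}(f_i-f_j)^2$ of the reversible pair $(\Pb,p)$ together with the variational identity
\[
\epsilon(\Pb) \;=\; \min\Bigl\{ \tfrac{\mathcal{E}(f,f)}{\mathrm{Var}_p(f)} \;:\; \mathrm{Var}_p(f) > 0 \Bigr\}, \qquad \mathrm{Var}_p(f) = \sum_i p_i f_i^2 - \Bigl(\sum_i p_i f_i\Bigr)^2 .
\]
This identity holds because reversibility makes $\Pb$ self-adjoint on $\ell^2(p)$ with $\one$ as a simple (by irreducibility) top eigenvector, so $1-\lambda_2(\Pb)$ is the minimal Rayleigh quotient of $I-\Pb$ over the functions orthogonal to constants, and for reversible $\Pb$ one has $\langle f,(I-\Pb)f\rangle_p = \mathcal{E}(f,f)$.

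For the upper bound $\epsilon(\Pb) \le 2\varphi(\Pb)$, I would simply test the variational identity against $f = \one_B$ for a set $B$ with $0 < p(B) \le \tfrac12$: reversibility gives $\mathcal{E}(\one_B,\one_B) = \sum_{i\in B,\, j\in B^c} p_i\Pb_{i,j} = F(B)$ and $\mathrm{Var}_p(\one_B) = p(B)(1-p(B)) \ge p(B)/2$, hence $\epsilon(\Pb) \le 2F(B)/p(B)$, and taking the infimum over admissible $B$ finishes this direction.

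The lower bound $\epsilon(\Pb) \ge \varphi(\Pb)^2/2$ is the substantive part. Let $g$ be a $\lambda_2(\Pb)$-eigenfunction; since $\lambda_2(\Pb) < 1$ by irreducibility, $g$ is $\ell^2(p)$-orthogonal to constants, hence changes sign, and after replacing $g$ by $-g$ if needed the level set $A := \{i : g_i > 0\}$ has $p(A) \le \tfrac12$. Put $f := \max(g,0)$, supported on $A$. I would then combine two estimates. First, a \emph{Rayleigh bound} $\mathcal{E}(f,f) \le \epsilon(\Pb)\sum_i p_i f_i^2$: on $A$ one has $\bigl((I-\Pb)f\bigr)_i = g_i - \sum_j \Pb_{i,j}\max(g_j,0) \le g_i - \sum_j \Pb_{i,j} g_j = \epsilon(\Pb) g_i = \epsilon(\Pb) f_i$, and since $\mathcal{E}(f,f) = \sum_i p_i f_i \bigl((I-\Pb)f\bigr)_i$ with $f_i \ge 0$, summing this pointwise inequality over $i$ (terms off $A$ vanish because $f_i=0$ there) gives the claim. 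Second, a \emph{co-area bound} $\sum_{i,j} p_i \Pb_{i,j}|f_i^2-f_j^2| \ge 2\varphi(\Pb)\sum_i p_i f_i^2$: write $|f_i^2-f_j^2| = \int_0^\infty \bigl|\one_{\{f^2>u\}}(i) - \one_{\{f^2>u\}}(j)\bigr|\,\mathrm{d}u$, use reversibility to identify the sum over $(i,j)$ at level $u$ with $2F(\{f^2 > u\})$, bound $F(\{f^2>u\}) \ge \varphi(\Pb)\,p(\{f^2>u\})$ (valid because each such level set lies inside $A$, so has capacity $\le \tfrac12$), and integrate back using $\int_0^\infty p(\{f^2>u\})\,\mathrm{d}u = \sum_i p_i f_i^2$. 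Then Cauchy--Schwarz applied to $|f_i^2-f_j^2| = |f_i-f_j|\,|f_i+f_j|$, together with $(f_i+f_j)^2 \le 2f_i^2 + 2f_j^2$, $\sum_j\Pb_{i,j} = 1$ and $\sum_i p_i\Pb_{i,j} = p_j$, yields $\sum_{i,j}p_i\Pb_{i,j}|f_i^2-f_j^2| \le \bigl(2\mathcal{E}(f,f)\bigr)^{1/2}\bigl(4\sum_i p_i f_i^2\bigr)^{1/2}$. Writing $S := \sum_i p_i f_i^2 > 0$ and chaining the co-area and Cauchy--Schwarz inequalities gives $2\varphi(\Pb)S \le 2\bigl(2\mathcal{E}(f,f)\bigr)^{1/2} S^{1/2}$, hence $\varphi(\Pb)^2/2 \le \mathcal{E}(f,f)/S$, and the Rayleigh bound gives $\mathcal{E}(f,f)/S \le \epsilon(\Pb)$, so $\epsilon(\Pb) \ge \varphi(\Pb)^2/2$.

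The step I expect to be the main obstacle is the Rayleigh bound: passing from the genuine $\lambda_2$-eigenfunction $g$ to the one-signed truncation $f = \max(g,0)$ while keeping the Rayleigh quotient dominated by $\epsilon(\Pb)$. The inequality $\max(g_j,0) \ge g_j$ is exactly what forces $\bigl((I-\Pb)f\bigr)_i \le \epsilon(\Pb) f_i$ on $A$, but one must be careful that it is invoked only at coordinates $i \in A$ (where $f_i = g_i > 0$) and that the identity $\mathcal{E}(f,f) = \langle f,(I-\Pb)f\rangle_p$ is used with the right sign, so that multiplying the pointwise bound by $p_i f_i \ge 0$ preserves the direction of the inequality. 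The co-area identity and the constant bookkeeping in the Cauchy--Schwarz step are then routine.
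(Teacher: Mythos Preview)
Your proof is correct and is precisely the classical Cheeger argument for reversible chains. Note, however, that the paper does not actually give its own proof of this proposition: it is stated as a quoted result with the attribution ``(see~\cite{Jerrum89} and~\cite[Theorem~4.3]{Bremaud99})'' and then used as a black box to bound the spectral gap of the grid walk. So there is no ``paper's proof'' to compare against; what you have written is essentially the argument one finds in those cited references (Dirichlet-form variational characterisation of $\epsilon(\Pb)$, the indicator test for the easy upper bound, and for the lower bound the truncation $f=\max(g,0)$ of a $\lambda_2$-eigenfunction combined with the co-area formula and Cauchy--Schwarz). Your bookkeeping of constants and the delicate step --- the pointwise inequality $((I-\Pb)f)_i\le \epsilon(\Pb)f_i$ on $A$, multiplied by $p_if_i\ge 0$ and summed --- are handled correctly.
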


Now, assume that $n^{1/d}\in \mathbb{N}$ is even and construct a finite graph with $n$ nodes representing a Euclidean grid of the unit hypercube of dimension $d$. Assume that the vertices of the graph at one grid point are the $2d$ nearest nodes, with periodic boundary conditions (the graph can be seen as a $d$-dimensional torus). Assume that the transition probability is uniform over the neighbors, thus the stationary distribution is also the uniform one. This graph is depicted in Fig.~\ref{fig:proofSpectralGap}[Left], with $d=2$.

Let $B$ be the halved graph defined by the hyperplane parallel to an axis of the grid and including its center, so that $p(B)=1/2$. An illustration is given in 2D in Fig.~\ref{fig:proofSpectralGap}[Right]. Since we assumed periodic boundary conditions, the number of nodes belonging to $B$ and having a neighbor in $B^c$ is $2n^{(d-1)/d}$. Each of these nodes have $2d$ neighbors, but only one belonging to $B^c$. Since the stationary distribution is equal to $1/n$ on each node, the ergodic flow is $2n^{(d-1)/d}(\frac{1}{n} \frac{1}{2d})$. It follows that $\epsilon(\Pb) \leqslant \frac{4}{d}n^{-\frac{1}{d}}$.

\begin{figure}[h!]
\begin{center}
\includegraphics[height=.35\linewidth]{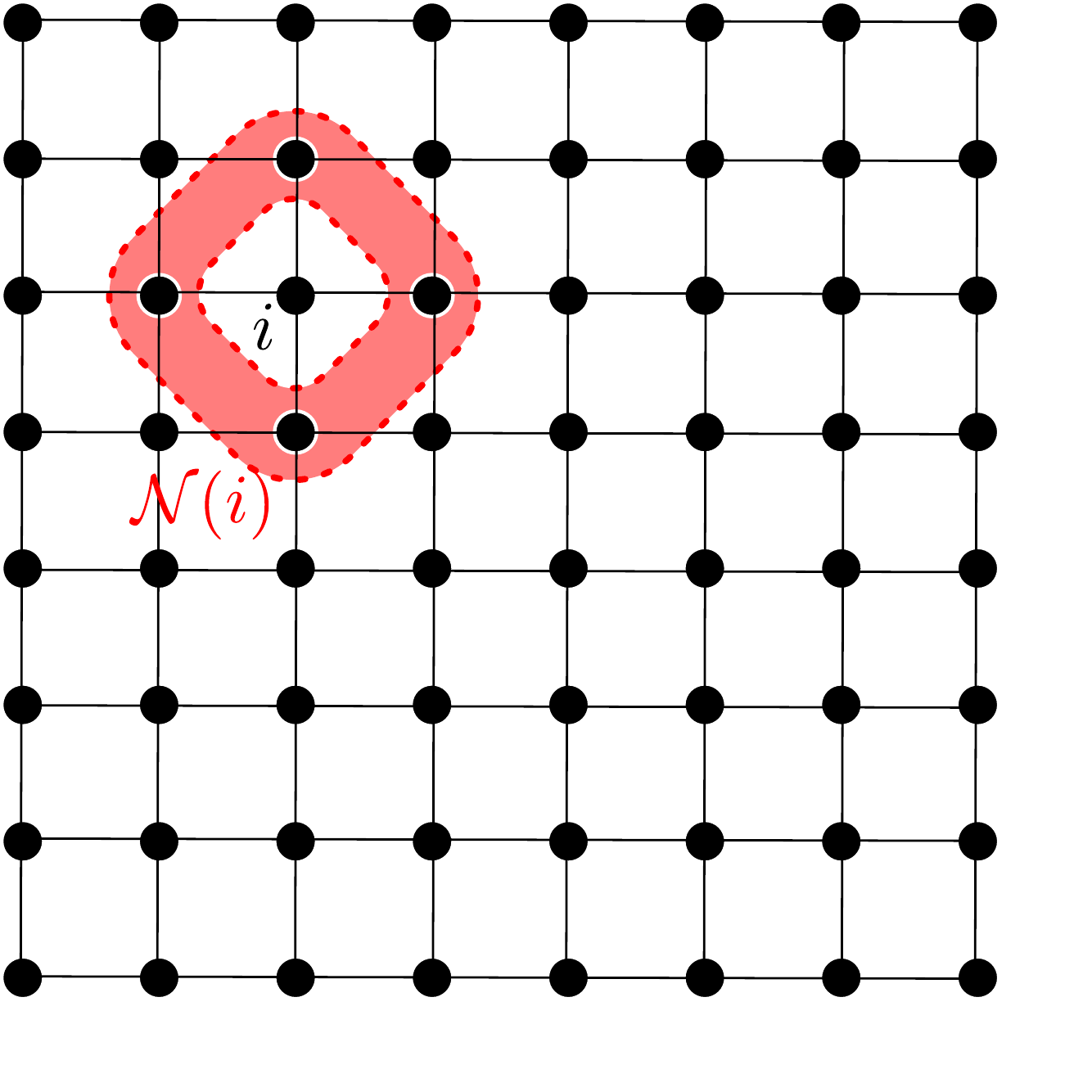}
\hspace{.04\linewidth}
\includegraphics[height=.35\linewidth]{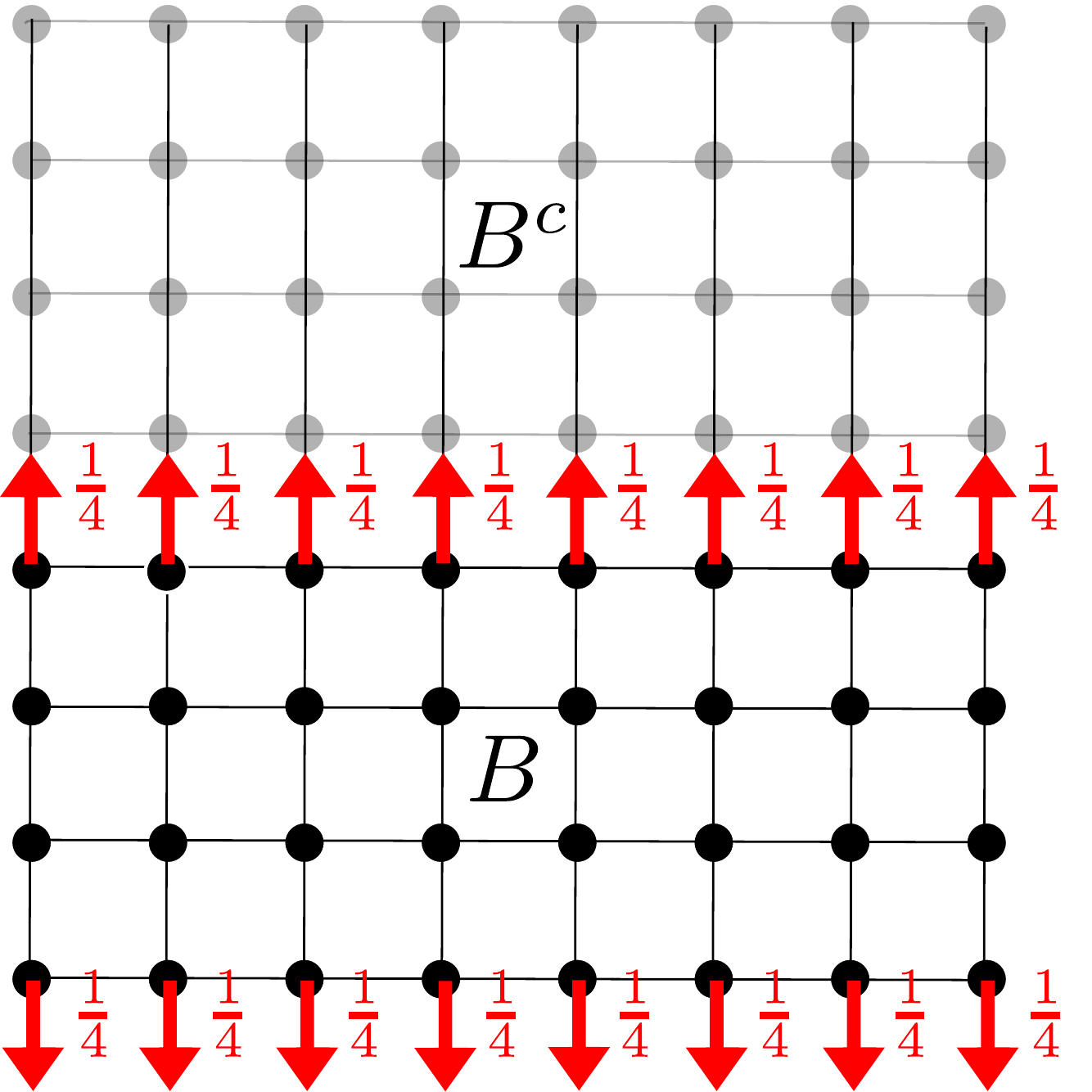}
\caption{\label{fig:proofSpectralGap} Illustration of the proof of Remark~\ref{rmk:SpectralGap} in dimension 2. \textbf{Left:} regular grid with $n=\sqrt{n}\times \sqrt{n}$ nodes. \textbf{Right:} Graph partitioning in $B$ and $B^c$ with $p(B)=1/2$.}
\end{center}
\end{figure}

\section*{Appendix 4 - proof of Theorem~\ref{thm:convergence_proba}}
\ADDED{Let $h\in \N$. The set $\setOmega=[0,1]^d$ will be partitioned in $h^d$ congruent hypercubes $(\omega_i)_{i\in I}$ of edge length $1/h$. }The following proposition is central to obtain the proof:

\begin{proposition}
    \label{cube_by_cube}
    Almost surely, for all $\omega_i$ in $\{\omega_i\}_{1\leq i \leq h^d}$:    
    \begin{align}
        \label{for_small}
        \lim_{N\to \infty} \tilde{P}_N(\omega _i) & = \tilde{p}(\omega _i) \\
                                                   & = \frac{\int_{\omega _i} p^{(d-1)/d}(x) \mathrm{d}x}{\int_{\setOmega} p^{(d-1)/d}(x) \mathrm{d}x} & \mbox{$p^{\otimes \mathbb{N}}$-a.s.}
    \end{align}
   
\end{proposition}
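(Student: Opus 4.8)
The plan is to combine the classical Beardwood--Halton--Hammersley (BHH) asymptotics for the TSP with a localization argument over the partition $\{\omega_i\}_{1\le i\le h^d}$. Recall that BHH states that if $\{x_i\}_{i\in\N^*}$ are i.i.d.\ samples from a density $f$ on $[0,1]^d$, then $T(X_N,\setOmega)/N^{(d-1)/d} \to \beta_d \int_{\setOmega} f^{(d-1)/d}(x)\,\dd x$ almost surely, where $\beta_d$ is a dimensional constant. The key observation is that $\tilde P_N(\omega_i)$ is, up to a multiplicative normalization, the length $T_{|\omega_i}(X_N,\setOmega)$ of the portion of the optimal tour lying in $\omega_i$, divided by the total length $T(X_N,\setOmega)$. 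So it suffices to show that $T_{|\omega_i}(X_N,\setOmega)/N^{(d-1)/d}$ converges almost surely to (a constant times) $\int_{\omega_i} p^{(d-1)/d}$, since dividing by the analogous limit for $T(X_N,\setOmega)$ (the sum over all cells) gives exactly $\tilde p(\omega_i)$ after the $\beta_d$ factors cancel.

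First I would relate $T_{|\omega_i}(X_N,\setOmega)$ to $T(X_N\cap\omega_i,\setOmega) = T(X_N,\omega_i)$, the length of the optimal tour through only the points falling in $\omega_i$. These differ because the global tour enters and exits $\omega_i$ possibly many times rather than visiting its points in one contiguous block. The standard way to control this is a boundary/subadditivity estimate: the global tour restricted to $\omega_i$ can be completed to a tour of $X_N\cap\omega_i$ by adding segments along the cell boundary, and conversely; the discrepancy is $O(h^{-(d-1)} \cdot (\text{number of crossings}))$, and the number of boundary crossings of the optimal global tour is itself controlled (e.g.\ $O((N/h^d)^{(d-1)/d} \cdot h^{d-1})$ type bounds, or simply absorbed because it is lower order after dividing by $N^{(d-1)/d}$ once $h$ is fixed and $N\to\infty$). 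Then, conditionally on the number $N_i$ of points in $\omega_i$, those points are i.i.d.\ with density $p\mathbf{1}_{\omega_i}/p(\omega_i)$ rescaled to a cube of side $1/h$; by the strong law $N_i/N \to p(\omega_i)$ a.s., and applying BHH on the rescaled cell (a cube of volume $h^{-d}$ contributes a scaling factor $h^{-(d-1)}$ to lengths) yields
\begin{align*}
\frac{T(X_N,\omega_i)}{N^{(d-1)/d}} \;\longrightarrow\; \beta_d \int_{\omega_i} p^{(d-1)/d}(x)\,\dd x \qquad \text{a.s.}
\end{align*}
Summing over $i$ and using that the boundary corrections are negligible gives the same limit for $T(X_N,\setOmega)$ with $\omega_i$ replaced by $\setOmega$, and taking the ratio proves the claim.

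The main obstacle I anticipate is making the boundary-crossing / localization estimate rigorous and uniform: one must argue that $|T_{|\omega_i}(X_N,\setOmega) - T(X_N,\omega_i)|$ is $o(N^{(d-1)/d})$ almost surely, simultaneously for all $h^d$ cells, and that the finitely many almost-sure convergences (one per cell, plus the law-of-large-numbers statements for the $N_i$) can be intersected into a single almost-sure event. Since $h$ is fixed here and the number of cells is finite, the intersection is harmless; the real care is in the crossing bound, which is where subadditivity of the TSP functional and the fact that the optimal tour cannot cross a hyperplane too many times (each crossing costs length) do the work. A secondary subtlety is that Proposition~\ref{cube_by_cube} is stated as convergence of $\tilde P_N(\omega_i)$ for the fixed partition, so we do not yet need to let $h\to\infty$ or upgrade to weak convergence — that upgrade to Theorem~\ref{thm:convergence_proba} is handled separately — which keeps this step purely about the fixed finite partition.
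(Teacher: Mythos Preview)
Your overall architecture matches the paper's: apply BHH locally on each $\omega_i$, use $N_i/N\to p(\omega_i)$ a.s., obtain $T(X_N,\omega_i)/N^{(d-1)/d}\to\beta_d\int_{\omega_i}p^{(d-1)/d}$, and take the ratio with the global limit. The divergence is in how you pass from the local TSP lengths $T(X_N,\omega_i)$ to the quantity you actually need, namely $T_{|\omega_i}(X_N,\setOmega)$.

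The paper does not attempt to bound $|T_{|\omega_i}-T(X_N,\omega_i)|$ directly via crossing counts. Instead it introduces the \emph{boundary TSP} $T_B(F,R)$ (the TSP in the quotient metric where $\partial R$ is collapsed to a point) and uses three facts from the Euclidean-functional literature: (i) $T_B$ is superadditive over disjoint regions; (ii) $T_B(\cdot,\omega_i)$ is a termwise lower bound on both $T(\cdot,\omega_i)$ and $T_{|\omega_i}(\cdot,\setOmega)$; (iii) $|T-T_B|=O(N^{(d-2)/(d-1)})$ on each cube. These give the sandwich
\[
\sum_i T_B(X_N,\omega_i)\;\le\;\sum_i T_{|\omega_i}(X_N,\setOmega)=T(X_N,\setOmega)\;\le\;\sum_i T(X_N,\omega_i)+o(N^{(d-1)/d}).
\]
Since $T_B(X_N,\omega_i)$ and $T(X_N,\omega_i)$ have the same normalized limit by (iii), the two outer sums coincide in the limit; combined with the termwise inequality $T_B(\omega_i)\le T_{|\omega_i}$ from (ii), this forces every term $T_{|\omega_i}/N^{(d-1)/d}$ to converge to that common limit.

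Your crossing-count route, as written, has a gap. The bound you sketch, number of crossings $=O((N/h^d)^{(d-1)/d}h^{d-1})=O(N^{(d-1)/d})$, combined with a per-crossing patch of length $O(h^{-1})$ (the cell diameter), yields a discrepancy of order $N^{(d-1)/d}/h$, which is \emph{not} $o(N^{(d-1)/d})$ for fixed $h$. (The factor $h^{-(d-1)}$ you wrote for the per-crossing cost is not correct: each patch is a segment inside the cell, of length $O(h^{-1})$, not an area.) More fundamentally, a termwise two-sided estimate on $T_{|\omega_i}$ is hard to obtain directly because $T_{|\omega_i}$ is determined by the global optimizer and can a priori be much larger than $T(X_N,\omega_i)$ in a single cell. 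The paper's trick is precisely to settle for a termwise \emph{lower} bound (via $T_B$) together with a \emph{sum-level} upper bound (via near-subadditivity of $T$), which jointly pin down each term. If you want to salvage the crossing approach you would need a crossing bound that is genuinely $o(N^{(d-1)/d})$, and no such bound is available off the shelf; the boundary-TSP machinery is the standard device that circumvents this.
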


The strategy consists in proving that $T_{|\omega _i}(X_N, \setOmega )$ tends asymptotically to $T(X_N, \omega _i)$. The estimation of each term can then be obtained by applying the asymptotic result of Beardwood, Halton and Hammersley~\cite{beardwood1959shortest,steele1981subadditive}:\\

\begin{theorem}
    \label{BHH}
    If $R$ is a Lebesgue-measurable set in $\mathbb{R}^d$ such that the boundary $\partial R$ has zero measure, and $\{y_i\}_{i\in \mathbb{N}^* }$, with $Y_N = \left\{ y_i \right\} _{i\leqslant N}$ is a sequence of i.i.d. points from a density $p$ supported on $R$, then, almost surely,
    \begin{align}
        \label{BHHeq}
        \lim_{N \to \infty} \frac{T(Y_N, R) }{N^{(d-1)/d}} & = \beta (d) \int_{R}  p^{(d-1)/d}(x) \mathrm{d}x,
    \end{align}
    where $\beta (d)$ depends on the dimension $d$ only.\\
\end{theorem}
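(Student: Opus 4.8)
The plan is to prove the Beardwood--Halton--Hammersley theorem by first establishing it for the uniform density on the cube and then transferring it to an arbitrary density $p$ via a cell-by-cell approximation that exploits the scaling exponent $(d-1)/d$. Throughout I would use that the TSP length is a \emph{subadditive Euclidean functional}: it is homogeneous, $T(\alpha F, \alpha R) = \alpha\, T(F, R)$ for $\alpha > 0$, translation invariant, and nearly additive under partitions into congruent subcubes. The first step is the uniform case: for $U_1, \dots, U_N$ i.i.d. uniform on $\setOmega = [0,1]^d$, I claim $T(U_1,\dots,U_N)/N^{(d-1)/d} \to \beta(d)$ for some constant $\beta(d)$. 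The upper bound on the mean follows from \emph{geometric subadditivity}: partitioning $\setOmega$ into $m^d$ subcubes of side $1/m$ and concatenating the optimal subtours into a single tour costs at most $\sum_j T(F \cap Q_j, \setOmega) + C_d\, m^{d-1}$, since joining $m^d$ subtours adds only $O(m^{d-1})$ in rescaled length. The matching lower bound requires the companion \emph{boundary functional} $T_B$, in which a tour may enter and leave each subcube through its boundary at no cost; $T_B$ is \emph{superadditive} under the same partition, and one checks $0 \le T - T_B = o(N^{(d-1)/d})$. Sandwiching then shows $\Exp[T(U_1,\dots,U_N)]/N^{(d-1)/d}$ converges to a common constant $\beta(d)$.

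To upgrade convergence of the mean to almost sure convergence I would use a bounded-difference martingale argument. Resampling a single point changes $T$ by at most the length of the two edges it touches, which is typically $O(N^{-1/d})$, so the Azuma--Hoeffding inequality applied to the associated Doob martingale gives a tail bound making $\sum_N \Prob(|T_N - \Exp[T_N]| \ge \varepsilon N^{(d-1)/d})$ summable. Borel--Cantelli then yields the almost sure limit in the uniform case.

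The final step transfers the result to an arbitrary density $p$ supported on $R$. I would partition $R$ into subcubes $Q_j$ of side $1/m$ on which $p$ is nearly constant, say $p \approx p(z_j)$. Conditionally, each $Q_j$ contains about $N_j \approx N \int_{Q_j} p$ points that are approximately uniform, so by the uniform case and homogeneity the TSP inside $Q_j$ has length $\approx \beta(d)\, N_j^{(d-1)/d} (1/m) \approx \beta(d)\, N^{(d-1)/d} p(z_j)^{(d-1)/d} (1/m)^d$. Summing over $j$ produces the Riemann sum $\beta(d)\, N^{(d-1)/d} \sum_j p(z_j)^{(d-1)/d} (1/m)^d \to \beta(d)\, N^{(d-1)/d} \int_R p^{(d-1)/d}$. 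The inter-cell joining is again controlled by subadditivity at cost $O(m^{d-1})$, the cells straddling $\partial R$ contribute negligibly because $\partial R$ has measure zero (and the exponent $(d-1)/d < 1$ damps boundary effects), and the sub-/superadditive sandwich keeps the two sides together. Sending $N \to \infty$ first and then $m \to \infty$ gives the claimed limit.

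The main obstacle is the interchange of the limits $N \to \infty$ and $m \to \infty$ in the last step, together with making rigorous the ``approximately uniform / approximately constant density'' heuristic: one must control, uniformly in $m$, both the fluctuation of the number of points per cell and the error from the variation of $p$ inside each cell, while ensuring the joining cost and the boundary cells stay lower order. The superadditive boundary functional $T_B$ is precisely what forces these two-sided estimates to close up, so constructing it and verifying $T - T_B = o(N^{(d-1)/d})$ is the technical heart of the argument.
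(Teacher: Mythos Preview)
The paper does not prove this theorem at all: it is quoted as the classical Beardwood--Halton--Hammersley result, with a reference to \cite{beardwood1959shortest,steele1981subadditive}, and is then used as a black box in the proof of Proposition~\ref{cube_by_cube}. So there is no ``paper's own proof'' to compare against; your proposal is an attempt to reprove a known theorem that the authors simply import.

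That said, your sketch is the standard subadditive-Euclidean-functional route (Steele, Yukich), and most of it is sound. The one genuine gap is your concentration step. You write that resampling a single point changes $T$ by ``typically $O(N^{-1/d})$'', but Azuma--Hoeffding requires a \emph{deterministic} bound on the martingale differences, and in $[0,1]^d$ the only such bound is $O(1)$ (two edges of length at most $\mathrm{diam}(\setOmega)$). With differences bounded by a constant, Azuma gives
\[
\Prob\bigl(|T_N - \Exp[T_N]| \ge \varepsilon N^{(d-1)/d}\bigr) \le 2\exp\bigl(-c\,\varepsilon^2 N^{(d-2)/d}\bigr),
\]
which is summable for $d\ge 3$ but \emph{not} for $d=2$, where the exponent is $N^{0}=1$. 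So Borel--Cantelli fails in the planar case, which is precisely the case of greatest interest in the paper. To close this you would need either Rhee--Talagrand type isoperimetric concentration, or the original subadditive/monotone-block argument of BHH and Steele that bypasses concentration altogether. Everything else in your outline (boundary functional $T_B$, the $T-T_B=o(N^{(d-1)/d})$ estimate, the Riemann-sum transfer to general $p$) matches the textbook treatment and is fine as a plan.
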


To show Prop.\ref{cube_by_cube}, we need to introduce the boundary TSP. For a set of points $F$ and an area $R$, we denote by $T_B(F, R)$ its length on the set $F \cap R$. 
The boundary TSP is defined as the shortest Hamiltonian tour on $F \cap R$ for the metric obtained from the Euclidean metric by the quotient of the boundary of $R$, that is $d(a,b) = 0$ if $a, b \in \partial R$. Informally, it matches the original TSP while being allowed to travel along the boundary for free. We refer to~\cite{yukich_gutin2002traveling} for a complete description of this concept.\\

We shall use a set of classical results on TSP and boundary TSP, that may be found in the survey books~\cite{yukich_gutin2002traveling} and~\cite{yukich98}. \\

\begin{lemma}
Let $F$ denote a set of $n$ points in $\setOmega $.

\begin{enumerate}
 \item The boundary TSP is superadditive, that is, if $R_1$ and $R_2$ have disjoint interiors.
\begin{align}
    \label{superadditive}
    T_B(F, R_1 \cup R_2) & \geqslant T_B(F, R_1) + T_B(F, R_2).
\end{align}
 \item The boundary TSP is a lower bound on the TSP, both globally and on subsets. If $R_2 \subset R_1$:
\begin{align}
    \label{boundT}
    T(F, R) & \geqslant T_B(F, R)    \\
    \label{boundlocal}    
    T_{|R_2}(F, R_1) & \geqslant T_B(F, R_2)
\end{align}
  \item The boundary TSP approximates well the TSP~\cite[Lemma $3.7$]{yukich98}):

\begin{align}
    \label{bound_approx}
    |T(F,\setOmega) - T_B(F,\setOmega)| =  O(n^{(d-2)/(d-1)} ).
\end{align}
  \item The TSP in $\setOmega$ is well-approximated by the sum of TSPs in a grid of $h^d$ congruent hypercubes~\cite[Eq.~(33)]{yukich_gutin2002traveling}.
\begin{align}
    \label{approx}
    \lvert T(F, \setOmega ) - \sum_{i=1}^{h^d} T(F, \omega _i) \rvert = O(n^{(d-2)/(d-1)}).
\end{align}
\end{enumerate}
\end{lemma}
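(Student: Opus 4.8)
The plan is to treat the four items separately: the two \emph{quantitative} approximation bounds \eqref{bound_approx} and \eqref{approx} are exactly the estimates established in the cited monographs, so I would simply quote them from \cite[Lemma~$3.7$]{yukich98} and \cite[Eq.~(33)]{yukich_gutin2002traveling} respectively. The genuine work is to establish the three \emph{structural} inequalities \eqref{superadditive}, \eqref{boundT} and \eqref{boundlocal} directly from the definition of the boundary functional $T_B$. The single device that makes all three transparent is the following reformulation, which I would state first: because travel between any two points of $\partial R$ is free, a feasible boundary tour on $F\cap R$ amounts to a finite collection of Euclidean arcs covering $F\cap R$, each of whose two endpoints lies on $\partial R$, the free boundary being used to splice these arcs into one closed tour. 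Hence $T_B(F,R)$ is the minimum total Euclidean length of such a boundary-rooted arc system, and every subsequent step becomes a restriction-and-resplice argument built on this.

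For the superadditivity \eqref{superadditive}, I would take an optimal boundary arc system $\mathcal{S}^\ast$ realizing $T_B(F,R_1\cup R_2)$ and intersect it with each region. Writing $\Gamma$ for the shared interface, the key point is that $\Gamma\subseteq\partial R_1$ and $\Gamma\subseteq\partial R_2$ because $R_1,R_2$ have disjoint interiors; consequently every arc of $\mathcal{S}^\ast\cap R_i$ terminates either at an original endpoint, which lies on $\partial(R_1\cup R_2)\cap\bar R_i\subseteq\partial R_i$, or at a crossing of $\Gamma$, so in both cases its endpoints lie on $\partial R_i$. Thus $\mathcal{S}^\ast\cap R_i$ is itself a feasible boundary arc system for $F\cap R_i$, giving $T_B(F,R_i)\le\mathrm{length}(\mathcal{S}^\ast\cap R_i)$. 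Since Euclidean length is additive when a segment is cut at $\Gamma$, the two restricted lengths sum to $\mathrm{length}(\mathcal{S}^\ast)=T_B(F,R_1\cup R_2)$, and adding the two inequalities yields \eqref{superadditive}.

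The two lower bounds of part (2) follow the same template. For \eqref{boundT}, the optimal Euclidean curve on $F\cap R$ is already admissible for the boundary functional (the boundary connections are merely extra free options that cannot raise the optimum), so $T_B(F,R)\le T(F,R)$. For \eqref{boundlocal} with $R_2\subset R_1$, I would restrict the optimal curve $C(F\cap R_1)$ to $R_2$: its portions inside $R_2$ are arcs whose endpoints sit on $\partial R_2$ and which together visit every point of $F\cap R_2$, so splicing them through the free boundary of $R_2$ produces a feasible boundary arc system of length exactly $T_{|R_2}(F,R_1)$; minimality then gives $T_B(F,R_2)\le T_{|R_2}(F,R_1)$.

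I expect the main obstacle to be one of rigor rather than ideas: making precise that ``restrict the optimal curve to a region and resplice along the boundary'' always yields an \emph{admissible} boundary structure. Concretely one must verify that every arc endpoint lands on the relevant boundary, handling the measure-zero cases of points of $F$ lying exactly on $\Gamma$ or $\partial R_2$ by a fixed tie-breaking convention, that no point of $F$ is dropped in the restriction, and that free splicing is legitimate for the convex cells used here. The remaining delicacy is keeping the Hamiltonian-\emph{path} functional $T$ and the boundary Hamiltonian-\emph{tour} functional $T_B$ consistent throughout; I would dispose of this once and for all by noting that the two notions differ by at most the diameter of the ambient region, an $O(1)$ term on $\setOmega$ and $O(1/h)$ on each $\omega_i$, hence negligible against the $N^{(d-1)/d}$ growth that governs \eqref{bound_approx}, \eqref{approx} and the eventual limit.
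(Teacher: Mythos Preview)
Your proposal is correct, and in fact does more than the paper itself. The paper does not prove this lemma: it simply introduces it as ``a set of classical results on TSP and boundary TSP, that may be found in the survey books~\cite{yukich_gutin2002traveling} and~\cite{yukich98}'', with items~3 and~4 carrying the specific citations you reproduce. Items~1 and~2 are stated without proof or pinpoint reference.

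Your restrict-and-resplice arguments for \eqref{superadditive}, \eqref{boundT} and \eqref{boundlocal} are the standard ones and are essentially right. One small caveat on the reformulation: it is not quite true that every feasible boundary tour decomposes into arcs with both endpoints on $\partial R$, since a closed Euclidean loop that never touches $\partial R$ is also admissible and may be optimal. This does not damage your proofs, however: when such a loop is intersected with a subregion $R_i$ (or with $R_2\subset R_1$), either it stays entirely inside and is itself a feasible boundary tour there, or it is cut at the interface and yields arcs rooted on $\partial R_i$ exactly as you describe. Your closing remark about the $O(1)$ path-versus-tour discrepancy is also apt and handles the only remaining bookkeeping issue.
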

We now have all the ingredients to prove the main results.\\

\begin{proof}[Proof of Proposition \ref{cube_by_cube}]

\begin{align*}
    \sum_{i\in I} T_B(X_N,\omega_i) & \stackrel{\eqref{superadditive}}{\leqslant} T_B(X_N,\setOmega)  \\
  & \stackrel{\eqref{boundT}}{\leqslant} T(X_N,\setOmega) = \sum_{i=1}^{h^d} T|_{\omega_i}(X_N,\setOmega) \\
  & \stackrel{\eqref{approx}}{\leqslant} \sum_{i=1}^{h^d} T(X_N,\omega_i) + O(N^{(d-1)/(d-2)})
\end{align*}
Let $N_i$ be the number of points of $X_N$ in $\omega _i$. 

Since $N_i \leqslant N$, we may use the bound \eqref{bound_approx} to get: 
\begin{equation}
\label{eq:9}
 \lim_{N\rightarrow \infty}\frac{T(X_N,\omega_i)}{N^{(d-1)/d}} =  \lim_{N\rightarrow \infty}\frac{T_B(X_N,\omega_i)}{N^{(d-1)/d}}.
\end{equation}
Using the fact that there are only finitely many $\omega _i$, the following equalities hold almost surely:
\begin{align*}
\lim_{N\rightarrow \infty}  \frac{\sum_{i=1}^{h^d}T_B(X_N,\omega_i)}{N^{(d-1)/d}} 
     & = \lim_{N\rightarrow \infty} \frac{\sum_{i=1}^{h^d}T(X_N,\omega_i)}{N^{(d-1)/d}} \\
     & \stackrel{\eqref{approx}}{=} \lim_{N\rightarrow \infty}  \frac{\sum_{i=1}^{h^d}T_{|\omega_i}(X_N,\setOmega )}{N^{(d-1)/d}}.
\end{align*}

Since the boundary TSP is a lower bound~(cf. Eqs.~\eqref{boundlocal}-\eqref{boundT}) to both local and global TSPs, the above equality ensures that:
\begin{align}
    \label{allequal}
    \lim_{N\rightarrow \infty}  \frac{T_B(X_N,\omega_i)}{N^{(d-1)/d}} & = \lim_{N\rightarrow \infty} \frac {T(X_N,\omega_i )}{N^{(d-1)/d}}\\
     & = \lim_{N\rightarrow \infty} \frac {T_{|\omega_i}(X_N,\setOmega )}{N^{(d-1)/d}}   & \mbox{$p^{\otimes \mathbb{N} }$-a.s, $\forall i$}.\nonumber
\end{align}
Finally, by the law of large numbers, almost surely $N_i / N \to p(\omega _i)=\int_{\omega_i} p(x)dx$. 
The law of any point $x_j$ conditioned on being in $\omega _i$ has density $p / p(\omega_i)$. By applying Theorem \ref{BHH} to the hypercubes $\omega _i$ and $\setOmega$ we thus get:
\begin{align*}
    \lim_{N\rightarrow +\infty} \frac{T(X_N,\omega_i)}{N^{(d-1)/d}} & = \beta(d) \int_{\omega_i} p(x)^{(d-1)/d}dx & \mbox{$p^{\otimes \mathbb{N} }$-a.s, $\forall i$}.
\end{align*}
and 
\begin{align*}
    \lim_{N\rightarrow +\infty} \frac{T(X_N,\setOmega)}{N^{(d-1)/d}} & = \beta(d) \int_{\setOmega} p(x)^{(d-1)/d}dx & \mbox{$p^{\otimes \mathbb{N} }$-a.s, $\forall i$}.
\end{align*}
Combining this result with Eqs.~\eqref{allequal} and \eqref{eq:defalternative} yields Proposition~\ref{cube_by_cube}.
\end{proof}

\begin{proof}[Proof of Theorem \ref{thm:convergence_proba}]
    Let $\varepsilon > 0$ and $h$ be an integer such that $\sqrt{d} h^{-d} < \varepsilon$. Then any two points in $\omega _i$ are at distance less than $\varepsilon $.
    
    Using Theorem \ref{cube_by_cube} and the fact that there is a finite number of $\omega _i$, almost surely, we get:

        $\lim_{N\rightarrow +\infty} \sum_{i=1}^{h^d} \left| \tilde{P}_N(\omega_i) - \tilde p(\omega _i) \right| = 0$. 
 Hence, for any $N$ large enough, there is a coupling $K$ of $\tilde{P}_N$ and $\tilde p$ such that both corresponding random variables are in the same $\omega _i$ with probability $1- \varepsilon $. 
Let $A\subseteq \setOmega$ be a Borelian. The coupling satisfies $\tilde{P}_N(A) = K(A \otimes \setOmega)$ and $\tilde p(A) = K(\setOmega \otimes A)$. Define the $\varepsilon$-neighborhood by $A^\varepsilon=\{X\in \setOmega \, | \, \exists Y \in A, \  \|X-Y\|<\varepsilon \}$. Then, we have:
$\tilde{P}_N(A) =K(A \otimes \setOmega)=K(\{A \otimes \setOmega\} \cap \{ |X - Y| < \varepsilon\})+ K(\{A \otimes \setOmega\} \cap \{ |X - Y| \geqslant \varepsilon \})$. It follows that:
\begin{align*}
       \tilde{P}_N(A)&\leqslant  K({A \otimes A^{\epsilon}}) + K(|X - Y| \geqslant \varepsilon) \\
       &\leqslant  K(\setOmega \otimes A^{\varepsilon})  + \varepsilon 
=    \tilde p(A^{\varepsilon}) + \varepsilon.
\end{align*}

This exactly matches the definition of convergence in the Prokhorov metric, which implies convergence in distribution.
\end{proof}

\section*{Acknowledgments}

The authors wish to thanks Yves Wiaux, Fabrice Gamboa, J\'er\'emie Bigot, Laurent Miclo, Alexandre Vignaud and Claire Boyer for fruitful discussions and feedback. 
This research was supported by the Labex CIMI through a 3 months invitation of Philippe Ciuciu.
This work was partially supported by ANR SPH-IM-3D (ANR-12-BSV5-0008), by the FMJH Program Gaspard Monge in optimization and operation research (MAORI project),
and by the support to this program from EDF.
 
\bibliographystyle{plain}

\end{document}